\newenvironment{myitemize}{\begin{itemize}[itemsep=-0.05cm, leftmargin=*, topsep=0.1cm]}{\end{itemize}}
\newenvironment{myenumerate}{\begin{enumerate}[itemsep=-0.05cm, leftmargin=*, topsep=0.1cm, label={(\arabic*)}]}{\end{enumerate}}
\def\slasha#1{\setbox0=\hbox{$#1$}#1\hskip-\wd0\hbox to\wd0{\hss\sl/\/\hss}}
\def\periodb#1{\setbox0=\hbox{$#1$}#1\hskip-\wd0\hbox to\wd0{-}}
\newcommand{\unit}{\mathbbm{1}}   			
\newcommand{\ii}{\mathrm{i}}   			
\newcommand{\CA}{\mathcal{A}}    			
\newcommand{\CCA}{\mathscr{A}}
\newcommand{\CB}{\mathcal{B}}
\newcommand{\CC}{\mathcal{C}}
\newcommand{\CCC}{\mathscr{C}}
\newcommand{\CF}{\mathcal{F}}
\newcommand{\CCF}{\mathscr{F}}
\newcommand{\CL}{\mathcal{L}}
\newcommand{\CCM}{\mathscr{M}}
\newcommand{\CP}{\mathcal{P}}
\newcommand{\CR}{\mathcal{R}}
\newcommand{\CT}{\mathcal{T}}
\newcommand{\CV}{\mathcal{V}}
\newcommand{\CE}{\mathcal{E}}
\newcommand{\frg}{\mathfrak{g}}				
\newcommand{\frF}{\mathfrak{F}}
\newcommand{\fru}{\mathfrak{u}}
\newcommand{\frv}{\mathfrak{v}}
\newcommand{\frk}{\mathfrak{k}}	
\newcommand{\frt}{\mathfrak{t}}	
\newcommand{\frsu}{\mathfrak{su}}
\def\swone{{\textrm{\tiny (1)}}}
\def\hodge{{\textrm{\tiny H}}}
\def\dual{{\textrm{\tiny $\vee$}}}
\def\lact{{\textrm{\tiny L}}}
\def\ract{{\textrm{\tiny R}}}
\def\BV{{\textrm{\tiny BV}}}
\def\MC{{\textrm{\tiny MC}}}
\def\ce{{\textrm{\tiny CE}}}
\def\br{{\textrm{\tiny $\RR$}}}
\def\tCF{{\textrm{\tiny $\CF$}}}
\def\tCR{{\textrm{\tiny $\RR$}}}
\def\tW{{\textrm{\tiny $W$}}}
\def\tv{{\textrm{\tiny $V$}}}
\def\tv1{{\textrm{\tiny $V[1]$}}}
\def\tA{A}
\def\tsA{{\textrm{\tiny$\sA$}}}
\newcommand{\Sym}{\mathrm{Sym}}
\newcommand{\starcom}{\stackrel{\scriptstyle\star}{\scriptstyle,}}
\newcommand{\FR}{\mathbbm{R}}     			
\newcommand{\FC}{\mathbbm{C}}     			
\newcommand{\RZ}{\mathbbm{Z}}     			
\newcommand{\FS}{\mathbbm{S}}     			
\newcommand{\dd}{\mathrm{d}}     			
\newcommand{\sB}{\mathsf{B}}
\newcommand{\LL}{\mathrm{L}}     			
\newcommand{\aso}{\mathfrak{so}}
\newcommand{\sU}{\mathsf{U}}     			
\newcommand{\sA}{\mathsf{A}}
\newcommand{\sG}{\mathsf{G}}
\newcommand{\sSU}{\mathsf{SU}}
\newcommand{\sM}{\mathsf{M}}
\newcommand{\sD}{\mathsf{D}}
\newcommand{\sSO}{\mathsf{SO}}
\newcommand{\sSp}{\mathsf{Sp}}
\newcommand{\comment}[1]{}     				
\def\tyng(#1){\hbox{\tiny$\yng(#1)$}}			
\def\tyoung(#1){\hbox{\tiny$\young(#1)$}}			
\newcommand{\beq}{\begin{eqnarray}}
\newcommand{\eeq}{\end{eqnarray}}
\newcommand{\sfd}{{\sf d}}
\newcommand{\sff}{{\sf f}}
\newcommand{\sfGamma}{{\sf \Gamma}}
\newcommand{\Set}{{\sf Set}}
\newcommand{\dAlg}{{\sf dAlg}}
\newcommand{\sfR}{\mathsf{R}}
\newcommand{\sfP}{\mathsf{P}}
\newcommand{\sfM}{\mathsf{M}}
\definecolor{outrageousorange}{rgb}{1.0, 0.43, 0.29}
\newcommand{\vol}{{\rm vol}}
\newcommand{\Tr}{\mathrm{Tr}}
\theoremstyle{plain}
\newtheorem{theorem}[equation]{Theorem}
\newtheorem{corollary}[equation]{Corollary}
\newtheorem{proposition}[equation]{Proposition}
\theoremstyle{definition}
\newtheorem{definition}[equation]{Definition}
\newtheorem{example}[equation]{Example}
\newcommand{\midwedge}{\text{\Large$\wedge$}}
\newcommand{\dsf}{{\mathsf{d}}}
\def\RR{{\mathcal R}}
\def\beq{\begin{equation}}
\def\bee{\begin{equation}}
\def\eeq{\end{equation}}
\def\bea{\begin{eqnarray}}
\def\eea{\end{eqnarray}}
\def\ba{\begin{align}}
\def\ea{\end{align}}
\numberwithin{equation}{section}
\begin{document}

\renewcommand{\thefootnote}{\fnsymbol{footnote}}

\begin{titlepage}
	
	\renewcommand{\thefootnote}{\fnsymbol{footnote}}
	
	\begin{flushright}
		\small
		{\sf EMPG--21--14}
	\end{flushright}
	
	\begin{center}
		
		\vspace{1cm}
		
		\baselineskip=24pt
		
		{\Large\bf Braided Symmetries in Noncommutative Field Theory}
		
		\baselineskip=14pt
		
		\vspace{1cm}
		
{\bf Grigorios Giotopoulos}${}^{\,(a),(b),}$\footnote{Email: \ {\tt
				grigorios.giotopoulos@nyu.edu}} \ \ and \ \  {\bf Richard
			J. Szabo}${}^{\,(b),(c),}$\footnote{Email: \ {\tt R.J.Szabo@hw.ac.uk}}
		\\[6mm]
		
		\noindent{${}^{(a)}$ {\it Mathematics, Division of Science, New York University Abu Dhabi, U.A.E.} }
\\[3mm]		
		
\noindent  {${}^{(b)}$ {\it Department of Mathematics, Heriot-Watt University\\ Colin Maclaurin Building,
			Riccarton, Edinburgh EH14 4AS, U.K.}}\\ and {\it Maxwell Institute for
			Mathematical Sciences, Edinburgh, U.K.} \\[3mm]
\noindent{${}^{(c)}$ {\it Higgs Centre
			for Theoretical Physics, Edinburgh, U.K.}}
		\\[30mm]
		
	\end{center}
	
	\begin{abstract}
		\noindent
We give a pedagogical introduction to $L_\infty$-algebras and their uses in organising the symmetries and dynamics of classical field theories, as well as of the conventional noncommutative gauge theories that arise as low-energy effective field theories in string theory. We review recent developments which formulate field theories with braided gauge symmetries as a new means of overcoming several obstacles in the standard noncommutative theories, such as the restrictions on gauge algebras and matter fields. These theories can be constructed by using techniques from Drinfel'd twist deformation theory, which we review in some detail, and their symmetries and dynamics are controlled by a new homotopy algebraic structure called a `braided $L_\infty$-algebra'. We expand and elaborate on several novel theoretical issues surrounding these constructions, and present three new explicit examples: the standard noncommutative scalar field theory (regarded as a braided field theory), a braided version of $BF$ theory in arbitrary dimensions (regarded as a higher gauge theory), and a new braided version of noncommutative Yang--Mills theory for arbitrary gauge algebras.
	\end{abstract}
	
\vspace{1cm}

\begin{center}
{\sl\small Contribution to the Special Issue of Journal of Physics A on `Noncommutative Geometry in Physics'}
\end{center}
	
\end{titlepage}

{\baselineskip=12pt
	\tableofcontents
}

\setcounter{footnote}{0}
\renewcommand{\thefootnote}{\arabic{footnote}}

\newpage

\section{Introduction and summary}
\label{sec:Intro}

In this paper we review some recent developments which offer a new perspective on symmetries in noncommutative field theory; in this paper, by a `noncommutative field theory' we will mean a field theory that is a \emph{deformation} of a classical field theory (and in almost all instances via a star-product on the relevant algebras of fields). To motivate and introduce these investigations, we begin with a flash review of one of the best understood natural occurrence of noncommutative geometry in a candidate theory of quantum gravity, namely open string theory on spacetimes with non-trivial background NS--NS ($\text{NS}=\text{Neveu--Schwarz}$) fields.

\subsubsection*{D-branes and noncommutative gauge theory}

Consider open string theory on flat spacetime in the background of a constant Kalb--Ramond $B$-field. With suitable boundary conditions the open strings are constrained to move in directions transverse to a submanifold $\FR^{1,p}\subseteq\FR^{1,9}$ which is the worldvolume wrapped by a stack of coincident flat D$p$-branes. It is a classic and remarkable result that the effects of the $B$-field on open string interactions in worldsheet conformal field theory correlation functions on the disk are completely captured by the noncommutative Moyal--Weyl star-product\footnote{Notation: $x=(x^\mu)$ are (local) coordinates on the worldvolume $\FR^{1,p}$, and $\partial_\mu=\frac\partial{\partial x^\mu}$ are vector fields of the corresponding (local) holonomic frame.}
\begin{align}\label{eq:MoyalWeylstarprod}
\begin{split}
f\star g &= \mu\circ \exp\big(\mbox{$\frac{\ii\,\hbar}2$}\,
\theta^{\mu\nu}\, \partial_\mu\otimes\partial_\nu\big)(f\otimes g) \\[4pt]
& = f\cdot g + \sum_{n=1}^\infty\, \Big(\frac{\ii\,\hbar}2\Big)^n \, \frac1{n!} \, \theta^{\mu_1\nu_1}\cdots\theta^{\mu_n\nu_n} \, \partial_{\mu_1}\cdots\partial_{\mu_n}f \cdot \partial_{\nu_1}\cdots\partial_{\nu_n}g \ ,
\end{split}
\end{align}
where $\mu(f\otimes g)=f\cdot g$ is the pointwise product of fields $f$ and $g$, $\hbar$ is treated as a formal deformation parameter\footnote{Recall that there are also convergent integral expressions for the Moyal--Weyl star-product, defined on the space of Schwartz functions on $\FR^{1,p}$, whose asymptotic series expansions in $\hbar$ coincide with \eqref{eq:MoyalWeylstarprod}, see e.g.~\cite{Szabo:2001kg}.} and the constant Poisson bivector $\theta=\frac12\,\theta^{\mu\nu}\,\partial_\mu\wedge\partial_\nu$ is related to the pullback of the $2$-form $B$ on $\FR^{1,9}$ to $\FR^{1,p}$~\cite{Douglas:1997fm, Ardalan:1998ce, Chu:1998qz, Schomerus:1999ug, Seiberg:1999vs}. In a suitable low-energy scaling limit, the boundary operator product expansion of tachyon vertex operators becomes exactly the Moyal--Weyl star-product.

The massless bosonic modes of the open strings consist of gauge fields $A=A_\mu\,\dd x^\mu$ on the worldvolume and scalar fields $X^{a}$ describing the transverse fluctuations of the D-branes. Their low-energy dynamics is described by a \emph{noncommutative gauge theory} on $\FR^{1,p}$ with gauge group $\sU(n)$, where $n$ is the number of coincident D$p$-branes. This noncommutative field theory is defined by deformation of the standard (dimensionally reduced) Yang--Mills action on $\FR^{1,p}\subseteq\FR^{1,9}$ via replacements of ordinary pointwise matrix products of fields by the Moyal--Weyl star-product \eqref{eq:MoyalWeylstarprod} composed with matrix multiplication. See~\cite{Douglas:2001ba,Szabo:2001kg} for extensive reviews of this subject.

This story extends to curved D-branes in flat space, where the Poisson structure $\theta$ is no longer constant and the Moyal--Weyl star-product \eqref{eq:MoyalWeylstarprod} is replaced by the more general Kontsevich star-product~\cite{Kontsevich:1997vb}, and also to curved backgrounds with a non-zero pullback of the NS--NS $3$-form flux $H=\dd B$ to the D-brane worldvolume~\cite{Cornalba:2001sm,Herbst:2001ai}. A standard class of examples comes from D-branes in Wess--Zumino--Witten (WZW) models~\cite{Alekseev:1999bs,Alekseev:2000fd}, and more recently as holographic dual gauge theories to backgrounds obtained via integrable deformations of string sigma-models whose target space is $\mathbbm{A}\mathbbm{d}\mathbbm{S}_5{\times}\FS^5$~\cite{vanTongeren:2015uha,Araujo:2017jkb}.

Despite their intensive investigation for over 20 years by various communities in physics and mathematics, there are still many open general problems surrounding the construction and description of these gauge theories on curved D-branes. Foremost amongst these problems is the structure of the algebra of star-gauge transformations. Consider for definiteness $\sU(1)$ noncommutative gauge theory on a manifold $M$ with the star-commutator
\begin{align}
[\lambda\stackrel{\scriptstyle\star}{\scriptstyle,} A] := \lambda\star A-A\star\lambda
\end{align}
among a gauge parameter $\lambda\in\Omega^0(M)=C^\infty(M,\FR)$ and a gauge field $A\in\Omega^1(M)$. Then the naive definition of a star-gauge transformation
\begin{align}
\delta_\lambda^\star A = \dd\lambda+[\lambda\stackrel{\scriptstyle\star}{\scriptstyle,} A] 
\end{align}
will generally obstruct the closure of the gauge algebra:
\begin{align}
\big[\delta^\star_{\lambda_1},\delta^\star_{\lambda_2}\big]^{\phantom{\dag}}_\circ A := \big(\delta^\star_{\lambda_1}\circ\delta^\star_{\lambda_2}-\delta^\star_{\lambda_2} \circ\delta^\star_{\lambda_1}\big)A \ \neq \ \delta^\star_{[\lambda_1\stackrel{\scriptstyle\star}{\scriptstyle,} \lambda_2]}A \ .
\end{align}

This problem is related to the failure of the Leibniz rule of the de~Rham differential $\dd$ for a general non-constant Poisson bivector $\theta$. For instance
\begin{align}
\dd(f\star g)\  \neq\  \dd f\star g + f \star \dd g
\end{align}
for functions $f,g\in C^\infty(M)$. This means that in general there is no well-defined noncommutative differential calculus on $M$ from which to obtain the standard ingredients of a field theory.

Both of these issues are related to the problem of firstly extending the corresponding Poisson algebra on $C^\infty(M,\FR)$ to the exterior algebra $\Omega^\bullet(M)$ of differential forms on $M$ in such a way as to extend the de~Rham complex $\big(\Omega^\bullet(M),\dd\big)$ to a differential graded Poisson algebra, and subsequently quantizing this to a noncommutative differential graded algebra. There is a fairly extensive literature of investigations into this problem through different approaches during the last two decades, most of which involve a choice of an auxiliary (symplectic) connection and work only at the semi-classical (Poisson) level without achieving the second step of quantization; see e.g.~\cite{Ho:2001fi}. Here we focus instead on recent approaches to this problem which are tied to more modern perspectives on classical and quantum field theory.

\subsubsection*{Homotopy algebras}

Higher structures in string theory and quantum field theory have become more and more prominent in the last few years. In this paper we study occurrences of `higher algebras', and in particular homotopy Lie algebras. Their role in physics makes nice contact with recent mathematical developments in deformation theory and derived geometry.
Homotopy algebraic techniques and concepts have also recently been shown to offer a natural arena and physically motivated tools for formulating the closure of the noncommutative gauge algebra, as well as constructing a differential graded algebra structure on the fields and their dynamics. As these developments are central to the discussion of this paper, let us offer a brief and somewhat sketchy history of the appearance of $L_\infty$-algebras in physics and mathematics, emphasising those occurrences which fit contextually into our treatment. More precise definitions and explanations will be given later on in this article.

The origins of $L_\infty$-algebras in physics can be traced back to early work from the 1980s on higher spin gauge theories~\cite{Berends:1984rq}, where it was observed that the closure of the gauge algebra takes place in an extended space of field dependent gauge parameters:
\begin{align}
\big[\delta_{\lambda_1},\delta_{\lambda_2}\big]^{\phantom{\dag}}_\circ \Phi = \delta_{C(\lambda_1,\lambda_2;\Phi)}\Phi \ ,
\end{align}
where here and below $\Phi$ symbolically denotes the collection of fields of the theory, and $\lambda$ the corresponding gauge parameters. The first full structural occurrence of $L_\infty$-algebras appeared in the early 1990s in the work of Zwiebach on closed string field theory~\cite{Zwiebach:1992ie}, where the generalized\footnote{Here and in the following by `generalized' we mean local symmetries of a field theory which do not necessarily correspond to infinitesimal automorphisms of a principal bundle formulation.} gauge symmetries were shown to involve infinitely-many higher bracket operations $\ell_n$ for $n\geq1$:
\begin{align}
\delta_\lambda\Phi=\ell_1(\lambda)+\sum_{n=1}^\infty\,\frac{(-1)^{n\choose 2}}{n!} \, \ell_{n+1}(\lambda,\Phi^{\otimes n}) \ .
\end{align}
Inspired by these algebraic structures from string field theory, it has been recently emphasised by~\cite{Hohm:2017pnh,BVChristian} that the dynamics of any perturbative\footnote{By `perturbative' here we mean theories with only polynomial interactions among the fields.} classical field theory with generalized gauge symmetries is naturally organised by an $L_\infty$-algebra.\footnote{Our conventions for the brackets $\ell_n$ agree with~\cite{Hohm:2017pnh}, whereas the operations $\mu_n$ of~\cite{BVChristian} are related to ours by a sign: \smash{$\mu_n=(-1)^{n\choose 2} \, \ell_n$}.} This observation is useful because it enables one to develop new techniques for studying perturbative quantum field theories in a unifed setting using the powerful tools of homotopical algebra, see e.g.~\cite{Macrelli:2019afx,Lopez-Arcos:2019hvg,Arvanitakis:2020rrk,Borsten:2021hua} for some recent applications.

The connection between homotopy algebras and field theories arises because $L_\infty$-algebras are the natural algebraic structure underlying the Batalin--Vilkovisky (BV) formalism~\cite{BVChristian}. Mathematically, this is due to the duality between $L_\infty$-algebras and differential graded commutative algebras~\cite{Lada:1992wc}. These are also useful tools in deformation theory. For example, Kontsevich's formality theorem~\cite{Kontsevich:1997vb}, which lies at the heart of deformation quantization of Poisson manifolds, is based on $L_\infty$-quasi-isomorphisms between differential graded Lie algebras, namely the Schouten--Nijenhuis algebra of multivector fields and the Gerstenhaber algebra of differential operators. 

These developments have recently inspired systematic descriptions of noncommutative gauge theories on curved D-branes based on $L_\infty$-algebras~\cite{Blumenhagen:2018kwq,Kupriyanov:2019ezf} (homotopy coherent weakenings of Lie algebras) and also on $P_\infty$-algebras~\cite{Kupriyanov:2021cws} (homotopy coherent weakenings of Poisson algebras). However, while these approaches are superior to earlier ones in that they do not involve auxiliary choices such as a symplectic connection, they have also so far not been understood beyond the semi-classical level. The purpose of this paper is to review more recent developments which use homotopy algebras to go beyond the Poisson approximation, and to use these techniques to present new examples of noncommutative gauge theories  that are based on fully fledged noncommutative differential graded algebras for the first time.

\subsubsection*{Non-geometric backgrounds and noncommutative gravity}

The new homotopy algebraic techniques which we review are in part inspired by much less concrete and understood appearances of noncommutative geometry in \emph{closed} string theory. On general grounds, it is expected that noncommutative theories of gravity should serve as low-energy limits of any theory of quantum gravity, retaining some features of a quantized spacetime. In particular, in certain non-geometric flux backgrounds, it has been conjectured that the low-energy dynamics of closed strings may be described by a theory of noncommutative gravity~\cite{Blumenhagen:2010hj,Lust:2010iy,Mylonas:2012pg} (see~\cite{Szabo:2018hhh} for a review). One of the hopes here is that noncommutative geometry may serve as a more precise description of the notion of `non-geometry'; see e.g.~\cite{Hull:2019iuy} for how this works in the \emph{open} string sector, and~\cite{Aschieri:2020uqp} for a recent rigorous perspective, where the target space geometry probed by D-branes is a noncommutative torus bundle. This is difficult to check explicitly, physically because the low-energy limits are much harder to control as compared to the open string sector, and mathematically because the metric aspects of noncommutative differential geometry are only partially developed so far~\cite{TwistApproach,Aschieri:2005zs,Blumenhagen:2016vpb,NAGravity,Aschieri:2020ifa}. In particular, no general version of the Einstein--Hilbert action functional is currently known.

This has led to attempts to treat noncommutative gravity as a deformation of a `gauge theory', that is, in the first order formalism for general relativity: One uses the Einstein--Cartan principal bundle formulation, with star-gauge symmetry, where the corresponding variational principle is based on the Palatini action functional~\cite{Chamseddine:2000si,Cardella:2002pb,AschCast,Barnes:2016cjm,Ciric:2016isg}. However, diffeomorphisms can never be implemented as star-gauge symmetries, in any formalism, and instead the noncommutative theory of gravity is invariant under a `twisted' action of infinitesimal diffeomorphisms~\cite{TwistApproach,Aschieri:2005zs}. This dichotomy between the local Lorentz and diffeomorphism symmetries presents a serious obstruction to the formulation of noncommutative gravity for general backgrounds in the setting of homotopy algebras, because the twisted diffeomorphism symmetry does not appear to fit naturally into the standard $L_\infty$-algebra picture of star-gauge symmetries. 

These issues prompt another approach. Firstly, we resolve the dichotomy between the different types of symmetries by letting the local Lorentz rotations instead act in a `twisted' manner, like the diffeomorphisms, which replaces the notion of star-gauge symmetry with \emph{braided gauge symmetry}. Secondly, we do not attempt to package this symmetry into a standard $L_\infty$-algebra, which indeed does not seem possible to do in a closed or algorithmic manner, but rather deform the underlying $L_\infty$-algebra itself to make it compatible with the new type of symmetry. This results in a systematic construction of new theories of noncommutative gravity, which were formulated in detail by~\cite{Ciric:2020eab,Ciric:2021rhi}.

\subsubsection*{Summary and outlook}

The procedure discussed above works without any reference to gravity (even though this is what motivated the construction), and leads to both a new symmetry principle for noncommutative field theories from a physics perspective, as well as a new homotopy algebraic structure from a mathematics perspective, that was called a \emph{braided $L_\infty$-algebra} in~\cite{Ciric:2020eab,Ciric:2021rhi}. Noncommutative field theories whose symmetries and dynamics are organised by braided $L_\infty$-algebras are correspondingly called \emph{braided field theories}. The purpose of this contribution is to review the definition of noncommutative field theories with braided gauge symmetries, together with their formulation through a general notion of braided ${L_\infty}$-algebras, and to explain how the mathematical machinery constructs novel examples with exciting new features. We further expand and elucidate various details of the general theory which were not addressed in~\cite{Ciric:2021rhi}. In addition to reviewing the basic example of braided Chern--Simons theory from~\cite{Ciric:2021rhi}, we present three new examples: the standard noncommutative scalar field theory (regarded as a braided field theory), braided $BF$ theory in any dimension (which serves as an elementary example of a higher gauge theory), and a new braided version of noncommutative Yang--Mills theory for an arbitrary gauge algebra. The reader interested in the applications to noncommutative gravity is invited to consult~\cite{Ciric:2021rhi} where a detailed exposition can be found.

We should stress that the notion of `braided gauge symmetry' that we review here is not really new: kinematical aspects of this idea have appeared long before (see e.g.~\cite{BraidedGauge}), while some concepts and techniques are borrowed from twisted noncommutative gravity, for which there is a fairly extensive literature. What is new however is the application of these techniques to ordinary (non-gravitational) gauge theories, the full treatment of braided noncommutative dynamics through the organising principle of a braided $L_\infty$-algebra, and the new examples of noncommutative field theories with braided symmetries that this produces. We further emphasise that the braided noncommutative gauge and gravity theories considered here and in~\cite{Ciric:2021rhi} are \emph{not} the same as the `twisted' noncommutative  gauge and gravity theories considered in the mid-2000s (see e.g.~\cite{TwistApproach,Aschieri:2006ye}). In particular, braided symmetries are based on braided derivations closing a braided Lie algebra, whereas twisted symmetries are based on a classical Lie algebra and a deformed Leibniz rule following from twist deformation of the coproduct of the associated enveloping Hopf algebra.

There are several obvious open questions which we do not address in this contribution, as these are still under development in this very new subject. For instance, what do the corresponding quantum field theories look like? The quantization of our noncommutative field theories are related to Oeckl's symmetric braided quantum field theory~\cite{Oeckl:1999zu,Oeckl:2000eg}, which however does not consider theories with gauge symmetries. Preliminary investigations in this direction within the context of fuzzy field theories are found in~\cite{Nguyen:2021rsa}, which should be related to the lattice regularization of braided quantum field theory discussed in~\cite{DAdda:2009vrc}. A simple example of a continuum field theory with gauge symmetry is under development in~\cite{Serbiainprep}.

Another immediate question concerns the explicit realizations of our braided noncommutative field theories in string theory (or in other theories of quantum gravity). For this, one could consider the Hopf algebraic symmetries of string worldsheet conformal field theory correlation functions discussed by~\cite{Asakawa:2008cc,Asakawa:2008nu}, and then mimick the route used derive the standard noncommutative field theories. This approach should be made feasible by comparing with the Ward--Takahashi identities of braided quantum field theory developed in~\cite{Sasai:2007me}. Such a development then further has the exciting potential of providing the very first precise derivation of a (braided) noncommutative theory of gravity from string theory, bypassing previous no-go results concerning the realisation of twisted diffeomorphisms as symmetries of string theory in the low-energy limit~\cite{Alvarez-Gaume:2006qrw,Blumenhagen:2016vpb}. This will rely on a deeper understanding of the Hopf algebra symmetry of string theory with a $B$-field background, which can be understood by viewing quantisation of the string using Drinfel'd twist deformations~\cite{Asakawa:2008nu}. Braided diffeomorphisms which are consistent with the closed string sector are necessarily different from their actions in the low-energy effective theory on D-branes, as the latter is known to be insufficient for describing symmetries in brane-induced gravity~\cite{Alvarez-Gaume:2006qrw}. The twisted Hopf algebras of~\cite{Asakawa:2008nu} enable extensions to couplings of D-branes to the closed string sector, or even to closed string amplitudes. Braided noncommutative deformations of string theory are also argued by~\cite{vanTongeren:2015uha} to underlie the AdS/CFT dual gauge theories to Yang--Baxter deformations of $\mathbbm{AdS}_5{\times}\FS^5$ string sigma-models. Understanding these approaches better may realise how the noncommutative geometry probed by D-branes carries over naturally to the bulk closed string sector.

\subsubsection*{Outline of the paper}

Throughout this paper we have endeavoured to give a pedagogical and relatively self-contained presentation of both the relevant physical and mathematical concepts. We provide a fairly in-depth review of $L_\infty$-algebras and their uses in formulating classical field theory, and how the formalism immediately adapts to some standard examples of conventional noncommutative gauge theories. We review only the salient mathematical aspects of homotopy algebras and Drinfel'd twist deformation techniques that we need, but formulated in a language that we hope is palatable to both novices and experts alike. We elaborate and expand on various aspects of the notion of braided $L_\infty$-algebra that was introduced by~\cite{Ciric:2021rhi}, as well as of their applications to braided field theories. In particular, we expand the repertoire of examples presented in~\cite{Ciric:2021rhi} beyond the realm of diffeomorphism invariant theories.

We begin in Section~\ref{sec:Linfty} by giving a fairly detailed review of the $L_\infty$-algebra formulation of classical field theories, starting with a pedagogical review of gauge symmetries through the prototypical example of Chern--Simons gauge theory. We then explain the notion of an $L_\infty$-algebra and its connection with the BV formalism, in a way that is immediately adaptable to the braided generalizations that we consider later on in the paper. We further elucidate on the role of higher gauge transformations in field theories with reducible gauge symmetries, and in particular their dual notion of `higher  Noether identities' which are not usually stressed in the literature, but which are of paramount importance in braided field theory, as we discuss. We also discuss the basic example of Yang--Mills theory whose homotopy algebraic structure goes beyond the differential graded Lie algebra organising Chern--Simons theory.

In Section~\ref{sec:NCLinfty} we revisit some standard examples of noncommutative gauge theories which arise as low-energy effective field theories on D-branes in string theory, and how the classical $L_\infty$-algebra formulation easily adapts to them using the examples considered in Section~\ref{sec:Linfty}. These include the conventional noncommutative field theories on flat Minkowski spacetime with Moyal--Weyl star-gauge symmetry, as well as the fuzzy field theories supported by curved D-branes in WZW models, treating the examples of gauge theories on the fuzzy sphere in detail. We further discuss the issues involved in adapting the standard $L_\infty$-algebra picture to generic examples of noncommutative gauge theories on D-branes in curved backgrounds, which serves as a (partial) motivation for moving outside of the framework of classical homotopy algebras.

With the classical framework and its limitations in noncommutative field theory understood, we proceed in Section~\ref{sec:braidedLinfty} to our main topic of braided gauge symmetry and the accompanying notion of a braided $L_\infty$-algebra. We start by reviewing some well-known issues with the closure of standard noncommutative gauge transformations and diffeomorphisms, which motivates the use of Drinfel'd twist deformation techniques that we explain in detail. These methods naturally lead to the notions of braided gauge symmetry and associated noncommutative kinematics. Like we did in the classical case, we illustrate the novel ensuing properties through the simplest example of a braided version of Chern--Simons gauge theory, which  inspires a natural braided deformation of the definition of an $L_\infty$-algebra. We expand on several technical points surrounding the mathematical presentation of braided $L_\infty$-algebras in~\cite{Ciric:2021rhi}, and their applications to the unambiguous construction of noncommutative field theories, such as their reality properties. Following~\cite{Nguyen:2021rsa} we explain the relation to a braided version of the BV formalism, and we further discuss some preliminary ideas on how to make sense of moduli spaces of classical solutions in braided field theories, which seems to resolve some paradoxical observations in a way that is common to the usual thinking in noncommutative geometry, but which requires a great deal of further technical development to make precise.

In the remaining three sections we proceed to analyse new explicit examples in the braided $L_\infty$-algebra formalism. We revisit the standard noncommutative scalar field theory with polynomial interactions in Section~\ref{sec:NCscalar} and show that it is naturally encoded in a braided $L_\infty$-algebra. As a \emph{braided} field theory, this example exhibits a number of interesting features compared to its incarnation in the usual noncommutative field theory framework. In particular, we address the well-known problem of implementation of global symmetries in noncommutative field theory and show that they too have a natural realisation in the language of braided $L_\infty$-algebras, treating the traditional example of Lorentz transformations in detail.

In Section~\ref{sec:braidedBFtheory} we turn to the natural extensions of Chern--Simons gauge theory to arbitrary dimensions provided by $BF$ theories. After reviewing their classical $L_\infty$-algebra formulation, we describe their braided noncommutative deformations following the braided $L_\infty$-algebra prescription. These theories provide simple illustrative examples of gauge symmetries that are reducible, which are naturally described in the $L_\infty$-algebra framework. We discuss the conceptual issues surrounding the braided versions of these higher gauge symmetries, and how they may be resolved along the lines of our sketch of the description of the classical moduli spaces of braided field theories.

Finally, in Section~\ref{sec:braidedNCYM} we proceed to our main example. By twist deforming the $L_\infty$-algebra structure of classical Yang--Mills theory, we obtain a new braided version of noncommutative Yang--Mills theory for arbitrary gauge algebras. Unlike the examples of scalar field theory, or the topological gauge theories which are based on differential graded Lie algebras, the braided noncommutative Yang--Mills theory does not follow the obvious deformation of its classical counterpart, like it does in conventional noncommutative field theories. We derive the modified field equations, Noether identities and action functional in detail, and we further discuss the implementation of global symmetries of the theory in the braided $L_\infty$-algebra framework. We provide all necessary details that we hope will inspire further investigations of this somewhat complicated but interesting gauge theory, and in particular a detailed comparison with its (unbraided) standard noncommutative counterpart, but we leave this for future work. Of particular immediate interest are basic questions concerning the corresponding braided quantum field theory, such as unitarity and the behaviour of perturbation theory.

\subsubsection*{Acknowledgments}

We thank Marija Dimitrijevi\'c \'Ciri\'c, Larisa Jonke, Voja Radovanovi\'c, Alexander Schenkel and Francesco Toppan for helpful discussions and correspondence. {\sc R.J.S.} thanks the editors Paolo Aschieri, Edwin Beggs, Francesco D'Andrea, Emil Prodan and Andrzej Sitarz for the invitation to contribute to this special issue.
 The work of {\sc G.G.} was supported by a
Doctoral Training Grant from the UK Engineering and Physical Sciences
Research Council. The work of {\sc R.J.S.} was supported by
the Consolidated Grant ST/P000363/1 
from the UK Science and Technology Facilities Council.

\section{Gauge symmetry and $L_\infty$-algebras}
\label{sec:Linfty}

In this section we will give a basic and pedagogical introduction to the $L_\infty$-algebra formulation of classical field theories with generalized gauge symmetries, reviewing the necessary mathematical concepts which we will later generalize to the case of noncommutative field theories with braided gauge symmetries.

\subsection{What is a gauge symmetry?}
\label{sec:whatisgauge}

To motivate the uses of $L_\infty$-algebras in classical field theory, as well as to highlight some salient features of the notion of gauge symmetry that we will need later on, we begin with a simple prototypical model which illustrates  most essential features of more general field theories, like the ones discussed in Section~\ref{sec:Intro}. We consider the well-known example of Chern--Simons theory on a closed oriented three-dimensional manifold $M$. Let $\frg$ be a quadratic Lie algebra, that is, $\frg$ is a real vector space equipped with a Lie bracket $[-,-]_\frg:\frg\wedge\frg\to\frg$ and a non-degenerate bilinear pairing $\Tr_\frg:\frg\otimes\frg\to\FR$ which is invariant under the natural adjoint action of $\frg$ on itself:
\begin{align}\label{eq:adginv}
\Tr_\frg\big([X,Z]_\frg\otimes Y + Z\otimes [X,Y]_\frg\big) = 0 \ ,
\end{align}
for all $X,Y,Z\in\frg$. We denote by $\Omega^\bullet(M,\frg):=\Omega^\bullet(M)\otimes\frg$ the graded Lie algebra of $\frg$-valued differential forms on $M$ with Lie bracket given by the tensor product of the Lie bracket of $\frg$ with exterior multiplication; by a slight abuse of notation we also denote this extended Lie bracket by~$[-,-]_\frg$. The de~Rham differential acts non-trivially only on the form part of $\Omega^\bullet(M,\frg)$, and we denote $\dd\otimes\unit$ simply by $\dd$ here and in the following.

The Chern--Simons action functional $S:\Omega^1(M,\frg)\to\FR$ for a gauge field $A\in\Omega^1(M,\frg)$ is defined by
\begin{align}\label{eq:CSaction}
S(A)=\int_M\, \Tr_\frg\Big(\frac12\,A\wedge \dd A +\frac1{3!}\,A\wedge[A,A]_\frg \Big) \ .
\end{align}
Despite its seemingly non-covariant dependence on $A$, this action functional is invariant under the standard infinitesimal gauge transformations
\begin{align}\label{eq:CSgaugevar}
\delta_\lambda A=\dd\lambda+[A,\lambda]_\frg
\end{align}
for $\lambda \in \Omega^{0}(M,\frg)$, that is, $\delta_\lambda S(A)=0$. Using the graded derivation property of the de~Rham differential $\dd$ with respect to the Lie bracket $[-,-]_\frg$ and the Jacobi identity for $[-,-]_\frg$, one easily verifies that the gauge variations close off-shell to the Lie algebra $\Omega^0(M,\frg)$:
\begin{align}
\big[\delta_{\lambda_1},\delta_{\lambda_2}\big]_\circ A := \big(\delta_{\lambda_1}\circ\delta_{\lambda_2} - \delta_{\lambda_2}\circ\delta_{\lambda_1}\big)A = \delta_{[\lambda_1,\lambda_2]_\frg}A \ .
\end{align}

Varying \eqref{eq:CSaction} with respect to arbitrary variations $\delta A$ of the gauge fields, one easily derives the corresponding Euler--Lagrange equations $\frF_{\tA}=0$, which are given in terms of the curvature of the connection $A$ as
\begin{align}\label{eq:CSeom}
\frF_{\tA} = F_{\tA} := \dd A + \tfrac12\,[A,A]_\frg = 0 \ \in \ \Omega^2(M,\frg) \ .
\end{align}
The critical locus of the Chern--Simons functional thus consists of flat $\frg$-connections on the three-manifold $M$.
Unlike the gauge field $A$, the equations of motion transform covariantly under gauge transformations by virtue of the covariance of the curvature $2$-form $F_{\tA}$:
\begin{align}
\delta_\lambda \frF_{\tA} = -[\lambda,\frF_{\tA}]_\frg \ .
\end{align}

This implies that the gauge symmetry acts on the classical solutions of Chern--Simons theory because
\begin{align}
\frF_{A+\delta_\lambda A}=\frF_{\tA}+\delta_\lambda \frF_{\tA}+O(\lambda^2) \ ,
\end{align}
and so the first order gauge variation of a solution $A$ to $\frF_{\tA}=0$ produces another solution $A+\delta_\lambda A$. Thus there are gauge redundancies in the description of the classical degrees of freedom, and the physical states are gauge orbits of classical solutions. The space of physical states of the classical field theory is therefore the moduli space of classical solutions modulo gauge transformations, which in this case is the moduli space $\CCM_{\rm flat}(M,\frg)$ of flat $\frg$-connections on $M$.\footnote{For the time being we do not go into any technical details on how one should make rigorous sense of such a moduli space, which is possible but we are solely interested in a motivational discussion here. We will return to a more precise description of this and other related moduli spaces in Section~\ref{sec:braidedMC}.}

What is important for us is an equivalent but perhaps less widely appreciated perspective on the gauge redundancies. The key observation, that has already been implicitly used, is that conventional gauge transformations $\delta_\lambda A$ are just special instances of general field variations $\delta A$ along specified directions in the tangent space to the space of fields $\Omega^1(M,\frg)$. The gauge variation of the action functional \eqref{eq:CSaction} for an arbitrary gauge parameter $\lambda\in\Omega^0(M,\frg)$ can be computed as
\begin{align}\label{eq:deltaCS}
\delta_\lambda S(A)=\int_M\,\Tr_\frg\big(\delta_\lambda A\wedge \frF_{\tA}\big) = - \int_M\,\Tr_\frg\big(\lambda\wedge \dd^{\tA}\frF_{\tA}\big) \ ,
\end{align}
where
\begin{align}
\dd^{\tA} \frF_{\tA}=\dd \frF_{\tA}+[A,\frF_{\tA}]_\frg \ \in \ \Omega^3(M,\frg)
\end{align}
is the gauge covariant derivative of the Euler--Lagrange equation $\frF_{\tA}=F_{\tA}$. The first equality follows from an integration by parts as well as ${\rm ad}(\frg)$-invariance \eqref{eq:adginv} using the Leibniz rule for the gauge variations $\delta_\lambda$, and is the same calculation one does when deriving the equations of motion \eqref{eq:CSeom}. The second equality similarly follows using \eqref{eq:CSgaugevar}. 

Of course, we know that
\begin{align}
\dd^{\tA}\frF_{\tA}=\dd^{\tA}F_{\tA}=0
\end{align}
by the standard Bianchi identity for the curvature $F_{\tA}$, and this is what guarantees the gauge invariance of the action functional. On the other hand, if we didn't know about the Bianchi identity, we would have deduced it from the requirement of gauge invariance of the Chern--Simons functional. Thus the Bianchi identity $\dd^{\tA}\frF_{\tA}=0$ is \emph{equivalent} to gauge invariance of the action functional $\delta_\lambda S(A)=0$ for all $\lambda\in\Omega^0(M,\frg)$.

What we have just explained is a simple example of the more general statement of
\begin{theorem}[{\bf Noether's second theorem}]
Gauge symmetries of a classical field theory are in a one-to-one correspondence with differential {identities} among its Euler--Lagrange derivatives.
\end{theorem}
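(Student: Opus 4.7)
The plan is to establish both directions of the correspondence using the same integration-by-parts identity that produced the second equality in \eqref{eq:deltaCS}, formalised in an arbitrary jet-bundle setting. Let the collection of fields $\Phi$ be sections of some (graded) bundle over $M$, write the action as $S(\Phi)=\int_M L(\Phi,\partial\Phi,\dots)$, and denote the Euler--Lagrange derivative by $E_\Phi=\frac{\delta S}{\delta\Phi}$. A generalised gauge symmetry is encoded in a $\Phi$-dependent linear differential operator $R_\Phi$ acting on gauge parameters $\lambda$, so that $\delta_\lambda\Phi=R_\Phi(\lambda)$; a differential identity is a $\Phi$-dependent linear differential operator $N_\Phi$ with values in the dual of the space of gauge parameters, satisfying $N_\Phi(E_\Phi)\equiv 0$ off-shell. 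The key technical device is the formal adjoint $R_\Phi^\ast$, defined by integration by parts against compactly supported test sections.

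First I would prove the easy direction (symmetry $\Rightarrow$ identity). Assume $\delta_\lambda S(\Phi)=0$ for every compactly supported $\lambda$. Expanding the variation as in \eqref{eq:deltaCS} gives
\begin{equation}
0 \;=\; \delta_\lambda S(\Phi) \;=\; \int_M \langle E_\Phi,\,R_\Phi(\lambda)\rangle \;=\; \int_M \langle R_\Phi^\ast(E_\Phi),\,\lambda\rangle ,
\end{equation}
from which the fundamental lemma of the calculus of variations yields the off-shell identity $R_\Phi^\ast(E_\Phi)\equiv 0$. This is exactly the pattern realised by the Chern--Simons Bianchi identity $\dd^A\frF_{\tA}=0$ that emerges from the invariance calculation above.

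Next I would run the converse (identity $\Rightarrow$ symmetry). Given an off-shell identity $N_\Phi(E_\Phi)\equiv 0$, define $\delta_\lambda\Phi := N_\Phi^\ast(\lambda)$. Reversing the integration by parts,
\begin{equation}
\delta_\lambda S(\Phi) \;=\; \int_M \langle E_\Phi,\,N_\Phi^\ast(\lambda)\rangle \;=\; \int_M \langle N_\Phi(E_\Phi),\,\lambda\rangle \;=\; 0 ,
\end{equation}
so this is a symmetry of $S$ by construction. Uniqueness of the correspondence then follows from the fact that formal adjunction $R_\Phi\leftrightarrow R_\Phi^\ast$ is an involution on linear differential operators, intertwining the pairing on fields with the pairing on parameters.

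The main obstacle will be making precise what counts as the set of gauge symmetries and what counts as the set of differential identities, so that the map above is genuinely bijective. Two issues have to be addressed. First, \emph{trivial} gauge transformations of the form $\delta_\lambda\Phi = M_{\Phi}^{[\cdot,\cdot]}(E_\Phi,\lambda)$, with $M^{[\cdot,\cdot]}_\Phi$ antisymmetric in its two arguments, always leave $S$ invariant but correspond to the zero identity; one has to quotient by these on both sides to obtain a true bijection. Second, one must decide the space of admissible dependence of $R_\Phi$ and $N_\Phi$ on $\Phi$ (polynomial, local, formal power series, smooth) and work consistently in the variational bicomplex, where the Euler--Lagrange derivatives and their adjoints are globally well-defined objects. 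Once these classes are fixed and the trivial symmetries are quotiented out, the adjunction argument above becomes an equivalence, and the theorem reduces precisely to the integration-by-parts identity illustrated by \eqref{eq:deltaCS}.
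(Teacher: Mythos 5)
Your proposal is correct and follows essentially the same route as the paper, which only remarks that ``the proof is a generalization of the computation in \eqref{eq:deltaCS}'': pairing the gauge variation $\delta_\lambda\Phi=R_\Phi(\lambda)$ with the Euler--Lagrange derivative and integrating by parts to isolate $\lambda$ is precisely that computation, with your $R_\Phi^\ast(E_\Phi)$ playing the role of $\dd^{\tA}\frF_{\tA}$. Your additional care about trivial gauge transformations and the admissible classes of local operators is exactly what is needed to upgrade the paper's sketch to a genuine one-to-one correspondence, and is consistent with the standard variational-bicomplex treatment.
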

The proof is a generalization of the computation in \eqref{eq:deltaCS}. We stress that Noether's second theorem asserts the existence of \emph{identities}, which hold off-shell, i.e. when $\frF_{\tA}\neq0$.\footnote{This is not to be confused with the more commonly known Noether's \emph{first} theorem, which is an on-shell statement asserting the existence of weakly conserved currents (and corresponding weakly conserved charges) for every continuous global symmetry of a classical field theory.} 
These identities are called \emph{Noether identities} and they exhibit the interdependence of the classical degrees of freedom due to the gauge symmetries, offering a different perspective on the gauge redundancies of the field theory. In particular, the converse of Noether's second theorem is a means of recovering gauge symmetries of an action functional $S$ which may be unknown \emph{a priori}. Notice that Chern--Simons theory is a special case where the geometric Bianchi identity coincides with the Noether identity, but in general Noether identities do not coincide with Bianchi identities.

Let us now recap what mathematically went into the description of classical Chern--Simons theory. To describe the classical moduli space we relied on two ingredients:
\begin{myitemize}
\item The graded vector space
\begin{align}
V=\Omega^\bullet(M,\frg) = V^0\oplus V^1\oplus V^2\oplus V^3 \ ,
\end{align}
where $V^p=\Omega^p(M,\frg)$ is the space of $\frg$-valued $p$-forms on the three-manifold $M$. The homogeneous subspace for $p=0$ contains the gauge parameters $\lambda$, for $p=1$ the fields $A$, for $p=2$ the field equations $\frF_{\tA}$, and for $p=3$ the Noether identities $\dd^{\tA}\frF_{\tA}=0$.
\item The `brackets'
\begin{align}\label{eq:CSbrackets}
\ell_1 = \dd \qquad \mbox{and} \qquad \ell_2=-[-,-]_\frg \ .
\end{align}
The linear map $\ell_1:V\to V$ of degree~$1$ is a differential making $V$ into a cochain complex, which in this case is just the de~Rham complex $\big(\Omega^\bullet(M,\frg),\dd\big)$ of $\frg$-valued differential forms on $M$. The bilinear map $\ell_2:V\otimes V\to V$ of degree~$0$ is a graded Lie bracket on $V$, i.e. it is graded antisymmetric and satisfies the graded Jacobi identity, and $\ell_1$ is a derivation of $\ell_2$. With these brackets, the gauge transformations, field equations and Noether identites can be written in the form
\begin{align}
\begin{split}
\delta_\lambda A = \ell_1(\lambda) + \ell_2(\lambda,A) & \qquad , \qquad \frF_{\tA} = \ell_1(A) - \tfrac12\,\ell_2(A,A) \ , \\[4pt] \dd^{\tA} \frF_{\tA} =& \ \ell_1(\frF_{\tA})+\ell_2(\frF_{\tA},A) \ ,
\end{split}
\end{align}
with gauge closure and covariance represented by
\begin{align}
\big[\delta_{\lambda_1},\delta_{\lambda_2}\big]_\circ A = \delta_{-\ell_2(\lambda_1,\lambda_2)}A \qquad \mbox{and} \qquad \delta_\lambda \frF_{\tA} = \ell_2(\lambda,\frF_{\tA}) \ .
\end{align}
\end{myitemize}

In order to derive the dynamics and Noether identities from a variational principle, we further required:
\begin{myitemize}
\item The `cyclic inner product'
\begin{align}
\langle\alpha,\beta\rangle = \int_M\,\Tr_\frg(\alpha\wedge\beta)
\end{align}
on $V$, which pairs differential forms $\alpha$ and $\beta$ on $M$ in complementary degrees; in other words, $\langle-,-\rangle:V\otimes V\to\FR$ is a bilinear map of degree $-3$, and so is non-vanishing only on the homogeneous subspaces $V^0\otimes V^3$ and $V^1\otimes V^2$. With it, the action functional is constructed by pairing a gauge field $A\in V^1$ with $\ell_1(A)\in V^2$ and $\ell_2(A,A)\in V^2$ through
\begin{align}\label{eq:CSactionell}
S(A) = \tfrac12\,\langle A,\ell_1(A)\rangle - \tfrac1{3!}\,\langle A,\ell_2(A,A)\rangle \ .
\end{align}
The cyclic property in this case is just the statement of $\ell_1$-invariance of the pairing, which follows from integration by parts, and of ${\rm ad}(\frg)$-invariance  with respect to the brackets \eqref{eq:CSbrackets}, both of which were used when varying \eqref{eq:CSactionell} to derive the equations of motion as well as the Noether identities.
\end{myitemize}

The first two items simply say that the triple $\big(\Omega^\bullet(M,\frg),\ell_1,\ell_2)$ is a differential graded Lie algebra, while the third item equips this with a cyclic structure. In other words, the symmetries and dynamics of Chern--Simons gauge theory are governed in an algebraic way by the objects
and relations of a \emph{cyclic differential graded Lie algebra}. This is the simplest prototypical example of a more general statement: Any perturbative classical field theory with generalized gauge symmetries is organised by a \emph{cyclic $L_\infty$-algebra}~\cite{Hohm:2017pnh,BVChristian}. In the remainder of this section we will explain this general statement in detail, beginning with an introduction to the mathematical concepts involved.

\subsection{What is an $L_\infty$-algebra?}
\label{sec:whatisLinfty}

We begin with a simple and concise version of the notion of an `$L_\infty$-algebra', following~\cite{Lada:1992wc,LadaMarkl94}, which also nicely ties in with its field theory realisations. Let $V=\bigoplus_{k\in\RZ}\,V^k$ be a $\RZ$-graded real vector space with (reduced) graded symmetric algebra $\Sym(V[1]):=\bigoplus_{n\geq1}\,V[1]^{\odot n} $, where $V[1]$ is the same underlying vector space as $V$ but with the degrees of its homogeneous subspaces shifted by~$1$, i.e.~$V[1]^k:=V^{k+1}$ for all $k\in\RZ$. Then $\Sym(V[1])$ can be regarded as a free cocommutative coalgebra with coproduct $\Delta_{\textrm{\tiny$V$}}:\Sym(V[1])\to\Sym(V[1])\otimes\Sym(V[1])$ defined on homogeneous subspaces by
\begin{align}\label{eq:DeltaV}
\begin{split}
\Delta_{\textrm{\tiny$V$}}\big|_{V[1]^{\odot n}} := \sum_{i=1}^{n-1} \ \sum_{\sigma\in{\rm Sh}(i;n-i)} \, \big(\unit^{\odot i}\otimes\unit^{\odot n-i}\big)\circ\tau^\sigma \ ,
\end{split}
\end{align}
where ${\rm Sh}(i;n-i)\subset S_n$ is the set of $(i;n-i)$-shuffled permutations of degree $n$, which are ordered as $\sigma(1)<\cdots<\sigma(i)$ and $\sigma(i+1)<\cdots<\sigma(n)$, and $\tau^\sigma:V[1]^{\otimes n}\to V[1]^{\otimes n}$ denotes the action of the permutation $\sigma$ via the trivial transposition isomorphism $\tau$ times the Koszul sign
multiplication which interchanges factors in a tensor product of graded vector spaces. 

\begin{definition}\label{def:coalgebra}
An \emph{$L_\infty$-algebra} is a graded vector space $V$ together with a coderivation
\begin{align}
D:\Sym(V[1])\longrightarrow\Sym(V[1])
\end{align}
of degree~$1$ which is a differential, that is, $D^2=0$. 
\end{definition}

The coderivation property
\begin{align}
\Delta_{\textrm{\tiny$V$}}\circ D = (D\otimes\unit + \unit\otimes D)\circ\Delta_{\textrm{\tiny$V$}}
\end{align}
implies that the differential $D:\Sym(V[1])\to \Sym(V[1])$ is completely determined by its `Taylor components' defined by projecting its image to the generating subspace to give a map ${}_{\textrm{\tiny$V[1]$}} D:\Sym(V[1])\to V[1]$ with the expansion
\begin{align}
{}_{\textrm{\tiny$V[1]$}} D=\sum_{n\geq1} \, b_n \ ,
\end{align}
where each $b_n:V[1]^{\odot n}\to V[1]$ is a graded symmetric multilinear map of degree~$1$. We can restore the original grading by introducing the `suspension' isomorphism $s:V\to V[1]$ which is simply the identity map but decreases the grading by~$1$, and defining multilinear graded antisymmetric maps $\ell_n:\midwedge^nV\to V$ of degree~$2-n$ by
\begin{align}
\ell_n := s^{-1}\circ b_n \circ s^{\otimes n} \ ,
\end{align}
for each $n\geq1$. The maps $\ell_n$ are called the \emph{$n$-brackets} of the $L_\infty$-algebra.

Graded antisymmetry is encoded by
\begin{align}
\ell_n(v_1,\dots,v_n) = -(-1)^{|v_i|\,|v_{i+1}|} \ \ell_n(v_1,\dots,v_{i-1},v_{i+1},v_i,v_{i+2},\dots,v_n)
\end{align}
for $i=1,\dots,n-1$, where $|v|$ denotes the degree of a homogenous element $v\in V$. Nilpotency $D^2=0$ translates into an infinite set of identities\footnote{The peculiar extra signs here and in the following are due to shifting degree and using \smash{$(s^{-1})^{\otimes n}\circ s^{\otimes n}=(-1)^{n\choose 2} \, \unit$} (see e.g.~\cite{BVChristian}).}
\begin{align}\label{eq:homotopyJacobi}
\sum_{i=1}^n \, (-1)^{i\,(n-i)} \, \ell_{n-i+1}\circ\big(\ell_{i}\otimes\unit^{\otimes n-i}\big) \ \circ \ \sum_{\sigma\in{\rm Sh}(i;n-i)} \, {\rm sgn}(\sigma) \, \tau^\sigma = 0
\end{align}
for each $n\geq1$. These identities are called the \emph{homotopy Jacobi identities} of the $L_\infty$-algebra.

Let us unravel the first few homotopy Jacobi identities to get an idea of their meaning. For $n=1$ the identity \eqref{eq:homotopyJacobi} reads
\begin{align}\label{eq:homotopyn=1}
(\ell_1)^2 = 0 \ ,
\end{align}
which states that the linear map $\ell_1:V\to V$ is a differential. Thus every $L_\infty$-algebra has an underlying cochain complex $(V,\ell_1)$. The identity \eqref{eq:homotopyJacobi} for $n=2$ states that $\ell_2:V\otimes V\to V$ is a cochain map. This is equivalent to saying that the differential $\ell_1$ is a graded derivation of the $2$-bracket $\ell_2$; explicitly, when evaluated on elements $v_1,v_2\in V$:
\begin{align}\label{eq:homotopyn=2}
\ell_1\big(\ell_2(v_1,v_2)\big) = \ell_2\big(\ell_1(v_1),v_2\big) + (-1)^{|v_1|} \, \ell_2\big(v_1,\ell_1(v_2)\big) \ .
\end{align}
For $n=3$ the identity \eqref{eq:homotopyJacobi} states that the $2$-bracket $\ell_2$ obeys the Jacobi identity up to the cochain homotopy $\ell_3:V\otimes V\otimes V\to V$; explicitly, when evaluated on elements $v_1,v_2,v_3\in V$:
\begin{align}\label{eq:homotopyn=3}
\begin{split}
& \ell_2\big(\ell_2(v_1,v_2),v_3\big) - (-1)^{|v_2|\,|v_3|}\,
  \ell_2\big(\ell_2(v_1,v_3),v_2\big) +  (-1)^{(|v_2|+|v_3|)\,|v_1|}\,
  \ell_2\big(\ell_2(v_2,v_3),v_1\big) \\[4pt]
& \hspace{1cm} = -\ell_3\big(\ell_1(v_1),v_2,v_3\big) - (-1)^{|v_1|}\,
  \ell_3\big(v_1, \ell_1(v_2), v_3\big) - (-1)^{|v_1|+|v_2|}\,
  \ell_3\big(v_1,v_2, \ell_1(v_3)\big) \\
& \hspace{4cm} -\ell_1\big(\ell_3(v_1,v_2,v_3)\big) \ .
\end{split}
\end{align}

This continues in general for all $n\geq4$ to higher coherence conditions for the $n$-brackets. In the special instance when $\ell_n=0$ for all $n\geq3$, an $L_\infty$-algebra is simply a differential graded Lie algebra. Thus $L_\infty$-algebras are homotopy coherent generalizations of Lie algebras; for this reason, $L_\infty$-algebras are also sometimes called (\emph{strong}) \emph{homotopy Lie algebras}. In particular, the cohomology $H^\bullet(V,\ell_1)$ of an $L_\infty$-algebra is a $\RZ$-graded Lie algebra.

The natural notion of an invariant inner product on an $L_\infty$-algebra is called a \emph{cyclic pairing}. 
\begin{definition}
A \emph{cyclic $L_\infty$-algebra} is an $L_\infty$-algebra $(V,\{\ell_n\})$ together with a non-degenerate graded symmetric cochain map $\langle-,-\rangle:V\otimes V\to \FR$ that satisfies the \emph{cyclicity condition}
\begin{align}
\langle-,-\rangle \, \circ \, (\unit\otimes\ell_n) = {\rm sgn}(\sigma) \, \langle-,-\rangle \, \circ \, (\unit\otimes\ell_n) \, \circ \, \tau^\sigma \ ,
\end{align}
for all $n\geq1$ and for all cyclic permutations $\sigma\in C_{n+1}\subset S_{n+1}$.
\end{definition}

Explicitly, when evaluated on elements $v_0,v_1,\dots,v_n\in V$, cyclicity translates to
\begin{align}\label{eq:cyclicity}
\langle v_0,\ell_n(v_1,v_2,\dots,v_n)\rangle = \pm \,  \langle v_n,\ell_n(v_0,v_1,\dots,v_{n-1})\rangle \ ,
\end{align}
where the $\pm$ sign is determined by the grading of the elements $v_i$.\footnote{See~\cite{BVChristian,Ciric:2021rhi} for the explicit cumbersome sign factors.} Thus cyclic $L_\infty$-algebras are homotopy coherent generalizations of quadratic Lie algebras. 

The dualization of the differential graded cocommutative coalgebra $(V,D)$ leads to an equivalent formulation of an $L_\infty$-algebra in terms of a differential graded commutative algebra with respect to the symmetric tensor product of $\Sym(V[1])$.
\begin{definition}\label{def:CEalgebra}
The \emph{Chevalley--Eilenberg algebra} of an $L_\infty$-algebra $(V,D)$ is the differential graded commutative algebra $\big(\Sym(V[1])^*,Q\big)$, where 
\begin{align}
Q=D^*:\Sym(V[1])^*\longrightarrow \Sym(V[1])^*
\end{align}
is a graded derivation of degree~$1$ such that $Q^2=0$.
\end{definition}
This is a homotopy coherent generalization of the Chevalley--Eilenberg algebra of a Lie algebra. In this dual language, a cyclic pairing translates to a graded symplectic $2$-form $\omega\in\Omega^2(V[1])$ which is $Q$-invariant. Evidently, one could equivalently use Definition~\ref{def:CEalgebra} as the starting point for the definition of an $L_\infty$-algebra, and this is the point of view which naturally makes contact between the objects and relations of $L_\infty$-algebras and the symmetries and dynamics of field theories, as we now briefly sketch; see e.g.~\cite{BVChristian} for a more detailed and precise review of the correspondence.

\subsection{Batalin--Vilkovisky formalism}
\label{sec:BV}

The Batalin--Vilkovisky (BV) formalism constructs a differential graded commutative algebra\footnote{The field observables actually form a much bigger space of functionals on $V[1]$ which are smooth in a suitable sense, but for our illustrative purposes here it suffices to consider only polynomial field observables in the dual symmetric algebra $\Sym(V[1])^*$.} $\big(\Sym(V[1])^*,Q_\BV\big)$ for a graded vector space $V=\bigoplus_{k\in\RZ}\,V^k$ which encodes the BV fields, namely ghosts and fields associated to a generalized gauge field theory, together with their antifields. More precisely, the space of BRST fields is $V^{\leq1}[1]$ where $V^{\leq1}:=\bigoplus_{k\leq1}\,V^k$ with $V^1$ encoding the fields and \smash{$V^0$} the ghosts, while the negatively graded homogeneous subspaces are non-zero only if there are higher gauge redundancies so that they encode ghosts-for-ghosts, etc. The BV field space extends these by their corresponding antifields to the $-1$-shifted cotangent bundle  $V[1]\simeq T^*[-1] V^{\leq1}[1]$. This data is supplemented by the BV antibracket $\{-,-\}_\BV$, which is the canonical graded Poisson bracket compatible with the differential $Q_\BV$ and whose inverse is the BV symplectic form $\omega_\BV$ on $V$ of degree~$-1$; using this non-degenerate symplectic pairing we identify the dual $V^*\simeq V[1]$ and\footnote{We do not delve into precise technical details of defining duals and tensor products of infinite-dimensional vector spaces, such as those which typically arise in field theories, as these involve subtle topological considerations.}
\begin{align}
\Sym(V[1])^*\simeq \Sym(V[2]) \ .
\end{align}

For illustrative purposes, let us first sketch how to build the corresponding $L_\infty$-algebra, and then make a more precise statement afterwards. The action of the BV differential is given by taking the BV antibracket with the BV action functional $S_\BV$, $Q_\BV=\{S_{\BV},-\}_\BV$. The action functional can be expanded as
\begin{align}
S_\BV = \sum_{m\geq2} \, S_\BV^{(m)} \ ,
\end{align}
where $S_\BV^{(m)}$ is the part of $S_\BV$ which is a polynomial of degree~$m$ in the BV fields. Then the brackets $\ell_n$ of the $L_\infty$-algebra are given by \smash{$\big\{S_\BV^{(n+1)},-\big\}_\BV$} for $n\geq1$. 
Nilpotency $(Q_\BV)^2=0$, or equivalently the classical master equation $\{S_\BV,S_\BV\}_\BV=0$, then translates to the homotopy Jacobi identities \eqref{eq:homotopyJacobi} for the brackets $\ell_n$, and $(V,\{\ell_n\})$ is an $L_\infty$-algebra. The $(-1)$-shifted symplectic structure $\omega_\BV$ induces a cyclic pairing of degree $-3$ on $V$, making it into a cyclic $L_\infty$-algebra. 

The action of the BV differential $Q_\BV$ on fields $A\in V^1$ and ghosts $c_{-k}\in V^{-k}$ for $k\geq0$ encodes the kinematical gauge symmetry of the field theory, that is, the (higher) gauge transformations and the closure of the gauge algebra. Its action on the antifields $A^+\in V^2$ incorporates the dynamical brackets of the $L_\infty$-algebra, while the BV transformations of the antifields $c_k^+\in V^{k+3}$ correspond to the (higher) Noether identities and the corresponding actions of the gauge parameters. The cohomology of $Q_\BV$ in degree~$0$ thus simultaneously encodes the quotients of the space of fields by the ideal of Euler--Lagrange derivatives and by the Lie algebra action of gauge transformations, that is, the classical observables of the field theory. 

Altogether, the natural algebraic structure underlying the BV formalism organises the gauge symmetries and dynamics of a field theory in a cochain complex that can be interpreted in the form
\begin{align}
\cdots \ \xrightarrow{ \quad \ \ } \ \begin{matrix}{\scriptstyle V^0} \\ \Big\{\begin{matrix} \textrm{\footnotesize gauge} \\[-1ex] \textrm{\footnotesize parameters} \end{matrix} \Big\} \end{matrix}  \ \xrightarrow{\quad \ \ } \  \begin{matrix} {\scriptstyle V^1} \\ \begin{matrix} \big\{ \textrm{\footnotesize fields} \big\} \\[-1ex] \phantom{\textrm{\footnotesize fields}}  \end{matrix} \end{matrix} \ \xrightarrow{\quad \ \ } \  \begin{matrix} {\scriptstyle V^2} \\ \Big\{ \begin{matrix} \textrm{\footnotesize equations} \\[-1ex] \textrm{\footnotesize of motion} \end{matrix}  \Big\} \end{matrix} \ \xrightarrow{\quad \ \ } \ \begin{matrix} {\scriptstyle V^3} \\ \Big\{ \begin{matrix} \textrm{\footnotesize Noether} \\[-1ex] \textrm{\footnotesize identities} \end{matrix} \Big\} \end{matrix}  \ \xrightarrow{\quad \ \ } \ \cdots 
\end{align}
The homogeneous subspaces $V^{-k}$ for $k\geq1$ to the left encode `higher' gauge transformations for reducible symmetries, while $V^{k+3}$ to the right dually encode `higher' Noether identities (differential relations among the Noether identities, and so on).

Conversely, starting from an  $L_\infty$-algebra $(V,\{\ell_n\})$ with a cyclic structure $\langle-,-\rangle$ of degree~$-3$, choose a basis $\{\tau_\alpha\}\subset V$ of the $L_\infty$-algebra with corresponding dual basis $\{\tau^\alpha\}\subset V^*\simeq V[3]$ relative to the cyclic pairing, that is, $\langle\tau^\alpha,\tau_\beta\rangle = \delta^\alpha_\beta$ for all $\alpha,\beta$. Following~\cite{BVChristian}, we introduce the `contracted coordinate functions' as the elements\footnote{Throughout this paper we adhere to the Einstein summation convention over repeated upper and lower indices.}
\begin{align}\label{eq:contractedcoord}
\xi := \tau^\alpha\otimes \tau_\alpha \ \in  \ \Sym(V[2])\otimes V
\end{align}
of degree~$1$. The $L_\infty$-algebra structure on $V$ naturally extends to the tensor product $\Sym(V[2])\otimes V$ through the extended brackets\footnote{More generally, the tensor product of any differential graded commutative algebra with an $L_\infty$-algebra admits an $L_\infty$-algebra structure with grading defined by the total degree, see Section~\ref{sec:braidedMC}; this generalizes the corresponding statement for Lie algebras, a prominent example being the Lie algebra $\Omega^\bullet(M,\frg)=\Omega^\bullet(M)\otimes\frg$ of differential forms valued in a Lie algebra $\frg$ that we encountered before. Here we view $\Sym(V[2])$ as equipped with the zero differential.}
\begin{align}
\ell_n^{\,\rm ext}(\zeta_1\otimes v_1,\dots,\zeta_n\otimes v_n) := \pm\,(\zeta_1\odot\cdots\odot \zeta_n)\otimes\ell_n(v_1,\dots,v_n) \ ,
\end{align}
for all $n\geq1$, $\zeta_1,\dots,\zeta_n\in\Sym(V[2])$ and $v_1,\dots,v_n\in V$; the explicit Koszul sign factors $\pm$ depend on the gradings and can be found in~\cite[Section~2.3]{BVChristian}. Similarly, the cyclic structure naturally extends to a symmetric non-degenerate pairing $\langle-,-\rangle^{\rm ext}:\big(\Sym(V[2])\otimes V\big)\otimes \big(\Sym(V[2])\otimes V\big)\to \Sym(V[2])$ of degree~$-3$ given by
\begin{align}
\langle \zeta_1\otimes v_1,\zeta_2\otimes v_1\rangle^{\rm ext} := \pm\,(\zeta_1\odot\zeta_2)\,\langle v_1,v_2\rangle \ .
\end{align}

With the contracted coordinate functions \eqref{eq:contractedcoord}, we now set
\begin{align}
Q_\BV\xi = - \sum_{n\geq1} \, \frac{(-1)^{n\choose 2}}{n!} \, \ell_n^{\,\rm ext}(\xi^{\otimes n})  \ ,
\end{align}
which we interpret as acting on elements of $\Sym(V[1])$. The homotopy Jacobi identities \eqref{eq:homotopyJacobi} imply that $(Q_\BV)^2=0$. The
extended pairing $\langle-,-\rangle^{\rm ext}$ induces a $(-1)$-shifted symplectic structure 
\begin{align}
\omega_\BV := -\tfrac12\,\langle\delta\xi,\delta\xi\rangle^{\rm ext} \ \in \ \Omega^2(V[1]) \ . 
\end{align}
In this way we recover the differential $Q_\BV$ and antibracket $\{-,-\}_\BV$ of the BV formalism directly from the cyclic $L_\infty$-algebra of a field theory. Moreover, the BV action functional is obtained from the element~\cite[Section~4.3]{BVChristian}
\begin{align}
S_\BV = \sum_{n\geq1} \, \frac{(-1)^{n\choose 2}}{(n+1)!} \, \langle\xi,\ell_n^{\,\rm ext}(\xi^{\otimes n})\rangle^{\rm ext} \ \in \ \Sym(V[2])
\end{align}
of degree~$0$. Then
\begin{align}
Q_\BV = \{S_\BV,-\}_\BV \ ,
\end{align}
and nilpotency $(Q_\BV)^2=0$ is equivalent to the {classical master equation}
\begin{align}
\{S_\BV,S_\BV\}_\BV = 0 \ .
\end{align}

\subsection{$L_\infty$-algebras of classical field theories}
\label{sec:LinftyCFT}

We can turn the formalism sketched in Section~\ref{sec:BV} into a concrete prescription for explicitly developing the classical dynamics of any generalized gauge field theory entirely from its underlying $L_\infty$-algebra $(V,\{\ell_n\})$; in this framework the equations of motion of the theory are the Maurer--Cartan equations of the $L_\infty$-algebra. For a field theory on a manifold $M$ we always assume that its $L_\infty$-algebra is \emph{local}, that is, all bracket operations are given by polydifferential operators. Focusing momentarily on field theories with only irreducible gauge symmetries, i.e. with independent gauge transformations, the $4$-term cochain complex
\begin{align}\label{eq:cochaincomplex}
V^0\xrightarrow{ \ \ \ell_1 \ \ }V^1\xrightarrow{ \ \ \ell_1 \ \ 
  }V^2\xrightarrow{ \ \ \ell_1 \ \ }V^3
\end{align}
encodes the free field dynamics, that is, the linearized gauge transformations, equations of motion and Noether identities. In particular, the free fields (solutions to the linearized equations of motion modulo linearized gauge transformations) live in the degree~$1$ cohomology $H^1(V,\ell_1)$ of this complex. The higher brackets $\ell_n$ with $n\geq2$ encode the interactions, and recover the full symmetries and dynamics of the generalized gauge theory. The homotopy Jacobi identities ensure covariance of the field equations, (on-shell) closure of gauge transformations, and the Noether identities, as we now explain. 

The vector spaces $V^{0}$ and $V^{1}$ are respectively the  spaces of
gauge parameters and dynamical fields. Given $\lambda \in V^{0}$ and $A\in V^{1}$, a gauge
transformation is the map $A\mapsto A+\delta_\lambda A$ where the infinitesimal gauge variation is given by\footnote{Here and in the following we use the convention ${k\choose 2}:=0$ for $k<2$.}
\begin{align} \label{gaugetransfA}
\delta_{\lambda}A=\ell_1(\lambda) + \sum_{n \geq1} \, \frac{(-1)^{n\choose 2}}{n!} \, \ell_{n+1}(\lambda,A^{\otimes n}) \ .
\end{align}
The gauge variations extend to maps $\delta_\lambda:V\to V$ of degree~$0$ by acting trivially on $V^0$, by commuting with arbitrary maps $V\to V$, and as derivations of operations $\mu$ defined on $V\otimes V$, that is, they satisfy the Leibniz rule
\begin{align}
\delta_\lambda\mu(v_1\otimes v_2) = \mu(\delta_\lambda v_1\otimes v_2) + \mu(v_1\otimes\delta_\lambda v_2)
\end{align}
for all $v_1,v_2\in V$.

The vector space $V^{2}$ is
the  space that the field equations of the  theory
belong to. They are encoded as the \emph{Maurer--Cartan equations} $\frF_{\tA}=0$, where
\begin{align}\label{EOM} 
\frF_{\tA}:=\ell_1(A) + \sum_{n \geq2} \, \frac{ (-1)^{n\choose 2}}{n!} \, \ell_{n}(A^{\otimes n})  \ ,
\end{align}
with the covariant gauge variations
\begin{align} \label{gaugetransfF}
\delta_{\lambda} \frF_{\tA} =\ell_2(\lambda,\frF_{\tA}) + \sum_{n \geq1} \, \frac{(-1)^{n\choose 2}}{n!} \, \ell_{n+2}(\lambda,\frF_{\tA},A^{\otimes n})  \ ,
\end{align}
which follow from homotopy Jacobi identity \eqref{eq:homotopyJacobi} evaluated on $(\lambda,A^{\otimes n})$ and summed over $n\geq1$.
Using symmetry of the brackets under exchange of degree~$1$ elements along with the Leibniz rule, it is easy to see that $\frF_{A+\delta_\lambda A}=\frF_{\tA} + \delta_\lambda\frF_{\tA} + O(\lambda^2)$, and so the gauge symmetry acts on the subspace of classical solutions $A\in V^1$.
The quotient of the space of solutions to the equations of motion $\frF_{\tA}=0$ by the action of gauge transformations is the \emph{Maurer--Cartan moduli space} $\CCM\CCC(V,\{\ell_n\})$, which is the space of classical physical states of the field theory. The BV complex $\big(\Sym(V[1])^*,Q_\BV\big)$ of Section~\ref{sec:BV} provides a rigorous definition of this moduli space as a homological resolution of the classical observables. On the other hand, the form of the dynamical equations \eqref{EOM} suggest that $L_\infty$-algebras are a dual formulation of the free differential algebras which play a central role in the geometric formulation of supergravity theories~\cite{Castellani:1991et}.

The closure relation for the gauge algebra then has the form
\begin{align}\label{eq:closure}
[\delta_{\lambda_1},\delta_{\lambda_2}]_\circ A
  = \delta_{C(\lambda_1,\lambda_2;A)}A  + \triangle_{\lambda_1,\lambda_2}A \ ,
\end{align}
where $[\delta_{\lambda_1},\delta_{\lambda_2}]_\circ := \delta_{\lambda_1}\circ\delta_{\lambda_2} - \delta_{\lambda_2}\circ\delta_{\lambda_1}$ is the commutator, while
\begin{align}\label{eq:ClambdaA}
C(\lambda_1,\lambda_2;A) = -\ell_2(\lambda_1,\lambda_2) -\sum_{n\geq1} \, \frac{(-1)^{n\choose 2}}{n!} \,
   \ell_{n+2}(\lambda_1,\lambda_2,A^{\otimes n})
\end{align}
and
\begin{align}\label{eq:trianglelambda}
\triangle_{\lambda_1,\lambda_2}A = \sum_{n\geq0} \, \frac{(-1)^{n-2\choose 2}}{n!} \, 
  \ell_{n+3}(\lambda_1,\lambda_2,\frF_{\tA},A^{\otimes n})  \ .
\end{align}
This follows from the homotopy Jacobi identity \eqref{eq:homotopyJacobi} evaluated on $(\lambda_1,\lambda_2,A^{\otimes n})$ and summed over $n\geq1$.
The algebra of gauge transformations thus closes on-shell, that is, when $\frF_{\tA}=0$, and on this locus the gauge algebra generally depends on the fields $A$; in other words, the gauge symmetries generally form an \emph{open} gauge algebra. The homotopy relations of the $L_\infty$-algebra guarantee that the Jacobi
identity is generally satisfied for any triple of maps $\delta_{\lambda_1}$,
$\delta_{\lambda_2}$ and~$\delta_{\lambda_3}$. The field theories considered in this paper will always have \emph{closed} field-independent gauge algebras generated by gauge transformations that are linear in the fields; that is, \eqref{eq:trianglelambda} vanishes identically at each order in $A^{\otimes n}$, while the sums in \eqref{gaugetransfA} and \eqref{eq:ClambdaA} terminate at the $2$-bracket $\ell_2$.

Finally, the vector space $V^3$ accomodates Noether's second theorem: It contains the image of a local differential operator $\sfd_{\tA}$ on $V^2$, which may depend on the fields. The Noether identities are then encoded by
\begin{align}\label{eq:Noether}
\dsf_{\tA}\frF_{\tA} := \ell_1(\frF_{\tA}) + \sum_{n\geq1} \, \frac{(-1)^{n\choose 2}}{n!} \,  \ell_{n+1}(\frF_{\tA},A^{\otimes n}) \ = \ 0 \ ,
\end{align}
which vanishes identically as a consequence of the homotopy relations of the $L_\infty$-algebra: evaluating \eqref{eq:homotopyJacobi} on $A^{\otimes n}$, using symmetry of the brackets under interchange of degree~$1$ elements, shows that
\begin{align}\label{eq:homotopyJacobidegree1}
\sum_{k=0}^n \, \frac{(-1)^{n\choose 2}}{k!\,(n-k)!} \, \ell_{k+1}\big(\ell_{n-k}(A^{\otimes n-k}),A^{\otimes k}\big) = 0 \ ,
\end{align}
and summing over $n\geq1$ yields \eqref{eq:Noether}. This expresses local differential relations among the Euler--Lagrange derivatives $\frF_{\tA}$ which hold \emph{off-shell}.

The action functional of the  field theory is encoded
via a symmetric non-degenerate bilinear pairing $\langle -,-\rangle :
V \otimes V\to\FR$ of degree~$-3$, which makes $V$ into a cyclic $L_\infty$-algebra. The only non-trivial pairings are
\begin{align}
\langle -,-\rangle : V^1 \otimes V^2 \longrightarrow\FR \qquad \mbox{and} \qquad \langle -,-\rangle : V^0 \otimes V^3 \longrightarrow\FR \ .
\end{align}
It is easy to see that the equations of motion $\frF_{\tA} = 0$ follow from varying the \emph{Maurer--Cartan action functional}
\begin{align} \label{action}
S(A) := \frac12\,\langle A,\ell_1(A)\rangle + \sum_{n\geq2} \, \frac{(-1)^{n\choose 2}}{(n+1)!} \,  \langle A, \ell_{n}(A^{\otimes n})\rangle \ ,
\end{align}
since then cyclicity and the Leibniz rule imply $\delta S(A)=\langle \delta A,\frF_{\tA}\rangle$. Note that $\ell_1(A)$ is associated with the free field equations of motion, while $\ell_n(A^{\otimes n})$ for $n\geq2$ contribute interaction vertices in the action functional.

Cyclicity and the Leibniz rule also imply
\begin{align}\label{eq:gtNoether}
\begin{split}
\delta_\lambda S(A)=\langle \delta_\lambda A,\frF_{\tA}\rangle &= \sum_{n \geq0} \, \frac{(-1)^{n\choose 2}}{n!} \, \langle \ell_{n+1}(\lambda,A^{\otimes n}),\frF_A\rangle \\[4pt]
&= -\sum_{n \geq0} \, \frac{(-1)^{n\choose 2}}{n!} \, \langle\lambda,\ell_{n+1}(A^{\otimes n},\frF_A)\rangle
=-\langle\lambda,\dsf_{\tA} \frF_{\tA} \rangle \ .
\end{split}
\end{align}
Thus gauge invariance of the action functional $\delta_\lambda
S(A)=0$, for all $\lambda\in V^0$,  is  equivalent to the Noether identities
$\dsf_{\tA}\frF_{\tA}=0$ by non-degeneracy of the pairing. This is simply a reformulation of Noether's second theorem. From this perspective, the Noether operator $\sfd_{\tA}:V^2\to V^3$ is the adjoint, with respect to the cyclic inner product, of $-\delta_\lambda$ viewed as a differential operator $-\delta_{(-)}A:V^0\to V^1$ acting on a gauge parameter $\lambda\in V^0$ with image in $V^1$. 

\subsubsection*{Higher gauge symmetries}

If the symmetries themselves
  have non-trivial symmetries, that is, there are further gauge
  redundancies in the description and the gauge symmetries are
  reducible, then the cochain complex \eqref{eq:cochaincomplex} should be extended into negative degrees $V^{-k}$ for $k\geq1$, which are the spaces of `higher' gauge parameters, together with their duals $V^{k+3}$ which contain the corresponding `higher' Noether identities. The vector space $V^{-1}$ parametrizes gauge transformations of the gauge parameters, $V^{-2}$ parametrizes gauge variations of the gauge transformations of gauge parameters, and so on. These higher gauge symmetries are encoded as
\begin{align}\label{eq:highergauge}
\delta_{(\lambda_{-k-1},A)}\lambda_{-k}=\ell_1(\lambda_{-k-1}) + \sum_{n\geq1} \, \frac{(-1)^{{n+1 \choose 2} + n\,k}}{n!} \,  \ell_{n+1}(\lambda_{-k-1},A^{\otimes n})  \ ,
\end{align}
where $\lambda_{-k}\in V^{-k}$ for $k\geq0$. 

Like the commutator of gauge transformations, these close only on-shell in general, as
\begin{align}\label{eq:highergaugecov}
\delta_{(\lambda_{-k-2},A)}\big(\delta_{(\lambda_{-k-1},A)}\lambda_{-k}\big) = \ell_2(\lambda_{-k-2},\frF_{\tA}) + \sum_{n\geq1} \, \frac{(-1)^{{n+2\choose 2} + n\,k}}{n!} \, \ell_{n+2}(\lambda_{-k-2},\frF_{\tA},A^{\otimes n}) \ ,
\end{align}
for all $k\geq-1$ with the convention $\lambda_{+1}:=A$, which follow from the homotopy Jacobi identity \eqref{eq:homotopyJacobi} evaluated on $(\lambda_{-k-2},A^{\otimes n})$ and summed over $n\geq1$. Thus gauge transformations of level~$k$ gauge parameters leave level~$k$ gauge transformations unchanged up to the equations of motion $\frF_{\tA}=0$. 
At the classical level, their
inclusion is purely algebraic and only serves to alter the cohomology
$H^\bullet(V,\ell_1)$ of the underlying cochain complex at its
extremities, leaving the moduli space $\CCM\CCC(V,\{\ell_n\})$ of classical states
unchanged. The inclusion of higher gauge parameters is necessary so that the resulting BV complex $\big(\Sym(V[1])^*,Q_\BV\big)$ of Section~\ref{sec:BV} properly forms a resolution of the space of classical observables.

Higher gauge redundancies are equivalent to higher Noether identities. Taking into account level 1 higher gauge parameters extends the underlying graded vector space $V$ with a non-trivial homogeneous subspace $V^{-1}$ and its dual $V^{4}$, and hence the duality pairing  
\begin{align}	
\langle -,-\rangle : V^{-1} \otimes V^{4} \longrightarrow\FR \ .
\end{align}
Using cyclicity, we define the level 1 Noether operator $\dsf_{A}^{\swone}:V^{3}\to V^{4}$ as the adjoint of $\delta_{(\lambda_{-1},A)}$ from \eqref{eq:highergauge} viewed as an operator $\delta_{(-,A)}\lambda: V^{-1}\to V^0$ on $V^{-1}$ with values in $V_{0}$, that is, 
\begin{align}
	\begin{split}
	\langle \delta_{(\lambda_{-1},A)} \lambda, \varLambda \rangle =: \langle \lambda_{-1} , \sfd_{A}^{\swone} \varLambda \rangle \ , 
	\end{split}
\end{align}
for all $\varLambda \in V^3$. Explicitly
\begin{align}\label{eq:level1NoetherOp}
\sfd_A^\swone\varLambda = \ell_1(\varLambda) + \sum_{n\geq1} \, \frac{(-1)^{{n+1 \choose 2}+n}}{n!} \,  \ell_{n+1}(\varLambda,A^{\otimes n})  \ .
\end{align}

Applying $\delta_{(\lambda_{-1},A)}$ on both sides of $\langle \delta_\lambda A,\CA\rangle=-\langle\lambda,\dsf_A \CA \rangle$ for arbitrary $\CA\in V^2$, we get 
\begin{align}
	\langle \delta_{(\lambda_{-1},A)}(\delta_\lambda A),\CA\rangle=-\langle\delta_{(\lambda_{-1},A)}\lambda,\dsf_A \CA \rangle \ .
\end{align}
Using the closure property \eqref{eq:highergaugecov}, the definition \eqref{eq:level1NoetherOp} of the level 1 Noether operator, and finally cyclicity and non-degeneracy of the pairing, we obtain the level 1 Noether identities
\begin{align}\label{eq:level1Noether}
\dsf_{A}^{\swone}\circ \dsf_A = -\ell_2(\frF_A,-)-\sum_{n\geq1} \, \frac{(-1)^{{n\choose 2}}}{n!} \, \ell_{n+2}(\frF_A,A^{\otimes n}, -) \ ,
\end{align}
which state that, up to the equations of motion $\frF_A=0$, not all components of the level 0 Noether operator $\dsf_A$ are independent.\footnote{Of course, this can also be derived directly from the homotopy Jacobi identities of the extended $L_\infty$-algebra.} This is naturally `dual' to the statement that the level 0 gauge parameters generate reducible gauge symmetries, i.e. that not all gauge directions are independent in the tangent space to the space of fields $V^1$. This is to be compared with the level 0 Noether identities $\dsf_A \circ \frF_{-}= 0$, which state that not all components of the Maurer--Cartan operator $\frF_{-}:V^1\to V^2$ are independent due to the existence of level 0 gauge symmetries. 

Proceeding inductively,  the existence of level $k$ gauge transformations extends the underlying graded vector space $V$ with a non-trivial homogeneous subspace $V^{-k}$ and its dual $V^{k+3}$, with the duality pairing $\langle -,-\rangle : V^{-k} \otimes V^{k+3} \to \FR $.  The adjoint of $\delta_{(\lambda_{-k},A)}$ defines the level $k$ Noether operator \smash{$\dsf_{A}^{\textrm{\tiny$(k)$}}: V^{k+2} \to V^{k+3}$} which encodes the interdependence of the level $k{-}1$ Noether operator through the level $k$ Noether identities
\begin{align}\label{eq:levelkNoether}
	\dsf_{A}^{\textrm{\tiny$(k)$}}\circ \dsf^{\textrm{\tiny$(k{-}1)$}}_A = -\ell_2(\frF_A,-)-\sum_{n\geq1} \, \frac{(-1)^{{n\choose 2}}}{n!} \, \ell_{n+2}(\frF_A,A^{\otimes n}, -) \ .
\end{align}

\subsection{Example: Yang--Mills theory in the $L_\infty$-algebra formalism}
\label{sec:Yang-Millstheory}

Yang--Mills theory is the basic example of a field theory with gauge symmetries where the underlying $L_{\infty}$-algebra is not a  differential graded Lie algebra. The higher bracket arises from the quartic interaction vertex in the Yang--Mills action functional. The Yang--Mills $L_{\infty}$-algebra has appeared many times in the literature. It was first explicitly noted in~\cite{Zeitlin:2009zza} and has more recently reappeared in~\cite{Hohm:2017pnh,BVChristian,Elliott:2021ffl}. We close this section with a review of the $L_\infty$-algebra formulation of Yang--Mills theory, which along with the Chern--Simons gauge theory of Section~\ref{sec:whatisgauge} will play a prominent role in some of our later examples.

Let $(M,g)$ be a $d$-dimensional oriented Lorentzian manifold,\footnote{The discussion below is presented for Lorentzian signature in light of our later considerations, but it is easily adapted to more general pseudo-Riemannian manifolds.} and $\ast_\hodge : \Omega^{k}(M) \to \Omega^{d-k}(M)$ the corresponding Hodge duality operator. Let $(\frg,[-,-]_\frg,\Tr_\frg)$ be a quadratic Lie algebra. As in Section~\ref{sec:whatisgauge}, we extend the Lie bracket on $\frg$ to a Lie bracket on $\frg$-valued differential forms $\Omega^\bullet(M,\frg):=\Omega^\bullet(M)\otimes\frg$ via the tensor product with exterior multiplication, and we denote this extended Lie bracket with the same symbol $[-,-]_\frg$.

The Yang--Mills action functional for a gauge field $A \in \Omega^{1}(M,\frg)$ is defined by
\begin{align}\label{eq:YMaction}
	S(A)= \int_{M}\,\text{Tr}_\frg (F_{\tA} \wedge \ast_\hodge  F_{\tA}) \ ,
\end{align}
where as before $F_{\tA}:= \dd A +\frac{1}{2}\,[A,A]_\frg\in\Omega^2(M,\frg)$ is the curvature 2-form. It is invariant under the usual gauge transformations
\begin{align}
\delta_\lambda A=\dd\lambda+[A,\lambda]_\frg
\end{align}
for $\lambda \in \Omega^{0}(M,\frg)$. The Yang--Mills field equations state that the Hodge dual of the curvature is covariantly constant:
\begin{align}
	\frF_{\tA}=\dd^{\tA} \ast_\hodge  F_{\tA} =\dd \ast_\hodge  F_{\tA} + [A,\ast_\hodge  F_{\tA}]_\frg=0 \  \in \ \Omega^{d-1}(M,\frg) \ ,
\end{align}
whereas the corresponding Noether identities are
\begin{align}\label{eq:YMNoether}
	\dd^{\tA}\frF_{\tA} = \big(\dd^{\tA}\big)^{2}\ast_\hodge  F_{\tA} = [F_{\tA},\ast_\hodge  F_{\tA}]_\frg=0 \  \in \ \Omega^{d}(M,\frg) \ . 
\end{align}
These may be also verified directly by symmetry and antisymmetry of the $d$-form $[F_{\tA},\ast_\hodge  F_{\tA}]_\frg$ valued in the Lie algebra $\frg$. From these expressions the $L_\infty$-algebra organising Yang--Mills theory may be determined.

The underlying graded vector space is $V=V^{0}\oplus V^{1}\oplus V^{2}\oplus V^{3}$, where
\begin{align}
	\begin{split}
		V^{0}:= \Omega^{0}(M,\frg)  \ , \quad V^{1}:= \Omega^{1}(M,\frg) \ , \quad
		V^{2}&:= \Omega^{d-1}(M,\frg) \qquad \mbox{and} \qquad 
		V^{3}:= \Omega^{d}(M,\frg) \ .
	\end{split}
\end{align}
Denoting the corresponding elements by $\lambda$, $A$, $\CA$ and $\varLambda$ respectively, the $L_{\infty}$-algebra structure is given by the differential 
\begin{align}
	\ell_1(\lambda) = \dd\lambda \ , \quad \ell_1(A) = \dd \ast_\hodge  \dd 
	A \qquad  
	\mbox{and} \qquad \ell_1(\CA) = \dd \CA   \ ,
\end{align} 
along with the 2-brackets 
\begin{align}
	\begin{split} 
		\ell_{2}(\lambda_1,\lambda_2)= -[\lambda_1,\lambda_2]_\frg \qquad & , \qquad
		\ell_2 (\lambda, A)=-[\lambda,A]_\frg \ , \\[4pt]
		\ell_{2}(\lambda, \CA) = -[\lambda,\CA]_\frg \qquad , \qquad
		\ell_2(\lambda,\varLambda)&=-[\lambda,\varLambda]_\frg \qquad , \qquad
		\ell_{2}(A, \CA)= -[A,\CA]_\frg \ , \\[4pt]
		\ell_{2}(A_1, A_2)= - \dd \ast_\hodge  [A_1,A_2]_\frg -&[A_1,\ast_\hodge  \dd A_2]_\frg + (-1)^{d} \, [\ast_\hodge  \dd A_1, A_2]_\frg \ , 
	\end{split}
\end{align}
and finally the non-zero 3-bracket
\begin{align}
	\begin{split}
		\ell_3(A_1,&A_2,A_3) = -\big[A_1, \ast_\hodge [A_2,A_3]_\frg\big]_\frg-\big[A_2,\ast_\hodge [A_1,A_3]_\frg\big]_\frg+ (-1)^{d}\, \big[\ast_\hodge [A_1,A_2]_\frg,A_3\big]_\frg \ ,
	\end{split}
\end{align}
owing to the quartic interaction vertex. Note that the underlying cochain complex differs from that of Chern--Simons theory (see Section~\ref{sec:whatisgauge}), and that now the $L_\infty$-algebra is no longer a differential graded Lie algebra because $\ell_3\neq0$.

The resulting cyclic inner product is given by
\begin{align} \label{eq:YMpairing}
	\langle A ,\CA \rangle := \int_M\,
	\Tr_\frg(A\wedge  \CA) \qquad \mbox{and} \qquad
	\langle \lambda,\varLambda \rangle := \int_M\,
	\Tr_\frg(\lambda\wedge \varLambda) \ ,
\end{align}
with cyclicity following from ${\rm ad}(\frg)$-invariance of the pairing $\Tr_{\frg}:\frg\otimes\frg\to\FR$, integration by parts (for fields with suitable asymptotic decay) and symmetry properties of the Hodge duality operator. By design, this cyclic $L_{\infty}$-algebra encodes the full data of classical Yang--Mills theory as described in Section~\ref{sec:LinftyCFT}. 

\section{Noncommutative gauge symmetry and $L_\infty$-algebras}
\label{sec:NCLinfty}

As a first step towards understanding noncommutative field theories in the modern context of this paper, in this section we look at some basic examples, mentioned in Section~\ref{sec:Intro}, of noncommutative gauge theories arising in string theory which fit into the standard $L_\infty$-algebra framework discussed in Section~\ref{sec:Linfty}. These are the examples of flat D-branes in a constant $B$-field, which lead to the standard Moyal--Weyl deformations of gauge theories, and also curved D-branes in WZW models based on compact Lie groups, which lead to fuzzy field theories (i.e. matrix models). We will then discuss the limitations of this perspective when confronted with the noncommutative gauge theories on D-branes wrapping curved submanifolds with non-trivial worldvolume fluxes in a generic background.

\subsection{Star-gauge theories in the $L_\infty$-algebra formalism}
\label{sec:usualNCLinfty}

Let us start with the conventional noncommutative field theories on flat $d$-dimensional Minkowski spacetime $\FR^{1,d-1}$ with a constant Poisson bivector $\theta=\frac12\,\theta^{\mu\nu}\,\partial_\mu\wedge\partial_\nu$. We consider for illustration a noncommutative gauge theory with gauge algebra $\fru(1)$ and fields multiplied with the Moyal--Weyl star-product \eqref{eq:MoyalWeylstarprod}.\footnote{The extension to nonabelian gauge algebras $\fru(n)$ with $n>1$ is a straightforward modification (with no essential novelties) obtained by composing the star-product \eqref{eq:MoyalWeylstarprod} with matrix multiplication and all integrations with the trace in the fundamental representation of $\fru(n)$.} 

Let us start by introducing some notation: Given functions $\lambda,\rho\in\Omega^0(\FR^{1,d-1})$, we denote their star-commutator by
\begin{align}\label{eq:starcomm}
[\lambda\starcom\rho] := \lambda\star\rho - \rho\star\lambda \ ,
\end{align}
which makes $\Omega^0(\FR^{1,d-1})$ into a Lie algebra due to associativity of the Moyal--Weyl star-product. Given differential forms $\alpha,\beta\in\Omega^\bullet(\FR^{1,d-1})$, we denote their star-exterior product by $\alpha\wedge_\star\beta$, which is defined by regarding the graded commutative exterior algebra $\Omega^\bullet(\FR^{1,d-1})$ as a module over the noncommutative algebra of functions $\big(\Omega^0(\FR^{1,d-1}),\star\big)$; in particular, the holonomic coframe on $\FR^{1,d-1}$ obeys $\dd x^\mu\wedge_\star\dd x^\nu = \dd x^\mu\wedge \dd x^\nu = -\dd x^\nu\wedge_\star\dd x^\mu$ and $\lambda\star\dd x^\mu = \lambda\cdot\dd x^\mu = \dd x^\mu\star \lambda$ for $\lambda\in\Omega^0(\FR^{1,d-1})$. Similarly to \eqref{eq:starcomm}, we define the graded star-commutator of arbitrary homogeneous differential forms by
\begin{align}
[\alpha\starcom\beta] := \alpha\wedge_\star\beta - (-1)^{|\alpha|\,|\beta|} \,  \beta\wedge_\star\alpha \ .
\end{align}
Then $\big(\Omega^\bullet(\FR^{1,d-1}),[-\starcom-],\dd\big)$ is a differential graded Lie algebra, and noncommutative field theory with the Moyal--Weyl star-product has a simple noncommutative differential calculus.

With these preliminaries, it is clear that the conventional noncommutative $\sU(1)$ gauge theories are formally identical to their classical counterparts with nonabelian gauge symmetry, as is well-known. In particular, the \emph{star-gauge transformation} of a gauge field $A\in\Omega^1(\FR^{1,d-1})$ by a gauge parameter $\lambda\in\Omega^0(\FR^{1,d-1})$ is given by the usual expression
\begin{align}\label{eq:stargauge}
\delta_\lambda^\star A = \dd\lambda - [\lambda\starcom A] \ .
\end{align}
Using the Leibniz rule for the star-commutator $[-\starcom-]$ with respect to the de~Rham differential $\dd$, together with its Jacobi identity, one easily checks the off-shell gauge closure
\begin{align}\label{eq:stargaugeclosure}
\big[\delta^\star_{\lambda_1},\delta^\star_{\lambda_2}\big]^{\phantom{\dag}}_\circ A = \delta^\star_{[\lambda_1\stackrel{\scriptstyle\star}{\scriptstyle,} \lambda_2]}A
\end{align}
to the Lie algebra $\big(\Omega^0(\FR^{1,d-1}),[-\starcom-]\big)$. 

Under these star-gauge transformations, 
the noncommutative field strength
\begin{align}\label{eq:fieldstrengthstargauge}
F_{\tA}^\star := \dd A + \tfrac12\,[A\starcom A] = \dd A+ A\wedge_\star A \ \in \ \Omega^2(\FR^{1,d-1})
\end{align}
transforms covariantly:
\begin{align}\label{eq:starFcov}
\delta_\lambda^\star F_{\tA}^\star = -[\lambda\starcom F_{\tA}^\star] \ .
\end{align}
This similarity to the analysis of ordinary nonabelian gauge theories was used in~\cite{Blumenhagen:2018kwq} to cast noncommutative $\sU(1)$ Chern--Simons and Yang--Mills theories into the framework of standard $L_\infty$-algebras, which we now describe. 

\subsubsection*{Noncommutative Chern--Simons theory}

The $L_\infty$-algebra formulation of noncommutative Chern--Simons theory in three dimensions follows the analysis of the usual nonabelian Chern--Simons gauge theory presented in Section~\ref{sec:whatisgauge}. In this case the organising $L_\infty$-algebra is simply the differential graded Lie algebra with underlying graded vector space 
\begin{align}
V=\Omega^\bullet(\FR^{1,2})
\end{align}
and the non-zero brackets
\begin{align}
\ell_1(\alpha) = \dd \alpha \qquad \mbox{and} \qquad \ell_2(\alpha,\beta) = -[\alpha\starcom\beta]
\end{align}
for $\alpha,\beta\in V$. Note that the underlying cochain complex \eqref{eq:cochaincomplex} (the de~Rham complex in this case) is unchanged by the noncommutative deformation; that is, the free fields are the same and only the interactions are modified by the noncommutativity, as is well-known for Moyal--Weyl type noncommutative field theories. 

Applying the standard machinery from Section~\ref{sec:LinftyCFT}, one easily verifies that the noncommutative gauge symmetry \eqref{eq:stargauge} is reproduced in the expected way as
\begin{align}\label{eq:NCelldelta}
\delta_\lambda^\star A = \ell_1(\lambda) + \ell_2(\lambda,A) \ ,
\end{align}
with the closure \eqref{eq:stargaugeclosure} expressed through
\begin{align}\label{eq:NCellclosure}
\big[\delta^\star_{\lambda_1},\delta^\star_{\lambda_2}\big]_\circ A = \delta^\star_{-\ell_2(\lambda_1,\lambda_2)}A \ .
\end{align}
The Maurer--Cartan equations $\frF_{\tA}=0$ reproduce the noncommutative Chern--Simons equation of motion
\begin{align}
\frF_{\tA} = \ell_1(A) - \tfrac12\,\ell_2(A,A) = F_{\tA}^\star \ ,
\end{align}
with the covariance \eqref{eq:starFcov} expressed as
\begin{align}\label{eq:NCellcov}
\delta_\lambda^\star \frF_{\tA} = \ell_2(\lambda,\frF_{\tA}) \ .
\end{align}
The Noether identities
\begin{align}
0 \ = \ \dsf_{\tA}\frF_{\tA} = \ell_1(\frF_{\tA})+\ell_2(\frF_{\tA},A) = \dd F_{\tA}^\star + [A\starcom F_{\tA}^\star] =: \dd_\star^{\tA} F_{\tA}^\star
\end{align}
are simply the star-gauge Bianchi identities $\dd_\star^{\tA} F_{\tA}^\star=0$ satisfied by the noncommutative field strength~$F_{\tA}^\star$, where $\dd_\star^{\tA}$ is star-gauge covariant derivative. 

The inner product
\begin{align}
\langle\alpha,\beta\rangle = \int_{\FR^{1,2}} \, \alpha\wedge_\star\beta = \int_{\FR^{1,2}} \, \alpha\wedge\beta
\end{align}
on $V$ pairs differential forms $\alpha$ and $\beta$ on $\FR^{1,2}$ in complementary degrees, where the second equality follows from an integration by parts and is sometimes expressed as saying that the Moyal--Weyl star-product is \emph{cyclic}. This makes $V$ into a cyclic $L_\infty$-algebra, and the corresponding Maurer--Cartan action functional then reproduces the standard noncommutative $\sU(1)$ Chern--Simons functional:
\begin{align}
S(A) = \frac12\,\langle A,\ell_1(A)\rangle - \frac1{3!}\,\langle A,\ell_2(A,A)\rangle = \frac12\,\int_{\FR^{1,2}} \, A\wedge_\star \dd A +\frac2{3}\,A\wedge_\star A\wedge_\star A \ ,
\end{align}
where we used $[A\starcom A]=2\,A\wedge_\star A$. Thus, like the classical theory, noncommutative $\sU(1)$ Chern--Simons theory fits nicely into the algebraic framework of a differential graded Lie algebra.

\subsubsection*{Noncommutative Yang--Mills theory}

A similar analysis can be done for noncommutative Yang--Mills theory in any spacetime dimension $d$, using the $L_\infty$-algebra of usual nonabelian Yang--Mills theory from Section~\ref{sec:Yang-Millstheory}. Let $\ast_\hodge : \Omega^{k}(\FR^{1,d-1}) \to \Omega^{d-k}(\FR^{1,d-1})$ be the Hodge duality operator induced by the standard flat Minkowski metric of $\FR^{1,d-1}$. 

The underlying graded vector space is $V=V^{0}\oplus V^{1}\oplus V^{2}\oplus V^{3}$, where
\begin{align}
	\begin{split}
		V^{0}= \Omega^{0}(\FR^{1,d-1}) \ , \quad V^{1}= \Omega^{1}(\FR^{1,d-1}) \ , \quad
		V^{2}= \Omega^{d-1}(\FR^{1,d-1}) \ , \quad 
		V^{3}= \Omega^{d}(\FR^{1,d-1}) \ .
	\end{split}
\end{align}
Denoting the corresponding elements by $\lambda\in V^0$, $A\in V^1$, $\CA\in V^2$ and $\varLambda\in V^3$, the differential is given by
\begin{align}
	\ell_1(\lambda) = \dd\lambda \ , \quad \ell_1(A) = \dd \ast_\hodge  \dd 
	A \qquad  
	\mbox{and} \qquad \ell_1(\CA) = \dd \CA   \ ,
\end{align} 
the 2-brackets are
\begin{align}
	\begin{split} 
		\ell_{2}(\lambda_1,\lambda_2)= -[\lambda_1\starcom\lambda_2] \qquad & , \qquad
		\ell_2 (\lambda, A)=-[\lambda\starcom A] \ , \\[4pt]
		\ell_{2}(\lambda, \CA) = -[\lambda\starcom\CA] \qquad , \qquad
		\ell_2(\lambda,\varLambda)&=-[\lambda\starcom\varLambda] \qquad , \qquad
		\ell_{2}(A, \CA)= -[A\starcom\CA] \ , \\[4pt]
		\ell_{2}(A_1, A_2)= - \dd \ast_\hodge  [A_1\starcom A_2] - & [A_1\starcom\ast_\hodge  \dd A_2] + (-1)^{d} \, [\ast_\hodge  \dd A_1\starcom A_2] \ , 
	\end{split}
\end{align}
and finally there is a non-zero 3-bracket
\begin{align}
	\begin{split}
		\ell_3(A_1,&A_2,A_3) = -\big[A_1\starcom \ast_\hodge [A_2\starcom A_3]\big]-\big[A_2\starcom\ast_\hodge [A_1\starcom A_3]\big] + (-1)^{d}\, \big[\ast_\hodge [A_1\starcom A_2]\starcom A_3\big] \ .
	\end{split}
\end{align}

The noncommutative gauge transformations and their closure again follow as in \eqref{eq:NCelldelta} and \eqref{eq:NCellclosure}, while the Maurer--Cartan equations $\frF_{\tA}=0$ reproduce the vacuum equation of motion for noncommutative $\sU (1)$ Yang--Mills theory:
\begin{align}
\frF_{\tA} = \ell_{1}(A) -\tfrac{1}{2}\,  \ell_{2}(A,A) - \tfrac{1}{3!} \, \ell_{3} (A,A,A) = \dd_\star^{\tA} \ast_\hodge  F_{\tA}^\star \ ,
\end{align}
with covariance again captured by \eqref{eq:NCellcov}. The Noether identities
\begin{align}
0 \ = \ \dsf_{\tA}\frF_{\tA} = \ell_1(\frF_{\tA}) + \ell_2(\frF_{\tA},A) = \dd_\star^{\tA}\frF_{\tA} = 
	\big(\dd_\star^{\tA}\big)^{2}\ast_\hodge  F_{\tA}^\star = [F_{\tA}^\star\starcom\ast_\hodge  F_{\tA}^\star]
\end{align}
may be checked directly using symmetry properties of the $d$-form $[F_{\tA}^\star\starcom\ast_\hodge  F_{\tA}^\star]\in\Omega^d(\FR^{1,d-1})$. 

The inner product on $V$ given by
\begin{align} \label{eq:NCYMpairing}
	\langle A ,\CA \rangle = \int_{\FR^{1,d-1}} \,
	A\wedge_\star \CA \qquad \mbox{and} \qquad
	\langle \lambda,\varLambda \rangle = \int_{\FR^{1,d-1}} \,
	\lambda \wedge_\star \varLambda 
\end{align}
defines a cyclic structure on the noncommutative Yang--Mills $L_\infty$-algebra, and the corresponding Maurer--Cartan action functional reproduces (after integration by parts) the standard noncommutative $\sU(1)$ Yang--Mills functional:
\begin{align}
S(A) = \frac{1}{2}\, \langle A, \ell_{1}(A) \rangle - \frac{1}{3!}\, \langle A, \ell_2 (A,A)\rangle -\frac{1}{4!}\,  \langle A,\ell_3 (A,A,A) \rangle = \frac{1}{2}\, \int_{\FR^{1,d-1}}\, F_{\tA}^\star \wedge_\star  \ast_\hodge  F_{\tA}^\star \ .
\end{align}

\subsection{Fuzzy field theories in the $L_\infty$-algebra formalism}
\label{sec:fuzzyNCLinfty}

A particularly well understood and tractable class of noncommutative gauge theories on D-branes wrapping curved submanifolds with a non-constant gauge flux is provided by D-branes in WZW models. These involve target spaces which are group manifolds $\sG$ with non-trivial NS--NS $3$-form fluxes, and they contain D-branes that wrap conjugacy classes of $\sG$ carrying non-constant $2$-form flux~\cite{Alekseev:1999bs}. Here we focus on the simplest example $\sG=\sSU(2)$ for illustration, where the conjugacy classes are $2$-spheres $\FS^2\subset \FS^3$ labelled by the spin $j$ of a representation of the $\frsu(2)_k$ current algebra at WZW level $k\in\RZ_{>0}$. In the semi-classical limit with $k\to\infty$, the space of open string ground states forms a finite-dimensional matrix algebra, in contrast to the infinite-dimensional flat space case from Section~\ref{sec:usualNCLinfty}, and the noncommutative matrix product $f\,g$ now plays the analogue role of the Moyal--Weyl star-product. 

The noncommutative gauge theory supported by the two-dimensional worldvolume of the D-brane is thus no longer a continuum field theory but rather a matrix model~\cite{Alekseev:2000fd}, called a `fuzzy' field theory, or alternatively a field theory on the `fuzzy sphere' which is a matrix approximation of the standard $2$-sphere $\FS^2$ obtained by quantizing conjugacy classes in $\sSU(2)$, and which is covariant under $\sSU(2)$. Being finite-dimensional systems, these noncommutative field theories are particularly well tailored to the formalism developed in Section~\ref{sec:Linfty}, as they avoid the many functional analytic complications brought in by continuum field theories.

Let us first explain how to set up a differential calculus on the fuzzy sphere suitable for the formulation of fuzzy field theories. The algebra of functions on the fuzzy sphere $\FS_N^2$ is defined by
\begin{align}\label{eq:fuzzyalgebra}
\CC_N := (j)\otimes(j)^* \simeq {\sf Mat}(N) \ ,
\end{align}
where $(j)$ denotes the irreducible spin $j=(N-1)/2$ representation of $\frsu(2)$ for an integer $N>1$ and $(j)^*$ denotes the dual representation; the isomorphism with $N{\times} N$ complex matrices follows from the fact that $(j)$ is $N$-dimensional. Fix a basis of the Lie algebra $\frsu(2)$ where the structure constants are given by the Levi--Civita symbol $\epsilon_{ab}{}^c$, for $a,b,c=1,2,3$, and denote by $\frac1{r_N}\,X_{a}\in\CC_N$ the generators of $\frsu(2)$ in the spin~$j$ representation, where \smash{$r _N:=\frac1{\sqrt{j\,(j+1)}}$}. Then the matrices $X_{a}$ generate the algebra $\CC_N$ and satisfy the relations
\begin{align}
[X_{a},X_b] = \ii\,r_N\,\epsilon_{ab}{}^c\, X_c \ , \quad \delta^{ab}\,X_{a}\,X_b=\unit \qquad \mbox{and} \qquad X_{a}^\ast = X_{a}
\end{align}
of the fuzzy unit sphere, where 
\begin{align}
[\lambda,\rho] := \lambda\,\rho - \rho\,\lambda
\end{align}
denotes the matrix commutator of $\lambda,\rho\in\CC_N$, which makes $\CC_N$ into a Lie algebra, and ${}^*$ denotes Hermitian conjugation.

The fuzzy $2$-sphere $\FS_N^2$ has a three-dimensional $\frsu(2)$-covariant differential calculus given by the Chevalley--Eilenberg algebra
\begin{align}
\Omega^\bullet(\CC_N) := \CC_N\otimes\midwedge^\bullet\frsu(2)^* 
\end{align}
of $\frsu(2)$ with coefficients in the fuzzy sphere algebra \eqref{eq:fuzzyalgebra}, which as an $\frsu(2)$-module decomposes into irreducible representations as
\begin{align}
\CC_N \simeq \bigoplus_{J=0}^{N-1} \, (J) \ .
\end{align}
Let $\vartheta^{a}\in\frsu(2)^*\subset\Omega^1(\CC_N)$ for $a=1,2,3$ be a compatible coframe for the $\CC_N$-module of $1$-forms defined by the dual of the chosen Lie algebra basis, with $\vartheta^{a}\wedge\vartheta^b = -\vartheta^b\wedge\vartheta^{a}$, which generates the entire differential calculus $\Omega^\bullet(\CC_N)$ and commutes with $\Omega^0(\CC_N)=\CC_N$. 
The Chevalley--Eilenberg differential $\dd_\ce$ is then defined by 
\begin{align}
\dd_\ce \lambda := \tfrac1{r_N} \, [X_{a},\lambda] \, \vartheta^{a}
\end{align}
for $\lambda\in\Omega^0(\CC_N)=\CC_N$, together with the Maurer--Cartan equations
\begin{align}
\dd_\ce\vartheta^{a} = -\tfrac\ii2\,\epsilon^{a}{}_{bc} \, \vartheta^b\wedge\vartheta^c \ ,
\end{align}
and extended to the whole of $\Omega^\bullet(\CC_N)$ by the graded Leibniz rule. We extend the matrix commutator on $\CC_N$ to arbitrary homogeneous differential forms $\alpha,\beta\in\Omega^\bullet(\CC_N)$ by
\begin{align}\label{eq:matrixcommforms}
[\alpha,\beta] := \alpha\wedge\beta - (-1)^{|\alpha|\,|\beta|} \, \beta\wedge\alpha \ ,
\end{align}
which makes $\big(\Omega^\bullet(\CC_N),[-,-],\dd_\ce\big)$ into a differential graded Lie algebra, enabling the construction of field theories on the fuzzy sphere $\FS_N^2$.

With this noncommutative differential calculus it is now straightforward to formulate noncommutative gauge symmetries in the standard way: The \emph{fuzzy gauge transformation} of a gauge field $A\in\Omega^1(\CC_N)$ by a gauge parameter $\lambda\in\Omega^0(\CC_N)$ is given by the usual formula
\begin{align}\label{eq:fuzzygauge}
\delta_\lambda^\ce A = \dd_\ce\lambda - [\lambda,A] \ ,
\end{align}
and using the by now familiar arguments one shows that they close to the Lie algebra $(\CC_N,[-,-])$:
\begin{align}\label{eq:fuzzyclosure}
\big[\delta_{\lambda_1}^\ce,\delta_{\lambda_2}^\ce\big]_\circ = \delta^\ce_{[\lambda_1,\lambda_2]} \ .
\end{align}
The field strength defined by
\begin{align}\label{eq:fieldstrengthfuzzy}
F_{\tA}^\ce := \dd_\ce A + \tfrac12\,[A,A] = \dd_\ce A + A\wedge A \ \in \ \Omega^2(\CC_N)
\end{align}
transforms covariantly under this gauge symmetry:
\begin{align}
\delta_\lambda^\ce F_{\tA}^\ce = -[\lambda,F_{\tA}^\ce] \ .
\end{align}

The low-energy effective field theory on a D-brane of type $j$ wrapping this fuzzy $2$-sphere was derived in~\cite{Alekseev:2000fd} as a noncommutative gauge theory whose action functional is a sum of a Yang--Mills term and a Chern--Simons term, each of which is individually invariant under the gauge transformations \eqref{eq:fuzzygauge}.\footnote{One can extend this description to stack of $n>1$ D-branes of type $j$ by considering the differential graded noncommutative algebra $\Omega^\bullet(\CC_N,\fru(n)):=\Omega^\bullet(\CC_N)\otimes\fru(n)$ of $\fru(n)$-valued forms on the fuzzy $2$-sphere, with the differential $\dd_\ce\otimes\unit$, and the (graded) Lie bracket extended to the matrix commutator on ${\sf Mat}(N)\otimes{\sf Mat}(n)$, and similarly for the normalised trace below.} The underlying $L_\infty$-algebra was described in~\cite{Blumenhagen:2018kwq} in a flat holonomic coframe. Here we give an alternative treatment which uses the more natural covariant coframe $\vartheta^{a}$ appropriate to a curved space; Yang--Mills and Chern--Simons gauge theories in this formulation were first discussed in~\cite{Grosse:2000gd}. The advantage of this description is that, like the flat space Moyal--Weyl type noncommutative gauge theories of Section~\ref{sec:usualNCLinfty}, the two theories can be treated separately and the $L_\infty$-algebra structure exactly parallels the case of ordinary nonabelian gauge theories. 

\subsubsection*{Chern--Simons theory on the fuzzy sphere}

The $L_\infty$-algebra formulation of Chern--Simons theory on $\FS_N^2$ in the coframe formalism was discussed in~\cite{Nguyen:2021rsa}. As usual it is based on a differential graded Lie algebra whose underlying graded vector space is
\begin{align}
V = \Omega^\bullet(\CC_N)
\end{align}
with the non-zero brackets
\begin{align}
\ell_1(\alpha) = \dd_\ce\alpha \qquad \mbox{and} \qquad \ell_2(\alpha,\beta) = -[\alpha,\beta]
\end{align}
for $\alpha,\beta\in V$. Notably, in this framework the underlying cochain complex is markedly different from the classical case, as its differential is the Chevalley--Eilenberg differential $\dd_\ce$, rather than the de~Rham differential $\dd$. This is not so surprising since the geometry of the fuzzy sphere $\FS_N^2$ involves a three-dimensional differential calculus on a fuzzy version of a two-dimensional space. This is due to the imposition of $\frsu(2)$-covariance, which requires an embedding of the classical $2$-sphere in three-dimensional flat space: the extra dimension comes from the pullback of the tangent bundle $T_{\FR^3}\to\FR^3$ to $\FS^2$, which decomposes as the Whitney sum $T_{\FR^3}|_{\FS^2} = T_{\FS^2}\oplus N_{\FS^2}$ with $N_{\FS^2}\to\FS^2$ the normal bundle to the embedding $\FS^2\hookrightarrow\FR^3$. Higher-dimensional covariant calculi are a common feature of a broad range of examples in noncommutative geometry.

The gauge symmetry \eqref{eq:fuzzygauge}--\eqref{eq:fuzzyclosure} and field equation $F_{\tA}^\ce=0$ arise in the standard way from these brackets, as does the usual Bianchi identity 
\begin{align}
\dd^{\tA}_\ce F_{\tA}^\ce := \dd_\ce F_{\tA}^\ce + [A,F_{\tA}^\ce] = 0 
\end{align}
from the corresponding Noether identity of the $L_\infty$-algebra. We introduce a cyclic structure in the usual way by the pairing
\begin{align}\label{eq:fuzzypairing}
\langle\alpha,\beta\rangle = \int_{\FS_N^2} \, \alpha\wedge\beta
\end{align}
of forms $\alpha,\beta\in\Omega^\bullet(\CC_N)$ in complementary degrees, where the integration of top-forms $\Omega^3(\CC_N)$ is defined via the normalized $N{\times} N$ matrix trace
\begin{align}
\int_{\FS_N^2} \, \lambda \ \vartheta^1\wedge\vartheta^2\wedge\vartheta^3 := \frac{4\pi}N \, \Tr(\lambda) \ ,
\end{align}
where $\lambda\in\CC_N$. The Maurer--Cartan action functional 
\begin{align}\label{eq:MCfuzzyCS}
S(A) = \frac12\,\langle A,\ell_1(A)\rangle - \frac1{3!}\,\langle A,\ell_2(A,A)\rangle = \frac12\,\int_{\FS_N^2} \, A\wedge \dd_\ce A +\frac2{3}\,A\wedge A\wedge A 
\end{align}
is just the standard Chern--Simons  functional on the fuzzy sphere as written in~\cite{Grosse:2000gd}. It has a natural global $\sSU(2)$ symmetry given by rotations of the coframe $\vartheta^{a}$, and in the continuum limit $N\to\infty$ it describes a $BF$-type topological field theory on the classical $2$-sphere $\FS^2$. We shall give a more thorough account of general $BF$ theories in the $L_\infty$-algebra formalism in Section~\ref{sec:BFtheory}.

We can write the action functional \eqref{eq:MCfuzzyCS} in a more transparent form of a matrix model. For this, we write the gauge field $A\in\Omega^1(\CC_N)$ explicitly in component form $A=A_{a}\,\vartheta^{a}$ with $A_{a}\in{\sf Mat}(N)$ and define $\LL_{a}A_b:=\frac1{r_N}\,[X_{a},A_b]$. Then we can write \eqref{eq:MCfuzzyCS} as
\begin{align}
S(A) = \tfrac{2\pi}N \, \Tr\big(\epsilon^{abc} \, {\tt CS}_{abc}\big) \ ,
\end{align}
where
\begin{align}
{\tt CS}_{abc} =\LL_{a} A_b \, A_c +\tfrac13\,A_{a}\,[A_b,A_c] -\tfrac\ii2 \, \epsilon_{ab}{}^d \, A_d \, A_c \ .
\end{align}
This is the form of the fuzzy Chern--Simons functional derived in~\cite{Alekseev:2000fd} from a worldsheet conformal field theory analysis.\footnote{This form of the Chern--Simons functional can be extended to any semi-simple Lie algebra $\frg$ of higher rank by replacing the Levi--Civita symbol $\epsilon_{ab}{}^c$ with the structure constants $f_{ab}{}^c$ of a basis of $\frg$, in which case it is part of the low-energy effective field theory on a D-brane wrapping conjugacy classes in the corresponding compact semi-simple Lie group $\sG$~\cite{Alekseev:2000fd}. However, the noncommutative geometric origin of the action functional in these extensions is not clear.}

\subsubsection*{Yang--Mills theory on the fuzzy sphere}

To define Yang--Mills theory on $\FS_N^2$ we need an analogue of the Hodge operator $\ast_\ce:\Omega^k(\CC_N)\to\Omega^{3-k}(\CC_N)$, which is defined on the compatible coframe of $\Omega^\bullet(\CC_N)$ by
\begin{align}
\ast_\ce\unit := \tfrac1{3!} \, \epsilon_{abc}\,\vartheta^{a}\wedge \vartheta^b\wedge\vartheta^c \qquad \mbox{and} \qquad \ast_\ce\vartheta^{a} := \tfrac12\,\epsilon^{a}{}_{bc}\,\vartheta^b\wedge\vartheta^c \ , 
\end{align}
together with involutivity $(\ast_\ce)^2 = \unit$. The underlying graded vector space of the $L_\infty$-algebra is again
\begin{align}
	V = \Omega^\bullet(\CC_N) \ ,
\end{align}
and, with similar notation as in Section~\ref{sec:usualNCLinfty}, the differential is given by
\begin{align}
	\ell_1(\lambda) = \dd_\ce\lambda \ , \quad \ell_1(A) = \dd_\ce \ast_\ce  \dd_\ce 
	A \qquad  
	\mbox{and} \qquad \ell_1(\CA) = \dd_\ce \CA   \ ,
\end{align} 
the 2-brackets are
\begin{align}
	\begin{split} 
		\ell_{2}(\lambda_1,\lambda_2)= -[\lambda_1,\lambda_2] \qquad & , \qquad
		\ell_2 (\lambda, A)=-[\lambda, A] \ , \\[4pt]
		\ell_{2}(\lambda, \CA) = -[\lambda,\CA] \qquad , \qquad
		\ell_2(\lambda,\varLambda)&=-[\lambda,\varLambda] \qquad , \qquad
		\ell_{2}(A, \CA)= -[A,\CA] \ , \\[4pt]
		\ell_{2}(A_1, A_2)= - \dd_\ce \ast_\ce  [A_1, A_2] - & [A_1,\ast_\ce  \dd_\ce A_2] -  [\ast_\ce  \dd_\ce A_1, A_2] \ , 
	\end{split}
\end{align}
and the non-zero 3-bracket is
\begin{align}
	\begin{split}
		\ell_3(A_1,A_2,A_3) = -\big[A_1, \ast_\ce [A_2, A_3]\big]-\big[A_2,\ast_\ce [A_1, A_3]\big] - \big[\ast_\ce [A_1, A_2], A_3\big] \ .
	\end{split}
\end{align}

The Maurer--Cartan equation $\frF_{\tA}=0$ leads to the usual covariant equation of motion given by $\dd_\ce^{\tA}\ast_\ce F_{\tA}^\ce=0$, and, with the same cyclic inner product \eqref{eq:fuzzypairing}, the Maurer--Cartan action functional
\begin{align}
S(A) = \frac{1}{2}\, \langle A, \ell_{1}(A) \rangle - \frac{1}{3!}\, \langle A, \ell_2 (A,A)\rangle -\frac{1}{4!}\,  \langle A,\ell_3 (A,A,A) \rangle = \frac{1}{2}\, \int_{\FS_N^2}\, F_{\tA}^\ce \wedge  \ast_\ce  F_{\tA}^\ce 
\end{align}
is precisely the $\sSU(2)$-symmetric Yang--Mills functional on the fuzzy sphere as presented in~\cite{Grosse:2000gd}. Written explicitly in component form $F_{\tA}^\ce=\frac12\,F_{ab}\,\vartheta^{a}\wedge\vartheta^b$ with $F_{ab}\in{\sf Mat}(N)$, this becomes explicitly a matrix model
\begin{align}
S(A) = \tfrac\pi N \, \Tr\big(F_{ab} \, F^{ab}\big) \ ,
\end{align}
where 
\begin{align}
F_{ab} = \LL_{a} A_b - \LL_b A_{a} + [A_{a},A_b] - \epsilon_{ab}{}^c \, A_c
\end{align}
and $F^{ab}:= \delta^{ac}\,\delta^{bd}\,F_{cd}$, which is precisely the form of the fuzzy Yang--Mills functional derived in~\cite{Alekseev:2000fd}.

\subsection{Working with non-trivial fluxes}
\label{sec:nontrivialflux}

The two special classes of noncommutative field theories which we have considered thus far in this section have the special feature that they each admit a natural noncommutative differential calculus, which enabled a straightforward adaptation of techniques from ordinary nonabelian gauge theories to reformulate them algebraically in the formalism of standard $L_\infty$-algebras, even in the curved space example of the fuzzy sphere. As discussed in Section~\ref{sec:Intro}, this will not be the case in general for curved D-branes and for curved backgrounds. For a non-constant Poisson bivector $\theta = \frac12\,\theta^{\mu\nu}\,\partial_\mu\wedge\partial_\nu$, the deformation quantization of the worldvolume $M$ is described by the Kontsevich star-product~\cite{Kontsevich:1997vb} which up to second order in the deformation parameter $\hbar$ can be written in the form
\begin{align}\label{eq:Kontstar}
f\circledast g = f\star g - \tfrac{\hbar^2}{12}\,\theta^{\mu\rho}\,\partial_\rho\theta^{\nu\lambda} \, \big(\partial_\mu\partial_\nu f \, \partial_\lambda g + \partial_\mu\partial_\nu g \, \partial_\lambda f\big) + O(\hbar^3)
\end{align}
for fields $f$ and $g$ on $M$, where $f\star g$ is given by the standard Moyal--Weyl formula \eqref{eq:MoyalWeylstarprod}. 

The usual de~Rham differential $\dd$ is evidently not a derivation of this star-product unless $\theta$ is constant. This also implies that the naive definition \eqref{eq:stargauge} of noncommutative gauge transformations no longer satisfies the gauge closure property \eqref{eq:stargaugeclosure}. Moreover, in the case that the pullback of the NS--NS $3$-form flux to the D-brane worldvolume is non-zero, the bivector $\theta$ defines a \emph{twisted} Poisson structure and the star-product \eqref{eq:Kontstar} is no longer associative; in this case the star-commutator $[-,-]_\circledast$ fails to define a Lie bracket on $\Omega^0(M)$.

In~\cite{Blumenhagen:2018kwq} a homotopy algebraic approach to constructing generic noncommutative gauge theories was proposed and dubbed the `$L_\infty$-bootstrap'. According to this prescription, one starts with the natural brackets
\begin{align}
\ell_1(\lambda) = \dd\lambda \qquad \mbox{and} \qquad \ell_2(\lambda_1,\lambda_2) = -[\lambda_1,\lambda_2]_\circledast 
\end{align}
on the vector space $V^0 = \Omega^0(M)$ of gauge parameters, and attempts to construct the rest of the $L_\infty$-algebra structure by consistently solving the homotopy Jacobi identities order by order in the deformation parameter $\hbar$. The approach of~\cite{Kupriyanov:2021cws,Kupriyanov:2021aet}, based on symplectic embeddings of (almost) Poisson structures, improves this perspective because it can in principle compute all orders expressions, which are sometimes asymptotic expansions of analytic functions known in closed form. However, while these approaches suggest novel algebraic constructions which go beyond earlier approaches, they suffer from a more serious problem: so far they have only been developed in detail for the semi-classical limit of a full noncommutative gauge theory, where the star-commutator is approximated by the underlying almost Poisson bracket. It is not currently clear how these methods can be extended to construct complete noncommutative gauge theories, beyond their semi-classical approximations.

In the remainder of this contribution we will explain an alternative homotopy algebraic construction of noncommutative field theories, which does not handle generic (almost) Poisson structures, but is suggestive of a more systematic and elegant approach to dealing with the general problems discussed above. Rather than attempting to fit noncommutative gauge symmetry into the standard structure of an $L_\infty$-algebra, we adapt the $L_\infty$-algebra itself by deforming it in a way which is compatible with the underlying noncommutative algebra of differential forms. This yields a new perspective on the implementation of general symmetries in noncommutative field theories while producing new deformations of the standard gauge theories, and it introduces a new type of homotopy algebraic structure which to our knowledge has not been previously discussed in the mathematics literature. 

\section{Braided gauge symmetry and braided $L_\infty$-algebras}
\label{sec:braidedLinfty}

In this section we come to the main topic of this paper. We shall introduce the notion of braided gauge symmetry using Drinfel'd twist deformation techniques, comparing it to the more conventional star-gauge symmetries that were discussed in Section~\ref{sec:usualNCLinfty}. After going back to our simple prototypical example of Chern--Simons gauge theory from Section~\ref{sec:whatisgauge} and describing its braided noncommutative deformation, we introduce braided $L_\infty$-algebras as the organising algebraic principle behind noncommutative field theories with braided gauge symmetry. Throughout we stress the novelties behind these `braided field theories' as compared to the standard noncommutative field theories, and we discuss a braided version of the BV formalism which should serve as a useful tool for the quantization of these theories.

\subsection{Closing noncommutative gauge transformations}

In Section~\ref{sec:nontrivialflux} we pointed out issues involved in attempting to close an algebra of  star-gauge transformations \eqref{eq:stargauge} for generic (almost) Poisson structures. Let us discuss a somewhat different but equally well-known closure issue with these noncommutative gauge symmetries, which will serve to motivate the constructions which follow later on in this section. 

As in Section~\ref{sec:usualNCLinfty}, consider a noncommutative field theory defined with the Moyal--Weyl star-product \eqref{eq:MoyalWeylstarprod} for a constant Poisson bivector $\theta$ on flat space $M=\FR^{1,d-1}$. Let $\frg$ be a matrix Lie algebra, and endow the exterior algebra $\Omega^\bullet(M,\frg)=\Omega^\bullet(M)\otimes\frg$ of $\frg$-valued forms with the graded star-commutator
\begin{align}\label{eq:starcommfrg}
[\alpha\starcom\beta]_\frg := \alpha\wedge_\star\beta - (-1)^{|\alpha|\,|\beta|} \, \beta\wedge_\star\alpha
\end{align}
for homogeneous differential forms $\alpha,\beta\in\Omega^\bullet(M,\frg)$, where here the star-products are composed with the matrix multiplication in $\frg$. Again we define the \emph{star-gauge transformation} of a gauge field $A\in\Omega^1(M,\frg)$ by a gauge parameter $\lambda\in\Omega^0(M,\frg)$ as the naive deformation of a classical gauge transformation:
\begin{align}
\delta_\lambda^\star A = \dd\lambda - [\lambda\starcom A]_\frg \ .
\end{align}

The problem now is that these star-gauge variations do not generally close on the Lie algebra $\frg$: whereas the commutator of two gauge variations still closes onto the star-commutator as \smash{$\big[\delta_{\lambda_1}^\star,\delta_{\lambda_2}^\star\big]_\circ = \delta_{[\lambda_1\starcom\lambda_2]_\frg}^\star$}, the star-commutator itself does not close:
\begin{align}
[\lambda_1\starcom\lambda_2]_\frg = \lambda_1\star\lambda_2 - \lambda_2\star\lambda_1 \ \notin \ \Omega^0(M,\frg) \ .
\end{align}
Moreover, the naive deformation of the field strength
\begin{align}\label{eq:naiveNCcurv}
F_{\tA}^\star = \dd A +\tfrac12\,[A\starcom A]_\frg = \dd A + A\wedge_\star A \ \notin \ \Omega^2(M,\frg)
\end{align}
is no longer valued in $\frg$.
In other words, the star-commutator \eqref{eq:starcommfrg} does not generally define a Lie bracket on the exterior algebra $\Omega^\bullet(M,\frg)$. An easy calculation in a basis of $\frg$ shows that the exception is when $\frg$ is a matrix Lie algebra which is also closed under the anticommutator of matrices, such as $\frg=\mathfrak{gl}(n)$ or $\frg=\fru(n)$; if the gauge theory is coupled to matter fields then those must further be restricted to the trivial, (anti-)fundamental or adjoint representations of $\frg$. This is a well-known issue in noncommutative gauge theory, noted originally in~\cite{Terashima:2000xq} and subsequently elucidated by~\cite{Bonora:2000td,Chaichian:2001mu}. We note that exactly the same problem arises in the fuzzy field theories from Section~\ref{sec:fuzzyNCLinfty}: the extension of the graded matrix commutator \eqref{eq:matrixcommforms} to ${\sf Mat}(N)\otimes\frg$ does not close in $\frg$ unless $\frg$ is also closed under anticommutators.

Let us briefly discuss the two most common workarounds of the problem of defining Moyal--Weyl star-gauge theories for generic  Lie algebras $\frg$, each of which has its own drawbacks. Firstly, one can notice that the star-gauge closure takes place not in the Lie algebra $\frg$ but rather in its universal enveloping algebra $\sU\frg$, which is the free unital algebra over $\FR$ generated by $\frg$ modulo the two-sided ideal generated by $X\,Y - Y\,X - [X,Y]_\frg$ for all $X,Y\in\frg$. This leads to the notion of \emph{enveloping algebra-valued gauge symmetry}~\cite{Jurco:2000ja,AschCast}, generated by extended gauge parameters $\lambda\in\Omega^0(M,\sU\frg)$ acting on extended gauge fields $A\in\Omega^1(M,\sU\frg)$. While this extension is appealing because it makes sense for arbitrary (not necessarily matrix) Lie algebras $\frg$, its drawback is that it introduces infinitely many new degrees of freedom for the gauge fields $A$ due to the infinite dimensionality of the algebra $\sU\frg$. These new fields have no clear meaning in the underlying  gauge theory at $\hbar=0$, and so these noncommutative gauge theories do not possess good classical limits.

Alternatively, one can use the \emph{Seiberg--Witten map}~\cite{Seiberg:1999vs}, which was originally found as an equivalence of low-energy effective descriptions of open strings in a constant $B$-field background in terms of ordinary as well as noncommutative gauge theories. This identifies the noncommutative and classical gauge orbits  through a field redefinition which maps ordinary gauge parameters and fields $(\lambda,a)$ to noncommutative gauge parameters and fields $\big(\hat A(a),\hat\Lambda(\lambda,a)\big)$ defined by
\begin{align}
\hat A(a+\delta_\lambda a) = \hat A(a) + \delta_{\hat\Lambda(\lambda,a)}^\star\hat A(a) \ .
\end{align}
If the gauge theory is coupled to matter fields, then there is a further map from ordinary matter fields $\phi$ to noncommutative matter fields $\hat\Phi(\phi,a)$. 
In this case no new degrees of freedom are introduced, while the noncommutative variables are functions of the corresponding ordinary variables and the background deformation. Interesting new interaction vertices appear  which enable the construction of noncommutative gauge theories~\cite{Jurco:2001rq}; this has been used extensively over the years in phenomenological explorations of noncommutative field theories (see e.g.~\cite{Szabo:2009tn} for a review). 

The Seiberg--Witten map has a clear geometric interpretation which enables its extension to arbitrary Poisson structures~\cite{Jurco:2001my,Jurco:2001kp} and even to twisted Poisson structures~\cite{Aschieri:2002fq,Mylonas:2012pg}. It can also be interpreted as an $L_\infty$-quasi-isomorphism of the underlying $L_\infty$-algebras~\cite{Blumenhagen:2018shf}. However, the drawback of this approach is that it does not produce the intrinsically noncommutative differential graded algebra underlying a noncommutative gauge theory, along the lines we anticipated in Section~\ref{sec:Intro}, but rather maps the problem into a non-local classical field theory.

In the following we review a more recent approach to the construction of noncommutative gauge theories which circumvents these problems: these field theories are defined analytically in closed form for arbitrary gauge algebras $\frg$ and matter field representations without the introduction of spurious degrees of freedom, and they possess good classical limits to the corresponding classical field theories. Their construction is based on a precise and systematic homotopy algebraic organising principle which avoids the sometimes cumbersome ordering ambiguities in definitions of interaction vertices due to noncommutativity of fields (e.g. $\Phi_1\star A\star\Phi_2\neq\Phi_1\star\Phi_2\star A$): the entire noncommutative field theory is uniquely defined by the structure of an underlying `braided $L_\infty$-algebra' which is constructed as an unambiguous and precise noncommutative deformation of the standard $L_\infty$-algebra underlying the classical field theory. 

\subsubsection*{Noncommutative diffeomorphisms}

To help motivate how the braided algebraic structure appears, let us bring gravity into the present discussion and attempt to analyse its noncommutative deformation through the methods just explained (recall the motivation from Section~\ref{sec:Intro}). The symmetries underlying general relativity are spacetime diffeomorphisms which are generated by vector fields in $\frv:=\sfGamma(TM)$, which is a Lie algebra with the usual Lie bracket of vector fields which we denote by $[-,-]_\frv$. In the Moyal--Weyl deformation, we define the star-bracket $[-\starcom-]_\frv$ in the enveloping algebra $\sU\frv$ by regarding the vector space $\sfGamma(TM)$ as a module over the noncommutative algebra of functions $\big(C^\infty(M,\FR),\star\big)$, with the holonomic frame on $M=\FR^{1,d-1}$ obeying $f\star\partial_\mu = f\cdot\partial_\mu$ and $\partial_\mu\star f=\partial_\mu f$ for $f\in C^\infty(M,\FR)$, and $\partial_\mu\star\partial_\nu = \partial_\mu\,\partial_\nu = \partial_\nu\star\partial_\mu$ in $\sU\frv$. As above, if $\xi_1,\xi_2\in\frv$ are vector fields, then an easy calculation in components reveals that the star-bracket $[\xi_1\starcom\xi_2]_\frv\in \sU\frv$ is no longer a vector field in general. However, no analogue of the Seiberg--Witten map is generally known for deformed diffeomorphisms, which are instead most naturally defined through Drinfel'd twist techniques~\cite{TwistApproach,Aschieri:2005zs}. We now proceed to review these techniques, and then demonstrate how they can be fruitfully applied to internal gauge symmetries as well.

\subsection{Drinfel'd twist deformation theory}
\label{sec:Drinfeldtwist}

We shall now briefly review the basics of the Drinfel'd twist deformation formalism, which is a method to consistenly deform symmetries (described by a Hopf algebra). Here we only recall the salient aspects that we will use explicitly in the following; see~\cite{MajidBook,BeggsMajidbook} for a more detailed and complete account. A lot of what we say in the remainder of this section can be applied in a much more general context, and in particular to incorporate the fuzzy field theories discussed in Section~\ref{sec:fuzzyNCLinfty} (see~\cite{Nguyen:2021rsa}). For the examples we treat in the present paper, however, we will always start from a continuum field theory on a manifold $M$ and suitably deform it analogously to the Moyal--Weyl type noncommutative field theories. Hence we start from the Lie algebra of vector fields $\frv:=\sfGamma(TM)$, which generate infinitesimal diffeomorphisms of $M$, and later on specialize to suitable subalgebras appropriate to the spacetime symmetries of a given field theory example at hand.

The universal enveloping algebra $\sU\frv$ of the Lie algebra $\frv$ is naturally a cocommutative Hopf algebra with coproduct $\Delta:\sU\frv\to \sU\frv\otimes \sU\frv$ defined on
generators by
\begin{align}
\begin{split}
\Delta(\xi)&= \xi\otimes 1 + 1\otimes \xi\qquad \mbox{and} \qquad
             \Delta(1)=1\otimes 1 \ , 
\end{split}
\end{align}
for all $\xi\in \frv$. The map $\Delta$ is
extended as an algebra homomorphism to all of $\sU\frv$ (analogously to \eqref{eq:DeltaV}). Cocommutativity is the property $\tau\circ\Delta=\Delta$, where
$\tau$ is the transposition which interchanges the factors in a tensor
product of vector spaces; this is equivalent to saying that $\sU\frv$ has a triangular structure with trivial
$\RR$-matrix $1\otimes1$. The Hopf algebra structure is completed by defining a counit
and antipode whose explicit forms will not be needed in this paper. These structure maps together satisfy a compatibility condition with the product and the unit $1$ of $\sU\frv$ as a unital algebra~\cite{MajidBook,BeggsMajidbook}.

There is a symmetric
monoidal category ${}_{\sU\frv}\CCM$ whose objects are left
$\sU\frv$-modules and whose morphisms are equivariant maps (see
e.g.~\cite{Barnes:2014ksa}), that is, linear maps which commute with the action of
$\frv=\sfGamma(TM)$ via the trivial coproduct $\Delta$. Since $\sU\frv$ is a cocommutative Hopf
algebra, the braiding isomorphism of ${}_{\sU\frv}\CCM$ is just the trivial
transposition $\tau$.
A \emph{$\sU\frv$-module algebra} is an algebra in the category
${}_{\sU\frv}\CCM$. By this we mean an associative algebra $(\CA,\mu)$ with a left
$\sU\frv$-action $\triangleright: \sU\frv\otimes \CA
\rightarrow \CA$ which is compatible with the algebra multiplication
via the coproduct $\Delta$, that is
\begin{align}
X\triangleright\mu(a\otimes b)
  = \mu\big(\Delta(X)\triangleright(a\otimes b)\big)
\end{align}
for all $X\in \sU\frv$ and $a,b\in\CA$, where $\mu:\CA\otimes\CA\to\CA$
is the product on $\CA$. In the following we will drop the symbol
$\triangleright$ to simplify the notation, and usually omit the adjective `left' with the understanding that only left modules are considered in this paper. 

\begin{example}\label{ex:modulealg}
In our concrete applications to field theory later on, $(\CA,\mu)$ will be, for instance, the space of functions $\CA=C^\infty(M,\FR)$ on the manifold $M$ with $\mu=\,\cdot\,$ the pointwise product, the space of differential forms $\CA=\Omega^\bullet(M)$ with $\mu=\wedge$ the exterior product, or more generally the space of tensor fields with the tensor product. These are all modules over the Lie
algebra of vector fields $\frv$, where the (left) action of $\xi\in\frv$ on $a\in\CA$ is given by the Lie derivative $\LL_\xi$ along $\xi$:
\begin{align}
\xi(a) := \LL_\xi a \ .
\end{align}
This makes
$(\CA,\mu)$ into a $\sU\frv$-module
algebra, by virtue of the Leibniz rule for the Lie derivative:
\begin{align}
\xi \big(\mu(a\otimes b)\big) = \LL_{\xi}\circ\mu(a\otimes b)=\mu(\LL_{\xi} a\otimes b) + \mu(a\otimes \LL_{\xi} b) = \mu \circ \Delta(\xi) (a\otimes b) \ .
\end{align}
This is of course just the basis of diffeomorphism symmetry: The algebras ($\CA,\mu)$ are covariant under the action
of the universal enveloping algebra of infinitesimal diffeomorphisms
$\sU\frv$.
\end{example}

If $V$ is a real vector space, we denote by $V[[\hbar]]$ the vector space of formal power series in a deformation parameter
$\hbar$ with coefficients in the complexification $V_\FC=V\otimes \FC$;\footnote{We should more accurately write $V_\FC[[\hbar]]$, but we drop the subscript ${}_\FC$ to simplify the notation, as no confusion should arise.} it is naturally a module over
$\FC[[\hbar]]$. If $V$ and $W$ are real vector spaces, then
$V[[\hbar]]\otimes W[[\hbar]]\simeq(V\otimes W)[[\hbar]]$.\footnote{To keep the notation simple, we do not distinguish between the appropriate topological tensor product needed for formal power series extensions and the usual tensor product of vector spaces over $\FC$.} With these conventions, we denote by
$\sU\frv[[\hbar]]$ the formal power series extension of the
cocommutative Hopf algebra $\sU\frv$, with operations extended linearly over $\FC$ and applied term by term to the coefficients of series.

\begin{definition}\label{def:twist}
A \emph{Drinfel'd twist} on a manifold $M$ is an invertible element
$\CF\in \sU\frv[[\hbar]]\otimes \sU\frv[[\hbar]]$ satisfying the cocycle
condition
\begin{align}\label{eq:twistcocycle}
\CF_{12}\,(\Delta\otimes \unit)\CF=\CF_{23}\,(\unit\otimes \Delta)\CF \ ,
\end{align}
where $\CF_{12}=\CF\otimes 1$ and $\CF_{23}=1\otimes \CF$, together
with the normalization condition $\CF=1\otimes 1 + O(\hbar)$.
\end{definition}
We write the power series expansion of the twist as $\CF=:\sff^{k}\otimes
\sff_{k}\in \sU\frv[[\hbar]]\otimes \sU\frv[[\hbar]]$, with the sum over
$k$ understood, and likewise for the inverse twist $\CF^{-1}=: \bar{\sff}^{k}\otimes
\bar{\sff}_{k}\in \sU\frv[[\hbar]]\otimes \sU\frv[[\hbar]]$ which satisfies a
 condition similar to \eqref{eq:twistcocycle} that can be expressed as
\begin{align}\label{eq:invcocycle}
\bar\sff^k\otimes\Delta(\bar\sff_k)\,(\bar\sff^l\otimes\bar\sff_l) = \Delta(\bar\sff^k)\,(\bar\sff^l\otimes\bar\sff_l)\otimes\bar\sff_k \ .
\end{align}
The normalization condition ensures that the twist deformations induced by $\CF$ below have good classical limits at $\hbar=0$.

A Drinfel'd twist $\CF$ defines a new Hopf algebra structure on the
universal enveloping algebra $\sU\frv[[\hbar]]$, which we denote by
$\sU_\tCF\frv$. As algebras $\sU_\tCF\frv=\sU\frv[[\hbar]]$, while the new coproduct $\Delta_\tCF$ is given by
\begin{align}\label{eq:DeltaCF}
\Delta_{\tCF}(X):= \CF \, \Delta(X) \, \CF^{-1} \ ,
\end{align}
for all $X\in \sU\frv[[\hbar]]$. 
This new Hopf algebra is not cocommutative in general,
but the cocommutativity is controlled up to a
\emph{braiding} given by the invertible \emph{universal $\RR$-matrix} $\RR\in
\sU\frv[[\hbar]]\otimes \sU\frv[[\hbar]]$ induced by the twist as
\begin{align}
\RR=\CF_{21}\, \CF^{-1}=:\sfR^k\otimes\sfR_k \ ,
\end{align}
where $\CF_{21}=\tau(\CF)=\sff_k\otimes\sff^k$ is the twist with its legs
swapped. Explicitly
\begin{align}
\tau\circ\Delta_{\tCF}(X)=\RR\,\Delta_{\tCF}(X)\,\RR^{-1} \ .
\end{align}
It is easy to see that the $\RR$-matrix is triangular, that is
\begin{align}
  \RR_{21} = \RR^{-1} = \sfR_k\otimes\sfR^k \ ,
\end{align}
  and moreover that 
\begin{align}
(\Delta_{\tCF}\otimes \unit) \RR = \RR_{13}\, \RR_{23} \qquad \mbox{and} \qquad
(\unit\otimes \Delta_{\tCF})\RR= \RR_{13}\, \RR_{12} \ , 
\end{align}
where $\RR_{13}=\sfR^k\otimes 1 \otimes\sfR_k$, or equivalently
\begin{align}\label{eq:Rmatrixidsw}
\Delta_\tCF(\sfR^{k})\otimes \sfR_{k}= \sfR^{l}\otimes \sfR^{k}\otimes
  \sfR_{l}\, \sfR_{k} \qquad \mbox{and} \qquad \sfR^{k}\otimes
  \Delta_\tCF({\sfR_{k}}) =
  \sfR^{l}\,\sfR^{k}\otimes \sfR_{k} \otimes \sfR_{l} \ .
\end{align}
These identities tell us that passing an element ``at once'' over a pair of elements in a triple tensor product $\sU\frv[[\hbar]]\otimes \sU\frv[[\hbar]]\otimes\sU\frv[[\hbar]]$ is the same as passing successively over ``each one'' individually.

Drinfel'd twist deformation quantization consists in twisting the enveloping
Hopf algebra $\sU\frv$ to a non-cocommutative Hopf algebra $\sU_\tCF\frv$,
while simultaneously twisting all of its modules~\cite{SpringerBook}. A Drinfel'd twist $\CF$ defines a symmetric monoidal
category ${}_{\sU_\tCF\frv}\CCM$ of left
$\sU_\tCF\frv$-modules and equivariant maps, that is, linear maps which
commute with the action of $\sfGamma(TM)$ via the
twisted coproduct $\Delta_\tCF$. The category ${}_{\sU_\tCF\frv}\CCM$ is functorially equivalent to ${}_{\sU\frv}\CCM$~\cite{Barnes:2014ksa}. The braiding isomorphism $\tau_{\tCR}$ of
${}_{\sU_\tCF\frv}\CCM$ is now non-trivial and given by composing the transposition
$\tau$ with the action of the inverse $\RR^{-1}$ of the universal $\RR$-matrix. Since $\RR$ is
triangular, $\RR_{21}=\RR^{-1}$, the braiding is symmetric,
i.e. the braiding isomorphism $\tau_{\tCR}$ squares to the identity
morphism.

If $(\CA,\mu)$ is a (left) $\sU\frv$-module algebra, then we
can deform the product $\mu$ on $\CA$ by precomposing it with the
inverse of the twist
$\CF$ to get a new product
\begin{align}\label{eq:mustar}
\mu_\star (a\otimes b) = \mu\circ\CF^{-1}(a\otimes b) =
  \mu\big(\bar\sff^k(a)\otimes\bar\sff_k(b)\big) \ ,
\end{align}
for $a,b\in\CA$, where on the right-hand side we extend $\mu$ (linearly over $\FC$) to
$\CA[[\hbar]]\otimes\CA[[\hbar]] \simeq(\CA\otimes\CA)[[\hbar]]$ by
applying it term by term to the coefficients of a formal power series.
The cocycle condition \eqref{eq:twistcocycle} on $\CF$ guarantees that this
produces an associative star-product $\mu_\star$ on $\CA[[\hbar]]$, and it generally defines a
noncommutative $\sU_\tCF\frv$-module algebra $(\CA[[\hbar]],\mu_\star)$,
that is, an algebra 
in the category of $\sU_\tCF\frv$-modules, because it carries a representation of the twisted Hopf algebra $\sU_\tCF\frv$:
\begin{align}
\begin{split}
X\big(\mu_\star(a\otimes b)\big) &= X\big(\mu\circ\CF^{-1}(a\otimes b)\big) \\[4pt]
&=\mu\circ\Delta(X)\big(\CF^{-1}(a\otimes b)\big) \\[4pt]
&=\mu\circ\CF^{-1}\big(\Delta_\tCF(X)(a\otimes b)\big) = \mu_\star\big(\Delta_\tCF(X)(a\otimes b)\big) \ ,
\end{split}
\end{align}
for all $X\in \sU\frv$ and $a,b\in\CA$.

If the algebra $(\CA,\mu)$ is commutative, i.e. $\mu\circ\tau=\mu$,
then $(\CA[[\hbar]],\mu_\star)$ is \emph{braided commutative}, i.e.~$\mu_\star\circ\tau_{\tCR}=\mu_\star$. The noncommutativity in this case is controlled by the triangular $\RR$-matrix as
\begin{align}
\mu_\star(a\otimes b) = \mu_\star\big(\sfR_k(b)\otimes\sfR^k(a)\big) \ ,
\end{align}
which is easily seen by recalling that $\RR=\CF_{21}\,
\CF^{-1}$. 

Let us now illustrate the formalism above by giving some concrete examples of Drinfel'd twists which commonly appear in noncommutative field theory.

\begin{example}\label{ex:MWtwist}
The \emph{Moyal--Weyl twist} on $M=\FR^{1,d-1}$ is given by
\begin{align}\label{eq:MWtwist}
\begin{split}
\CF &=
  \exp\big(-\tfrac{\mathrm{i}\,\hbar}2\,\theta^{\mu\nu}\,\partial_\mu\otimes\partial_\nu\big) \\[4pt]
  &=1\otimes1 + \sum_{n=1}^\infty\, \Big(-\frac{\ii\,\hbar}2\Big)^n \, \frac1{n!} \, \theta^{\mu_1\nu_1}\cdots\theta^{\mu_n\nu_n} \, \partial_{\mu_1}\cdots\partial_{\mu_n} \, \otimes \,  \partial_{\nu_1}\cdots\partial_{\nu_n}=:\sff^k\otimes\sff_{ k}
  \ ,
\end{split}
\end{align}
where $(\theta^{\mu\nu})$ is a constant $d{\times}d$ antisymmetric real-valued
matrix. This twist is built on the enveloping Hopf algebra $\sU\FR^{1,d-1}\subset\sU\sfGamma(T\FR^{1,d-1})$ of
the abelian Lie algebra of infinitesimal spacetime translations. In this case $\CF_{21}=\CF^{-1}$ so that the universal
$\RR$-matrix is given by
\begin{align}
\RR = \CF^{-2} =
  \exp\big(\,\mathrm{i}\,\hbar\,\theta^{\mu\nu}\,\partial_\mu\otimes\partial_\nu\big) =: \sfR^k\otimes\sfR_k
  \ .
\end{align}

The twist deformation $\big(C^\infty(\FR^{1,d-1})[[\hbar]],\mu_\star\big)$ of the algebra of functions on $\FR^{1,d-1}$ gives precisely the Moyal--Weyl star-product \eqref{eq:MoyalWeylstarprod}. More generally, the twist deformation $\big(\Omega^\bullet(\FR^{1,d-1})[[\hbar]], \wedge_\star\big)$ of the exterior algebra of differential forms on $\FR^{1,d-1}$ gives the noncommutative differential calculus discussed in Section~\ref{sec:usualNCLinfty}; for the holonomic coframe  $\LL_{\partial_\nu}\dd x^\mu=0$, so that indeed
$\dd x^\mu\wedge_\star\dd x^\nu = \dd x^\mu\wedge \dd x^\nu $ and $\lambda\star\dd x^\mu = \dd x^\mu\star \lambda = \lambda\cdot\dd x^\mu $ for $\lambda\in\Omega^0(\FR^{1,d-1})$.\footnote{For these and other considerations in this paper, the Cartan formula $\LL_\xi=\dd\circ\imath_\xi + \imath_\xi\circ\dd$ for the action of the Lie derivative on forms is very useful, where $\imath_\xi$ is the interior multiplication with the vector field $\xi\in\frv$.} Similarly, for the holonomic frame $\LL_{\partial_\mu}\partial_\nu = [\partial_\mu,\partial_\nu]_\frv = 0$, so that indeed $\lambda\star\partial_\mu = \lambda\cdot\partial_\mu$.

The Moyal--Weyl twist \eqref{eq:MWtwist} has a natural extension from $\FR^{1,d-1}$ to any manifold $M$ with a non-constant Poisson bivector $\theta$ whose components in a local holonomic frame satisfy
\begin{align}
\theta^{\mu\rho}\,\partial_\rho\theta^{\nu\lambda} = 0 \ ,
\end{align}
see \eqref{eq:Kontstar}. An important class of examples arises in the noncommutative gauge theories on D-branes in non-geometric T-duals to string theory compactifications with duality twists, see e.g.~\cite{Hull:2019iuy,Aschieri:2020uqp}.

\end{example}

\begin{example}\label{ex:angulartwist}
The \emph{angular twist} on $M=\FR^{1,d-1}$ is given by~\cite{DimitrijevicCiric:2018blz}
\begin{align}
\begin{split}
\CF &= \exp\big(-\tfrac{\mathrm{i}\,\hbar}2\,(\partial_0\otimes\sM_{\theta} - \sM_{\theta}\otimes\partial_0)\big)  \\[4pt]
&= 1\otimes1 + \sum_{n=1}^\infty \, \Big(-\frac{\ii\,\hbar}2\Big)^n  \ \sum_{l=0}^n \, \frac{(-1)^l}{l! \, (n-l)!} \, (\partial_0)^{n-l}\, (\sM_{\theta})^l \, \otimes \, (\partial_0)^l\, (\sM_{\theta})^{n-l} =: \sff^k\otimes\sff_k \ ,
\end{split}
\end{align}
where 
\begin{align}
\sM_\theta:= \theta^{ij}\,\sM_{ij} \qquad \mbox{with} \quad \sM_{ij} = x_i\,\partial_j - x_j\,\partial_i
\end{align}
for a constant $(d{-}1){\times}(d{-}1)$ antisymmetric real matrix $(\theta^{ij})$ and $i,j=1,\dots,d-1$. Here
we denote coordinates on $\FR^{1,d-1}$ by $(x^\mu) = \big(x^0,(x^i)\big) = (x^0,x^1,\dots,x^{d-1})$ with $x^0$ the time coordinate. This twist is built on the enveloping Hopf subalgebra of $\sU\sfGamma(T\FR^{1,d-1})$ generated by the direct product Lie algebra $\FR\times\mathfrak{so}(d-1)$ of infinitesimal time translations and spatial rotations. The corresponding triangular $\RR$-matrix is again given by $\RR=\CF^{-2}$. 

For the corresponding twist deformation $\big(\Omega^\bullet(\FR^{1,d-1})[[\hbar]], \wedge_\star\big)$ of the exterior algebra, the holonomic coframe again has undeformed exterior products $\dd x^\mu\wedge_\star \dd x^\nu = \dd x^\mu\wedge\dd x^\nu$. To describe the deformation of the $C^\infty(\FR^{1,d-1})$-bimodule structure it is convenient to perform a spatial rotation of the coordinates which brings the matrix $(\theta^{ij})$ into its Jordan canonical form with non-zero skew-eigenvalues $\theta^{a}$ for $a=1,\dots,r$, where $2r$ is the rank of $(\theta^{ij})$. Then
\begin{align}
\begin{split}
\dd x^{2a-1}\star\lambda &= \cos(\ii\,\hbar\,\theta^{a}\,\partial_0)\lambda\star \dd x^{2a-1} + \sin(\ii\,\hbar\,\theta^{a}\,\partial_0)\lambda\star\dd x^{2a} \ , \\[4pt]
\dd x^{2a}\star\lambda &= -\sin(\ii\,\hbar\,\theta^{a}\,\partial_0)\lambda\star \dd x^{2a-1} + \cos(\ii\,\hbar\,\theta^{a}\,\partial_0)\lambda\star\dd x^{2a} \ ,
\end{split}
\end{align}
for $a=1,\dots,r$ and $\lambda\in\Omega^0(\FR^{1,d-1})$, 
whereas $\dd x^0\star\lambda=\lambda\star\dd x^0$ and $\dd x^i\star\lambda = \lambda\star\dd x^i$ for $i=2r+1,\dots,d-1$.
\end{example}

\begin{example}\label{ex:abeliantwist}
Both the Moyal--Weyl twist of Example~\ref{ex:MWtwist} and the angular twist of Example~\ref{ex:angulartwist} are special instances of a more general class of twists called \emph{abelian twists}~\cite{Aschieri:2005zs}. These can be defined on \emph{any} manifold $M$ and are given by
\begin{align}
\begin{split}
\CF &= \exp\big(-\tfrac{\mathrm{i}\,\hbar}2\,\theta^{ab}\,X_{a}\otimes X_b\big) \\[4pt]
  &=1\otimes1 + \sum_{n=1}^\infty\, \Big(-\frac{\ii\,\hbar}2\Big)^n \, \frac1{n!} \, \theta^{a_1b_1}\cdots\theta^{a_nb_n} \, X_{a_1}\cdots X_{a_n} \, \otimes \,  X_{b_1}\cdots X_{b_n}=:\sff^k\otimes\sff_{ k}
  \ ,
\end{split}
\end{align}
where $(\theta^{ab})$ is a constant  antisymmetric real-valued
matrix and $\{X_{a}\}\subset\sfGamma(TM)$ is a set of mutually commuting vector fields, $[X_{a},X_b]_\frv=0$. These twists are built from an abelian subalgebra of the Lie algebra $\frv$ of vector fields on $M$. For such twists the universal $\RR$-matrix is always given as $\RR=\CF^{-2}$. They are a common choice in noncommutative geometry and field theory because they share many of the special simplifying properties enjoyed by the Moyal--Weyl twist. In particular, in an open neighbourhood of every point of $M$ outside a set of measure zero, an abelian twist is equivalent to a Moyal--Weyl twist via a suitable change of local frame~\cite{Aschieri:2009qh}. This is useful for practical considerations, as it means that the legs of the twist $\CF$ commute with the local basis vector fields in almost every open set.
\end{example}

\begin{example}\label{ex:Jordantwist}
Twists need not only be associated to commuting vector fields. As an example of a Drinfel'd twist which is \emph{not} an abelian twist, consider the \emph{Jordanian twist} on $M=\FR^{1,d-1}$ given by~\cite{Aschieri:2017ost}
\begin{align}
\begin{split}
\CF = \exp\big(-\sD \otimes \log(1-\ii\,\hbar\,\partial_0)\big) = 1\otimes1 + \sum_{n=1}^\infty \, \frac{(\ii\,\hbar)^n}{n!} \, \sD^{\underline n} \, \otimes \, (\partial_0)^n =: \sff^k\otimes\sff_k \ ,
\end{split}
\end{align}
where 
\begin{align}
\sD:=x^\mu\,\partial_\mu \ ,
\end{align}
and the lower factorial $X^{\underline n}\in\sU\frv$ of a vector field $X\in\frv$ for $n\geq1 $ is defined by
\begin{align}
X^{\underline n} = X\,(X-1\big)\cdots\big(X-(n-1)\big) \ .
\end{align}
This is built on the enveloping Hopf algebra of the semi-direct product Lie algebra $\FR\ltimes\FR$ of infinitesimal dilatations and time translations. The corresponding universal $\RR$-matrix is given by
\begin{align}
\begin{split}
\RR &= \CF_{21} \, \CF^{-1} = \exp\big(-\log(1-\ii\,\hbar\,\partial_0)\otimes \sD\big) \exp\big(\sD\otimes \log(1-\ii\,\hbar\,\partial_0)\big) \\[4pt]
&= 1\otimes1 + \sum_{n=1}^\infty \, (-\ii\,\hbar)^n  \ \sum_{l=0}^n \, \frac{(-1)^l}{l! \, (n-l)!} \, (\partial_0)^{n-l}\, \sD^{\underline l} \, \otimes \, (\partial_0)^l\, \sD^{\underline{n-l}}
=: \sfR^k\otimes\sfR_k \ .
\end{split}
\end{align}

The holonomic coframe $\dd x^\mu$ of the deformed exterior algebra $\big(\Omega^\bullet(\FR^{1,d-1})[[\hbar]], \wedge_\star\big)$ once again has trivial star-exterior products $\dd x^\mu \wedge_\star\dd x^\nu = \dd x^\mu\wedge \dd x^\nu$. The $C^\infty(\FR^{1,d-1})$-bimodule structure is non-trivial and given by
\begin{align}
\lambda\star\dd x^\mu = \dd x^\mu\star (1-\ii\,\hbar\,\partial_0)^{-1}\lambda \ ,
\end{align}
for all $\lambda\in\Omega^0(\FR^{1,d-1})$. 

These twists appear in $\kappa$-deformed field theories on Minkowski spacetimes which are covariant under the action of a  $\kappa$-Poincar\'e quantum group~\cite{Aschieri:2017ost}. Jordanian twists can be more generally associated to the Borel subalgebra of any Lie algebra $\frg\subset\frv=\sfGamma(TM)$~\cite{Aschieri:2005zs}. 
By further combining these with the abelian twists of Example~\ref{ex:abeliantwist},  we obtain more general Drinfel'd twists which provide realizations of the extended Jordanian twists of~\cite{Kulish:1998be}. Such twists arise in the holographic dual gauge theories to backgrounds obtained via Yang--Baxter deformations of string theory in $\mathbbm{AdS}_5{\times}\FS^5$~\cite{vanTongeren:2015uha}.
\end{example}

\subsection{Braided gauge symmetry and noncommutative kinematics}
\label{sec:braidedgaugesym}

Let us now spell out the kinematical ingredients of braided noncommutative gauge symmetry, incorporating dynamics later on. Let $\CF\in\sU\frv[[\hbar]]\otimes\sU\frv[[\hbar]]$ be any Drinfel'd twist on a manifold $M$, and let $\frg$ be any Lie algebra. As in Section~\ref{sec:whatisgauge}, let $\Omega^\bullet(M,\frg)=\Omega^\bullet(M)\otimes\frg$ be the exterior algebra of $\frg$-valued differential forms on $M$ with graded Lie bracket $[-,-]_\frg$ given by the tensor product of exterior multiplication with the Lie bracket of $\frg$. The graded Lie algebra $\big(\Omega^\bullet(M,\frg),[-,-]_\frg\big)$ is naturally a $\sU\frv$-module with $\sU\frv$ acting trivially on $\frg$ and via the Lie derivative on $\Omega^\bullet(M)$ (see Example~\ref{ex:modulealg}), that is, it is a graded Lie algebra in the category ${}_{\sU\frv}\CCM$. 

Drinfel'd twist deformation then produces a $\sU_\tCF\frv$-module $\big(\Omega^\bullet(M,\frg)[[\hbar]],[-,-]_\frg^\star\big)$ with the star-bracket
\begin{align}
[\alpha,\beta]_\frg^\star:= [-,-]_\frg\circ\CF^{-1}(\alpha\otimes \beta) = [\bar\sff^k(\alpha) , \bar\sff_k(\beta)]_\frg \ .
\end{align}
This bracket makes $\Omega^\bullet(M,\frg)[[\hbar]]$ into a \emph{braided} (or \emph{quantum}) \emph{graded Lie algebra} in the sense of~\cite{Woronowicz:1989rt,Majid:1993yp}, that is, a graded Lie algebra in the category
${}_{\sU_\tCF\frv}\CCM$~\cite{Barnes:2015uxa}. It is braided graded antisymmetric: 
\begin{align}
[\alpha_{1},\alpha_{2}]_\frg^{\star}=-(-1)^{|\alpha_1|\,|\alpha_2|} \,
  [\sfR_{k}(\alpha_{2}),\sfR^{k}(\alpha_{1})]_\frg^\star 
  \ ,
\end{align}
and satisfies the braided graded Jacobi identity:
\begin{align}\label{eq:braidedJacobiforms}
[\alpha_{1},[\alpha_{2},\alpha_{3}]_\frg^{\star}]_\frg^{\star}
  =[[\alpha_{1},\alpha_{2}]_\frg^{\star},\alpha_{3}]_\frg^{\star} +
  (-1)^{|\alpha_1|\,|\alpha_2|}\,
  [\sfR_{k}(\alpha_{2}),[\sfR^{k}(\alpha_{1}) ,\alpha_{3}]_\frg^{\star}]_\frg^{\star}
  \ ,
\end{align}
for all homogeneous $\alpha_{1},\alpha_{2},\alpha_{3} \in
\Omega^\bullet(M,\frg)$. This braided graded Lie algebra has a natural braided Lie subalgebra $\big(\Omega^0(M,\frg)[[\hbar]],[-,-]_\frg^\star\big)$ which will play an important role below.

It is important to note that this definition makes sense for \emph{any} Lie algebra $\frg$. In the special instance that $\frg$ is a matrix Lie algebra, the braided commutator does \emph{not} coincide with the graded star-commutator \eqref{eq:starcommfrg}:
\begin{align}\label{eq:braidedcomm}
[\alpha,\beta]_\frg^\star=\alpha\wedge_\star\beta - (-1)^{|\alpha|\,|\beta|} \,  \sfR_k(\beta)\wedge_\star\sfR^k(\alpha) \ \neq \ [\alpha\starcom\beta]_{\frg} \ ,
\end{align}
where we used $\CF_{21}^{-1}=\CF^{-1}\,\RR^{-1}=\CF^{-1}\,\RR_{21}$. In particular, if $\frg$ is an abelian Lie algebra, then the braided commutator \eqref{eq:braidedcomm} always vanishes on $\Omega^0(M,\frg)[[\hbar]]$, while the star-commutator \eqref{eq:starcommfrg} does not. 

Recall that the de~Rham differential $\dd:\Omega^\bullet(M)\to\Omega^\bullet(M)$ makes $\big(\Omega^\bullet(M,\frg),[-,-]_\frg\big)$ into a differential graded Lie algebra, as a consequence of the Leibniz rule on the underlying exterior algebra. By the Cartan structure equations, $\dd\circ\LL_\xi=\LL_\xi\circ\dd$ for all $\xi\in\frv$, so the exterior derivative commutes with the Lie derivatives that enter into the definition of the star-exterior product via the inverse twist $\CF^{-1}$. It follows that the standard undeformed de~Rham differential is also an ordinary graded derivation of the underlying star-exterior algebra $\big(\Omega^\bullet(M)[[\hbar]],\wedge_\star\big)$:
\begin{align}
\dd(\alpha\wedge_{\star} \beta) = \dd\alpha\wedge_\star\beta +
  (-1)^{|\alpha|}\,\alpha\wedge_\star\dd\beta \ .
\end{align}
This makes $\big(\Omega^\bullet(M,\frg)[[\hbar]],[-,-]_\frg^\star,\dd\big)$ into a differential graded braided Lie algebra, and provides a natural noncommutative differential calculus for the construction of field theories on $M$.

We are now ready to define notions of braided gauge fields and matter fields. The idea is to deform representations of the Lie algebra $\frg$ in which fields transform to braided representations of $\frg$, of which there are two types, regarded as acting from the `left' and from the `right'. The \emph{left} and \emph{right braided gauge transformations} of a gauge field $A\in\Omega^1(M,\frg)$ by a gauge parameter $\lambda\in\Omega^0(M,\frg)$ are given respectively by twisting the action of a classical gauge variation $\delta_\lambda$ as 
\begin{align}\label{eq:braidedvarA}
\begin{split}
{}^\lact\delta_\lambda^{\br}A &:= \delta_{\bar\sff^k(\lambda)}\bar\sff_k(A) =\dd\lambda-[\lambda,A]_\frg^\star \ , \\[4pt] {}^\ract\delta_\lambda^{\br}A &:= {}^\lact\delta_{\sfR_k(\lambda)}^\br\sfR^k(A)=\delta_{\bar\sff_k(\lambda)}\bar\sff^k(A)=\dd\lambda+[A,\lambda]_\frg^\star \ ,
\end{split}
\end{align}
where in the second line we again used $\CF^{-1}\,\RR_{21} = \CF_{21}^{-1}$.
Since $[A,\lambda]_\frg^\star = - [\sfR_k(\lambda),\sfR^k(A)]_\frg^\star$, these are different transformations. In contrast, star-gauge transformations do not see this left/right distinction, as $[A\starcom\lambda]_\frg = -[\lambda\starcom A]_\frg$. 

Similarly, if $\phi\in\Omega^p(M,W)$ is a $p$-form matter field in a linear representation $W$ of $\frg$, then its left and right gauge transformations are defined respectively by
\begin{align}\label{eq:braidedvarphi}
{}^\lact\delta_\lambda^{\br} \phi=-\lambda\star \phi \qquad \mbox{and} \qquad {}^\ract\delta_\lambda^{\br} \phi=-\sfR_k(\lambda)\star \sfR^k(\phi) \ ,
\end{align} 
where here $\lambda\star\phi := \bar\sff^k(\lambda)\cdot\bar\sff_k(\phi)$ is given by the tensor product of multiplication of forms by functions and the $\frg$-action on $W$, and $\sfR_k(\lambda)\star \sfR^k(\phi) = \bar\sff_k(\lambda)\cdot\bar\sff^k(\phi)$. As left and right braided gauge symmetries behave in a completely analogous way~\cite{Ciric:2021rhi}, we will consider only left variations from now on and drop the superscript ${}^\lact$ from the notation for simplicity; we will usually also omit the adjective `left'.

By setting $\delta_\lambda^\br\big|_{\Omega^0(M,\frg)} = 0$ and $\CV:=\Omega^\bullet(M,\frg)[[\hbar]]\oplus\Omega^\bullet(M,W)[[\hbar]]$, a braided gauge transformation is extended to a map \smash{$\delta_\lambda^\br:\CV\to \CV$} of bidegree~$(0,0)$ by commuting with arbitrary maps $\CV\to\CV$ and as a braided derivation of operations $\mu_\star$ defined on $\CV\otimes \CV$ in the category ${}_{\sU_\tCF\frv}\CCM$, that is, it obeys the \emph{braided Leibniz rule}
\begin{align}\label{eq:braidedLeibniz}
\delta_\lambda^\br\mu_\star(\upsilon_1\otimes \upsilon_2) = \mu_\star\big(\delta_\lambda^\br\upsilon_1\otimes\upsilon_2\big) + \mu_\star\big(\sfR_k(\upsilon_1)\otimes\delta^\br_{\sfR^k(\lambda)}\upsilon_2\big)
\end{align}
for all $\upsilon_1,\upsilon_2\in\CV$. Generally, only braided (not ordinary) derivations are compatible with braided symmetry properties of such maps $\mu_\star$~\cite{Ciric:2021rhi}. This rule also defines the tensor products of braided representations. They furthermore close under the \emph{braided commutator}
\begin{align}\label{eq:braidedclose}
\big[\delta_{\lambda_1}^\br,\delta_{\lambda_2}^\br\big]_\circ^\star  :=  \delta_{\lambda_1}^\br\circ\delta_{\lambda_2}^\br  - \delta_{\sfR_k(\lambda_2)}^\br\circ\delta^\br_{\sfR^k(\lambda_1)} = \delta^\br_{[\lambda_1,\lambda_2]_\frg^\star} \ .
\end{align}
The braided commutator generally makes the braided derivations of a $\sU_\tCF\frv$-module algebra into a braided Lie algebra~\cite{Barnes:2015uxa}. That \eqref{eq:braidedvarA} and \eqref{eq:braidedvarphi} give the representation \eqref{eq:braidedclose} of the braided Lie algebra $\big(\Omega^0(M,\frg)[[\hbar]],[-,-]_\frg^\star\big)$ on $\CV$ is proven in~\cite{Ciric:2021rhi}.

Braided noncommutative kinematics is now defined by the \emph{braided left and right covariant derivatives}
\begin{align}\label{eq:leftrightcovderiv}
\dd_{\star\lact}^{\tA}\phi := \dd \phi+A\wedge_\star \phi \qquad \mbox{and} \qquad \dd_{\star\ract}^{\tA}\phi := \dd \phi+\sfR_k(A)\wedge_\star \sfR^k(\phi)
\end{align}
in $\Omega^{p+1}(M,W)[[\hbar]]$, where $A\wedge_\star\phi = \bar\sff^k(A)\wedge\bar\sff_k(\phi)$ is given by the tensor product of exterior multiplication of forms with the action of $\frg$ on the representation $W$, and $\sfR_k(A)\wedge_\star \sfR^k(\phi) = \bar\sff_k(A)\wedge\bar\sff^k(\phi)$. They are both \emph{braided covariant}:
\begin{align}
\delta_\lambda^\br\big( \dd_{\star\lact}^{\tA}\phi\big) = -\lambda\star\big( \dd_{\star\lact}^{\tA}\phi\big) \qquad \mbox{and} \qquad \delta_\lambda^\br\big( \dd_{\star\ract}^{\tA}\phi\big) = -\lambda\star\big( \dd_{\star\ract}^{\tA}\phi\big)\ ,
\end{align}
as proven in~\cite{Ciric:2021rhi}. 

Similarly, one defines the \emph{braided curvature}
\begin{align}\label{eq:braidedcurvature}
F_{\tA}^\br := \dd A+\tfrac12\,[A,A]_\frg^\star \ \in \ \Omega^2(M,\frg)[[\hbar]] \ ,
\end{align}
where now there is no left/right distinction due to braided symmetry of the bracket $[-,-]_\frg^\star$ on $1$-forms, i.e. $[A,A]_\frg^\star = [\sfR_k(A),\sfR^k(A)]_\frg^\star$. The braided curvature is also braided covariant:
\begin{align}\label{eq:braidedcovcurv}
\delta_\lambda^\br F_{\tA}^\br=-[\lambda,F_{\tA}^\br]_\frg^\star \ .
\end{align}
Note that for a matrix Lie algebra $\frg$, one can write
\begin{align}
F^\br_{\tA} = \dd A + \tfrac12\,\big(A\wedge_\star A + \sfR_k(A)\wedge_\star\sfR^k(A)\big)  \ \neq \ F_{\tA}^\star \ ,
\end{align}
which differs from the usual field strength \eqref{eq:naiveNCcurv} in conventional noncommutative gauge theories (see e.g. \eqref{eq:fieldstrengthstargauge} and \eqref{eq:fieldstrengthfuzzy}).

\subsubsection*{Braided diffeomorphisms}

The discussion above carries through \emph{verbatum} to spacetime symmetries of field theories, such as the diffeomorphism symmetry of gravity, which is in fact where the formalism was originally developed (see e.g.~\cite{Aschieri:2005zs,SpringerBook,NAGravity}) and which inspired the present treatment of gauge symmetries. This is generated infinitesimally by vector fields $\xi\in\frv=\sfGamma(TM)$ acting via the Lie derivative $\LL_\xi$ on generic tensor fields $T$: $\delta_\xi T = \LL_\xi\, T$ (see Example~\ref{ex:modulealg}). The Cartan structure equations tell us that this gives a representation of the Lie algebra $(\frv,[-,-]_\frv)$ of vector fields on tensor fields: $[\LL_{\xi_1},\LL_{\xi_2}]_\circ = \LL_{[\xi_1,\xi_2]_\frv}$. 

This is obviously a $\sU\frv$-module, and following the twisting procedure above we define {braided diffeomorphisms} $\delta_\xi^\br T = \LL_\xi^\star\, T$ through the \emph{braided Lie derivative} $\LL_\xi^\star$ given by
\begin{align}
\LL_\xi^\star\, T := \LL_{\bar \sff^k(\xi)}\bar \sff_k( T) \ .
\end{align}
The braided Lie derivative is a braided derivation of the star-tensor product $\otimes_\star = \otimes\circ\CF^{-1}$ which closes the braided Lie algebra $\big(\frv[[\hbar]],[-,-]_\frv^\star\big)$:
\begin{align}
\big[\LL_{\xi_1}^\star,\LL_{\xi_2}^\star\big]_\circ^\star = \LL_{[\xi_1,\xi_2]_\frv^\star}^\star \ .
\end{align}
Note that the braided Lie bracket of vector fields may be represented as
\begin{align}
[\xi_1,\xi_2]_\frv^\star = \xi_1\star\xi_2 - \sfR_k(\xi_2)\star\sfR^k(\xi_1) \ \neq \ [\xi_1\starcom\xi_2]_\frv
\end{align}
in the twist deformation of the universal enveloping algebra $\sU\frv$, which by construction is again a vector field in $\frv[[\hbar]]$, in contrast to the star-bracket $[\xi_1\starcom\xi_2]_\frv$.

It is worth stressing once more that these braided diffeomorphisms are \emph{not} the same as the `twisted diffeomorphisms' of~\cite{TwistApproach}. Twisted diffeomorphisms act through the twisted enveloping Hopf algebra $\sU_\tCF\frv$, whose underlying algebra is the same as that of the classical Hopf algebra $\sU\frv$. In particular, an infinitesimal twisted diffeomorphism of a single field simply follows the classical rule with the usual Lie derivative $\LL_\xi$ so that they also close on the classical Lie algebra $(\frv,[-,-]_\frv)$ of diffeomorphisms, while on star-products of fields they follow a deformed Leibniz rule defined by the twisted coproduct $\Delta_\tCF(\LL_\xi)$ from \eqref{eq:DeltaCF}.

\subsection{Example: Braided Chern--Simons theory}
\label{sec:braidedCS}

As a simple illustration of a field theory with braided gauge symmetry where we can explore some of the novelties that arise, let us go back to the beginning of our story with the simplest prototypical example, namely the Chern--Simons gauge theory from Section~\ref{sec:whatisgauge}, and consider its braided noncommutative deformation. We define the braided noncommutative version of the Chern--Simons functional \eqref{eq:CSaction} in the obvious way as 
\begin{align}\label{eq:CSbraidedaction}
S_\star(A)=\int_M\,\Tr_\frg\Big(\frac12\,A\wedge_\star \dd A +\frac1{3!}\,A\wedge_\star[A,A]^\star_\frg \Big) \ ,
\end{align}
for a gauge field $A\in\Omega^1(M,\frg)$.

In order to be able to use this in a standard variational principle, we need a technical restriction on the Drinfel'd twist $\CF$: we require that the usual integral over $M$ is \emph{graded cyclic} with respect to the deformed exterior
product $\wedge_\star$, that is,
\begin{align}\label{eq:intcyclic}
\int_M\,\alpha\wedge_\star\beta= (-1)^{|\alpha|\,|\beta|} \,
  \int_M\,\beta\wedge_\star\alpha \ ,
\end{align}
for all homogeneous forms $\alpha,\beta\in\Omega^\bullet(M)[[\hbar]]$ in complementary degrees. For example, this holds for abelian twists (see Example~\ref{ex:abeliantwist}), whereby $\RR^{-1}=\CF^{2}$ and \eqref{eq:intcyclic} follows from integration by parts with respective to the Lie derivative.

Using this graded cyclicity repeatedly, along with invariance of the quadratic form $\Tr_\frg$ on $\frg$ and the $\RR$-matrix identities \eqref{eq:Rmatrixidsw} as well as the braided Leibniz rule \eqref{eq:braidedLeibniz}, one checks explicitly that the action functional \eqref{eq:CSbraidedaction} is invariant under the (left) braided gauge transformations 
\begin{align}
\delta_\lambda^\br A = \dd\lambda-[\lambda,A]_\frg^\star \ ,
\end{align}
for an arbitrary gauge parameter $\lambda\in\Omega^0(M,\frg)$, that is, $\delta_\lambda^\br S_\star(A)=0$. This holds for arbitrary quadratic Lie algebras $\frg$ without the introduction of extra degrees of freedom for the gauge fields $A$. On the other hand, by the ordinary Leibniz rule its variation is given by
\begin{align}\label{eq:deltaSstarCS}
\delta S_\star(A) = \int_M \, \Tr_\frg\Big(\delta A\wedge_\star\dd A + \frac12\,\delta A\wedge_\star[A,A]_\frg^\star\Big)
\end{align}
for arbitrary variations $\delta A\in\Omega^1(M,\frg)$ of the gauge fields, as in the classical case. This leads to  the equations of motion
\begin{align}
\frF_{\tA}^\star = F_{\tA}^\br = 0 
\end{align}
describing a flat braided $\frg$-connection, which are braided covariant by virtue of \eqref{eq:braidedcovcurv}:
\begin{align}
\delta_\lambda^\br\,\frF_{\tA}^\star = -[\lambda,\frF_{\tA}^\star]_\frg^\star \ .
\end{align}

So far everything we have said is just the obvious braided deformation of what was said in the classical case from Section~\ref{sec:whatisgauge}, as in the conventional noncommutative field theories of Section~\ref{sec:NCLinfty}. At this point, however, the braided field theory deviates from the naive expectations. Despite the covariance of the field equations, braided gauge symmetries \emph{do not} produce new solutions: An infinitesimal gauge variation produces the field equation
\begin{align}
\frF^\star_{A+\delta_\lambda^\br A} = \dd(A+\delta_\lambda^\br A) + \tfrac12\,[A,A]_\frg^\star + \tfrac12\,\big([\delta_\lambda^\br A,A]_\frg^\star + [A,\delta_\lambda^\br A]_\frg^\star\big) + O(\lambda^2) \ ,
\end{align}
and due to the braided Leibniz rule \eqref{eq:braidedLeibniz} this cannot be combined to the form $\frF^\star_{\tA}+\delta_\lambda^\br\,\frF_{\tA}^\star+O(\lambda^2)$. That is,
\begin{align}
\delta_\lambda^\br\,\frF_{\tA}^\star \ \neq \ \frF^\star_{A+\delta_\lambda^\br A} - \frF_{\tA}^\star + O(\lambda^2) \ ,
\end{align}
and hence a braided gauge transformation of a solution $A$ to $\frF_{\tA}^\star=0$ may produce a field configuration with \smash{$\frF^\star_{A+\delta_\lambda^\br A}\neq0$}. In other words, there is generally no moduli space of flat $\frg$-connections modulo braided gauge transformations in this sense, and the space of physical states cannot be described as in the commutative case.\footnote{The trivial (and uninteresting) exception is of course when $\frg$ is an abelian Lie algebra and the resulting Chern--Simons theory is a non-interacting field theory, whose classical and braided versions are the same.} We will say more about this unusual yet consistent feature of the ``braided moduli space'' from a more technical standpoint in Section~\ref{sec:braidedMC}.

How then should we interpret, or even justify, the meaning of the term `braided gauge symmetry'? The key point is our alternative interpretation of gauge symmetries via Noether's second theorem, which for the standard Chern--Simons theories from Sections~\ref{sec:whatisgauge} and~\ref{sec:usualNCLinfty} coincides with the Bianchi identity for the curvature 2-form. The latter is known to be generally violated in braided noncommutative differential geometry (see e.g.~\cite{Barnes:2016cjm,NAGravity,Aschieri:2020ifa}): The braided version of the Bianchi identity is readily calculated by taking the exterior derivative of the expression \eqref{eq:braidedcurvature} and rewriting it in terms of $F_{\tA}^\br$ to get the differential identity
\begin{align}
\dd F_{\tA}^\br -\tfrac12\,[F_{\tA}^\br,A]_\frg^\star+\tfrac12\,[A,F_{\tA}^\br]_\frg^\star + \tfrac14\,\big([[A,A]_\frg^\star,A]_\frg^\star - [A,[A,A]_\frg^\star]_\frg^\star\big) = 0 \ .
\end{align}
The last term vanishes in the classical case with trivial $\RR$-matrix $\RR=1\otimes1$ by symmetry of the Lie bracket $[-,-]_\frg$ on $1$-forms $\Omega^1(M,\frg)$, but it is non-vanishing in general.
We can rewrite the modified Bianchi identity in terms of the left and right covariant derivatives \eqref{eq:leftrightcovderiv} as
\begin{align}\label{eq:braidedNoetherCS}
\tfrac12\,\big(\dd_{\star\lact}^{\tA}\frF_{\tA}^\star+\dd_{\star\ract}^{\tA}\frF_{\tA}^\star\big)=-\tfrac14\,[\sfR_k(A),[\sfR^k(A),A]_\frg^\star]_\frg^\star \ ,
\end{align}
where we used the braided (odd) Jacobi identity \eqref{eq:braidedJacobiforms} on 1-forms with $\alpha_1=\alpha_2=\alpha_3=A$.
Again this identity holds off-shell, but it differs from its classical form by inhomogeneous terms involving the gauge field $A$ itself. 

In analogy to the classical case, it is natural to interpret this as the `braided Noether identity' that should follow via braided gauge invariance of the action functional. This is indeed the case, but one can no longer use the formula \eqref{eq:deltaSstarCS} as in the classical case, because braided gauge transformation obey a braided Leibniz rule and so are no longer special directions of a general field variation, which obeys the standard Leibniz rule; that is,
\begin{align}
\delta_\lambda^\br S_\star(A) \ \neq \ \int_M \, \Tr_\frg\big(\delta_\lambda^\br A\wedge_\star\frF_{\tA}^\star\big) \ .
\end{align}
Nonetheless, the braided Noether identity \eqref{eq:braidedNoetherCS} follows from $\delta_\lambda^\br S_\star(A)=0$ by using cyclicity and integration by parts, along with ${\rm ad}(\frg)$-invariance of the inner product $\Tr_\frg$ and the braided Jacobi identity, to isolate the gauge parameter $\lambda$ while making the Euler--Lagrange derivative $\frF_{\tA}^\star$ appear; that is,
\begin{align}
\delta_\lambda^\br S_\star(A) = -\int_M \, \Tr_\frg\bigg(\lambda\wedge_\star\Big(\frac12\,\big(\dd_{\star\lact}^{\tA}\frF_{\tA}^\star+\dd_{\star\ract}^{\tA}\frF_{\tA}^\star\big)+\frac14\,[\sfR_k(A),[\sfR^k(A),A]_\frg^\star]_\frg^\star\Big)\bigg) \ .
\end{align}
This provides a braided version of Noether's second theorem: braided gauge symmetries still induce interdependence among the equations of motion, and hence gauge redundancies. This justifies the interpretation of local braided symmetries as ``gauge''. Note that there is only one braided Noether identity, corresponding to the fact that left and right braided gauge transformations are not independent (see~\eqref{eq:braidedvarA}). 

We now notice that, analogously to Section~\ref{sec:whatisgauge}, the formulation of braided Chern--Simons gauge theory can be written entirely in terms of the `bracket' operations
\begin{align}
\ell_1^\star=\dd \qquad \mbox{and} \qquad \ell_2^\star=-[-,-]_\frg^\star
\end{align}
which define a differential graded braided Lie algebra structure on the graded vector space $V:=\Omega^\bullet(M,\frg)[[\hbar]]$, as well as the (strictly) cyclic inner product
\begin{align}
\langle\alpha,\beta\rangle_\star := \langle-,-\rangle\circ\CF^{-1}(\alpha\otimes\beta) = \int_M \, \Tr_\frg(\alpha\wedge_\star\beta)
\end{align}
on $V$, where strict cyclicity follows using the classical properties together with the graded commutativity \eqref{eq:intcyclic} and the $\RR$-matrix identities \eqref{eq:Rmatrixidsw}. That is, the braided gauge transformations of the fields, the equations of motion and the gauge redundancy via the braided Noether identities are encoded respectively by
\begin{align}\label{eq:braidedCSLinfty}
\begin{split}
\delta_\lambda^\br A = \ell_1^\star(\lambda) + \ell_2^\star(\lambda,A) & \qquad , \qquad \frF_{\tA}^\star = \ell^\star_1(A) - \tfrac12\,\ell_2^\star(A,A) \ , \\[4pt] \dsf^\star_{\tA} \frF_{\tA}^\star := \ \ell_1^\star(\frF_{\tA}^\star)+\tfrac12\,\big(\ell^\star_2(\frF_{\tA}^\star,A) - & \, \ell_2^\star(A,\frF_{\tA}^\star)\big) + \tfrac14\,\ell_2^\star\big(\sfR_k(A),\ell_2^\star(\sfR^k(A),A)\big) \ = \ 0 \ ,
\end{split}
\end{align}
with gauge closure and braided covariance represented by
\begin{align}
\big[\delta_{\lambda_1}^\br,\delta_{\lambda_2}^\br\big]^\star_\circ A = \delta^\br_{-\ell_2^\star(\lambda_1,\lambda_2)}A \qquad \mbox{and} \qquad \delta^\br_\lambda\, \frF_{\tA}^\star = \ell_2^\star(\lambda,\frF_{\tA}^\star) \ .
\end{align}
The braided Chern--Simons functional is encoded as
\begin{align}\label{eq:CSactionellbraided}
S_\star(A) = \tfrac12\,\langle A,\ell_1^\star(A)\rangle_\star - \tfrac1{3!}\,\langle A,\ell_2^\star(A,A)\rangle_\star \ ,
\end{align}
from which the equations of motion $\frF_{\tA}^\star=0$ are derived by varying and using cyclicity of the inner product.

The structure maps in this formulation of the braided gauge theory are all evidently simply twist deformations, via the procedure of Section~\ref{sec:Drinfeldtwist}, of those from Section~\ref{sec:whatisgauge}.
This motivates an extension of the notion of an $L_\infty$-algebra to the braided setting, and the construction of more general noncommutative field theories with braided gauge symmetry by twisting the formalism from Section~\ref{sec:LinftyCFT}. This extension will occupy the remainder of this section.

\subsection{Braided $L_\infty$-algebras}
\label{sec:Linftytwist}

In this section we shall work with $L_\infty$-algebras as presented in Section~\ref{sec:Linfty} in terms of brackets and relations, and define a braided generalization in these terms, deferring a discussion of their incarnations as braided versions of coalgebras and Chevalley--Eilenberg algebras until Section~\ref{sec:braidedBV}. The key observation is that the definition of a classical $L_\infty$-algebra takes place in the category of vector spaces and linear maps, which is a symmetric monoidal category with braiding given by the trivial transposition isomorphism $\tau$. This definition can be made in \emph{any} symmetric monoidal category with a non-trivial braiding, just like the definition of a Lie algebra. In this paper we will only need the definition in the category ${}_{\sU_\tCF\frv}\CCM$ of left $\sU_\tCF\frv$-modules, and we restrict to this setting for definiteness.

\begin{definition}\label{def:braidedLinfty}
A \emph{braided $L_\infty$-algebra} is a $\RZ$-graded $\sU_\tCF\frv$-module\footnote{We regard $\sU_\tCF\frv$ itself as a $\RZ$-graded $\sU_\tCF\frv$-module sitting in degree~$0$.} $V=\bigoplus_{k\in\RZ}\,V^k$ with a collection $\ell_n:V^{\otimes n}\to V$, $n\geq1$ of equivariant graded braided antisymmetric multilinear maps of degree $2-n$ that satisfy the \emph{braided homotopy Jacobi identities}
\begin{align}\label{eq:homotopyJacobibraided}
\sum_{i=1}^n \, (-1)^{i\,(n-i)} \, \ell_{n-i+1}\circ\big(\ell_i\otimes\unit^{\otimes n-i}\big) \ \circ \ \sum_{\sigma\in{\rm Sh}(i;n-i)} \, {\rm sgn}(\sigma) \, \tau_{\tCR}^\sigma = 0
\end{align}
for each $n\geq1$, where $\tau_{\tCR}^\sigma:V^{\otimes n}\to V^{\otimes n}$ denotes the action of the permutation $\sigma$ via the symmetric braiding $\tau_{\tCR}$ of the category ${}_{\sU_\tCF\frv}\CCM$ times the Koszul sign multiplication which interchanges factors in a tensor product of graded $\sU_\tCF\frv$-modules. 
\end{definition}

Let us unravel this definition in a bit more detail. Graded braided antisymmetry is the property
\begin{align}\label{eq:antisymmbraided}
\ell_n(v_1,\dots,v_n) = -(-1)^{|v_i|\,|v_{i+1}|} \ \ell_n\big(v_1,\dots,v_{i-1},\sfR_k(v_{i+1}),\sfR^k(v_i),v_{i+2},\dots,v_n\big)
\end{align}
for $i=1,\dots,n-1$ and homogeneous elements $v_1,\dots,v_n\in V$. The permutation action $\tau_{\tCR}^\sigma$ includes, in addition to the usual Koszul signs, appropriate insertions of the $\RR$-matrix as in \eqref{eq:antisymmbraided}. The braided homotopy Jacobi identities \eqref{eq:homotopyJacobibraided} are unchanged from their classical counterparts for $n=1,2$, that is, \eqref{eq:homotopyn=1} and \eqref{eq:homotopyn=2} continue to hold. Thus, similarly to the classical case, every braided $L_\infty$-algebra has an underlying cochain complex $(V,\ell_1)$ in the category ${}_{\sU_\tCF\frv}\CCM$ and $\ell_2:V\otimes V\to V$ is a cochain map in ${}_{\sU_\tCF\frv}\CCM$. The first homotopy relation which differs from the classical case is for $n=3$, which is a modification of \eqref{eq:homotopyn=3} in accordance with the non-trivial braiding \eqref{eq:antisymmbraided}:
\begin{align}
\begin{split}
& \ell_2\big(\ell_2(v_1,v_2),v_3\big) - (-1)^{|v_2|\,|v_3|}\,
  \ell_2\big(\ell_2(v_1,\sfR_k(v_3)),\sfR^k(v_2)\big) \\
& \hspace{4cm} + (-1)^{(|v_2|+|v_3|)\,|v_1|}\,
  \ell_2\big(\ell_2(\sfR_k(v_2),\sfR_l(v_3)),\sfR^l\,\sfR^k(v_1)\big) \\[4pt]
& \hspace{1cm} = -\ell_3\big(\ell_1(v_1),v_2,v_3\big) - (-1)^{|v_1|}\,
  \ell_3\big(v_1, \ell_1(v_2), v_3\big) - (-1)^{|v_1|+|v_2|}\,
  \ell_3\big(v_1,v_2, \ell_1(v_3)\big) \\
& \hspace{4cm} -\ell_1\big(\ell_3(v_1,v_2,v_3)\big) \ .
\end{split}
\end{align}

When the only non-zero bracket is $\ell_2$, a braided $L_\infty$-algebra is a braided Lie algebra in the sense of~\cite{Majid:1993yp} and it coincides with the notion of Lie algebra in ${}_{\sU_\tCF\frv}\CCM$ discussed by~\cite{Barnes:2015uxa}. Thus braided $L_\infty$-algebras are homotopy coherent generalizations of braided Lie algebras. In particular, the cohomology $H^\bullet(V,\ell_1)$ of a braided $L_\infty$-algebra is a $\RZ$-graded braided Lie algebra.

In the following we regard the one-dimensional vector space $\FC[[\hbar]]$ as a trivial graded left $\sU_\tCF\frv$-module, sitting in degree~$0$.

\begin{definition}
A \emph{cyclic braided $L_\infty$-algebra} is a braided $L_\infty$-algebra $(V,\{\ell_n\})$ together with a non-degenerate graded braided symmetric cochain map $\langle-,-\rangle:V\otimes V\to \FC[[\hbar]]$ that satisfies the \emph{braided cyclicity condition}
\begin{align}
\langle-,-\rangle \, \circ \, (\unit\otimes\ell_n) = {\rm sgn}(\sigma) \, \langle-,-\rangle \, \circ \, (\unit\otimes\ell_n) \, \circ \, \tau_{\tCR}^\sigma \ ,
\end{align}
for all $n\geq1$ and for all cyclic permutations $\sigma\in C_{n+1}$.
\end{definition}

Evaluated on elements $v_0,v_1,\dots,v_n\in V$, cyclicity in the category ${}_{\sU_\tCF\frv}\CCM$ reads as
\begin{align}
\big\langle v_0,\ell_n(v_1,v_2,\dots,v_n)\big\rangle = \pm \,  \big\langle \sfR_{k_0}\,\sfR_{k_1}\cdots\sfR_{k_{n-1}}( v_n),\ell_n\big(\sfR^{k_0}(v_0),\sfR^{k_1}(v_1),\dots,\sfR^{k_{n-1}}(v_{n-1})\big)\big\rangle \ .
\end{align}

In this paper we are predominantly interested in braided $L_\infty$-algebras that result from twist deformations of classical $L_\infty$-algebras, via the techniques described in Section~\ref{sec:Drinfeldtwist}. This can be done starting from an $L_\infty$-algebra $(V,\{\ell_n\})$ in the category ${}_{\sU\frv}\CCM$ of left $\sU\frv$-modules. This means that $V=\bigoplus_{k\in\RZ}\,V_k$ is a $\RZ$-graded
$\sU\frv$-module and the $n$-brackets $\ell_n:V^{\otimes n}\to V$ are
equivariant maps. Given any 
Drinfel'd twist $\CF\in \sU\frv[[\hbar]]\otimes \sU\frv[[\hbar]]$, we can
deform the brackets $\ell_n$ to new brackets $\ell_n^\star:V[[\hbar]]^{\otimes n}\to V[[\hbar]]$ which
are morphisms in the twisted representation category
${}_{\sU_\tCF\frv}\CCM$ of $\sU_\tCF\frv$-modules. Following the
standard prescription (\ref{eq:mustar}), we set
$\ell_1^\star:=\ell_1$ and
\begin{align}\label{eq:ellnstardef}
	\ell_n^\star(v_1\otimes\cdots\otimes v_n) :=
	\ell_n(v_1\otimes_\star\cdots\otimes_\star v_n)
\end{align}
for $n\geq2$, where $v\otimes_\star v':=\CF^{-1}(v\otimes
v')=\bar\sff^k(v)\otimes\bar\sff_k(v')$ for $v,v'\in V$. The functorial equivalence between the symmetric monoidal categories ${}_{\sU\frv}\CCM$ and ${}_{\sU_\tCF\frv}\CCM$ then implies the following central result, whose proof is found in~\cite[Proposition~4.8]{Ciric:2021rhi}.

\begin{proposition}\label{prop:braidedfromclassical}
	If $(V,\{\ell_n\})$ is an $L_\infty$-algebra in the category
	${}_{\sU\frv}\CCM$, then $(V[[\hbar]],\{\ell_n^\star\})$ is a braided
	$L_\infty$-algebra. 
\end{proposition}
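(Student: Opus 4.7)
The plan is to verify the three defining properties of a braided $L_\infty$-algebra for the deformed structure maps $\ell_n^\star := \ell_n \circ \CF^{-1}_{(n)}$ (where $\CF^{-1}_{(n)}$ denotes the $n$-fold iterated inverse twist realising $\otimes_\star^{\,n}$), using only the cocycle equation for $\CF$, the invariance of the brackets $\ell_n$ in ${}_{\sU\frv}\CCM$, and the classical homotopy Jacobi identities satisfied by $\{\ell_n\}$. Conceptually, the whole proof is a restatement of the fact that precomposition by $\CF^{-1}_{(n)}$ is the monoidal part of the symmetric monoidal equivalence ${}_{\sU\frv}\CCM \simeq {}_{\sU_\tCF\frv}\CCM$: any string diagram built from equivariant maps, Koszul symmetry and the trivial braiding $\tau$ is sent term-by-term to the same diagram with $\tau$ replaced by $\tau_\tCR$, so the braided axioms follow automatically from the classical ones.

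First I would establish $\sU_\tCF\frv$-equivariance of each $\ell_n^\star$. For $X\in\sU_\tCF\frv = \sU\frv[[\hbar]]$ and $v_1,\dots,v_n\in V$, one writes the iterated twisted coproduct $\Delta^{n-1}_\tCF(X)$ acting on $v_1\otimes\cdots\otimes v_n$ and conjugates it through $\CF^{-1}_{(n)}$; the $(n-1)$-fold application of the twist cocycle condition \eqref{eq:twistcocycle} and \eqref{eq:invcocycle} converts $\Delta^{n-1}_\tCF$ into the undeformed $\Delta^{n-1}$ acting after $\CF^{-1}_{(n)}$. Since $\ell_n$ is equivariant for the undeformed coproduct, equivariance of $\ell_n^\star$ for $\Delta_\tCF$ follows.

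Next I would derive graded braided antisymmetry \eqref{eq:antisymmbraided}. The key identity is the swap rule $\CF^{-1}\circ\tau_\tCR = \tau\circ\CF^{-1}$ (which is just $\RR = \CF_{21}\CF^{-1}$ rewritten), and more generally $\CF^{-1}_{(n)}\circ\tau_\tCR^{(i\,i{+}1)} = \tau^{(i\,i{+}1)}\circ \CF^{-1}_{(n)}$ for any nearest-neighbour transposition acting on the $n$-fold tensor product. Since the classical $\ell_n$ satisfy $\ell_n\circ \tau^{(i\,i{+}1)} = -(-1)^{|v_i|\,|v_{i+1}|}\,\ell_n$ (Koszul graded antisymmetry), precomposing with $\CF^{-1}_{(n)}$ gives exactly \eqref{eq:antisymmbraided}. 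By induction over the symmetric group, the same mechanism produces the full braided shuffle identities $\ell_n^\star\circ\tau_\tCR^\sigma = \mathrm{sgn}(\sigma)\,\ell_n^\star$ needed in \eqref{eq:homotopyJacobibraided}.

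The main step is the braided homotopy Jacobi identities \eqref{eq:homotopyJacobibraided}. For each fixed $n$, I would start from the classical identity
\begin{equation*}
\sum_{i=1}^n (-1)^{i(n-i)}\,\ell_{n-i+1}\circ(\ell_i\otimes\unit^{\otimes n-i}) \, \circ \! \sum_{\sigma\in\mathrm{Sh}(i;n-i)}\!\mathrm{sgn}(\sigma)\,\tau^\sigma = 0
\end{equation*}
and precompose it with $\CF^{-1}_{(n)}$. On the left I would use the cocycle condition iteratively to split $\CF^{-1}_{(n)}$ as $\CF^{-1}_{(n-i+1)}\circ(\CF^{-1}_{(i)}\otimes \unit^{\otimes n-i})$ through the appropriate intermediate $\Delta$-insertions, turning $\ell_i$ into $\ell_i^\star$ and $\ell_{n-i+1}$ into $\ell_{n-i+1}^\star$; simultaneously the identity $\CF^{-1}_{(n)}\circ\tau_\tCR^\sigma = \tau^\sigma\circ\CF^{-1}_{(n)}$ converts each classical shuffle permutation into its braided counterpart. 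The result is precisely \eqref{eq:homotopyJacobibraided} for $\ell_n^\star$.

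The main obstacle is the third step, specifically the bookkeeping needed to push $\CF^{-1}_{(n)}$ through the composition $\ell_{n-i+1}\circ(\ell_i\otimes\unit^{\otimes n-i})$: this requires the generalised cocycle identity expressing $\CF^{-1}_{(n)}$ as a product of twists on adjacent legs interleaved with $\Delta$, and using equivariance of $\ell_i$ to absorb the $\Delta$-factors. Once this translation lemma (essentially a restatement that the functor ${}_{\sU\frv}\CCM \to {}_{\sU_\tCF\frv}\CCM$ is strong monoidal with coherence isomorphism $\CF^{-1}$) is set up, each braided axiom follows from its undeformed sibling by a uniform substitution $\tau \mapsto \tau_\tCR$, $\ell_n \mapsto \ell_n^\star$.
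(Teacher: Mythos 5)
Your proposal is correct and follows essentially the same route as the paper, which derives the result from the functorial (strong monoidal, symmetric) equivalence between ${}_{\sU\frv}\CCM$ and ${}_{\sU_\tCF\frv}\CCM$ induced by the twist, with $\CF^{-1}$ as the coherence isomorphism, and defers the term-by-term bookkeeping to the reference; your translation lemma (cocycle condition to split the iterated twist through $\ell_{n-i+1}\circ(\ell_i\otimes\unit^{\otimes n-i})$, equivariance to absorb the coproduct factors, and the swap rule $\tau\circ\CF^{-1}=\CF^{-1}\circ\tau_{\tCR}$ to convert shuffles) is exactly the content of that argument. The same manipulations appear explicitly in the paper's proof of the cyclic case in \eqref{eq:braidedcyccomp}, so your outline is consistent with the authors' own technique.
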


This result is a far reaching generalization of what we did in Section~\ref{sec:braidedCS}, where the differential graded Lie algebra $\big(\Omega^\bullet(M,\frg),\ell_1,\ell_2\big)$ underlying Chern--Simons gauge theory on a three-manifold $M$ was twist deformed to the differential graded braided Lie algebra $\big(\Omega^\bullet(M,\frg)[[\hbar]],\ell_1^\star,\ell_2^\star\big)$ of braided Chern--Simons theory on $M$. We will discuss how this procedure extends to more general field theories organised by generic $L_\infty$-algebras in Section~\ref{sec:braidedfieldtheory}.

Similarly, we can twist deform cyclic structures on $L_\infty$-algebras to obtain cyclic braided $L_\infty$-algebras. Let $(V,\{\ell_n\},\langle-,-\rangle)$ be a cyclic
$L_\infty$-algebra in the category ${}_{\sU\frv}\CCM$ of left
$\sU\frv$-modules, with $\FR$ regarded as a trivial graded $\sU\frv$-module sitting in degree~$0$. This means that $(V,\{\ell_n\})$ is an $L_\infty$-algebra in ${}_{\sU\frv}\CCM$ and the cyclic inner product 
$\langle-,-\rangle:V\otimes V\to\FR$ is $\sU\frv$-invariant:
\begin{align}\label{eq:paringinv}
X\langle v_1\otimes v_2\rangle = \langle\Delta(X)(v_1\otimes v_2)\rangle = 0 \ ,
\end{align}
for all $X\in \sU\frv$ and $v_1,v_2\in V$. Following the standard prescription \eqref{eq:mustar}, we can twist
deform the inner product $\langle-,-\rangle$ to a new pairing
$\langle-,-\rangle_\star$ defined by
\begin{align}\label{eq:twistpairing}
\langle v_1,v_2\rangle_\star := \langle-,-\rangle\circ \CF^{-1}(v_1\otimes v_2) = \langle
  \bar\sff^k(v_1),\bar\sff_k(v_2) \rangle \ .
\end{align}

\begin{proposition}
If $(V,\{\ell_n\},\langle-,-\rangle)$ is a cyclic $L_\infty$-algebra in the category
	${}_{\sU\frv}\CCM$, then \\ $(V[[\hbar]],\{\ell_n^\star\},\langle-,-\rangle_\star)$ is a cyclic braided
	$L_\infty$-algebra. 
\end{proposition}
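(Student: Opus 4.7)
The plan is to verify the three defining conditions of a cyclic braided $L_\infty$-algebra one by one, invoking Proposition~\ref{prop:braidedfromclassical} for the braided $L_\infty$-algebra structure itself so that only the properties of the pairing $\langle-,-\rangle_\star$ need fresh work. First, non-degeneracy of $\langle-,-\rangle_\star$ is immediate from invertibility of the twist: since $\CF^{-1}:V[[\hbar]]\otimes V[[\hbar]]\to V[[\hbar]]\otimes V[[\hbar]]$ is an $\FC[[\hbar]]$-linear isomorphism that reduces to the identity at $\hbar=0$, the formal power series extension of the classical non-degenerate pairing remains non-degenerate. Next, $\sU_\tCF\frv$-equivariance of $\langle-,-\rangle_\star$ follows directly from the definition $\Delta_\tCF=\CF\,\Delta\,\CF^{-1}$ together with classical $\sU\frv$-invariance \eqref{eq:paringinv}: for $X\in\sU_\tCF\frv$,
\begin{align*}
\langle \Delta_\tCF(X)(v_1\otimes v_2)\rangle_\star = \langle-,-\rangle\circ\CF^{-1}\,\Delta_\tCF(X)(v_1\otimes v_2) = \langle -,-\rangle\circ\Delta(X)\,\CF^{-1}(v_1\otimes v_2) = 0 \ .
\end{align*}

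Graded braided symmetry will follow by rewriting the classical graded symmetry of $\langle-,-\rangle$ using the key identity $\CF^{-1}_{21}=\CF^{-1}\,\RR^{-1}$, a consequence of $\RR=\CF_{21}\,\CF^{-1}$, which lets one trade the trivial transposition for the braiding $\tau_{\tCR}$. Explicitly,
\begin{align*}
\langle v_1,v_2\rangle_\star = \langle\bar\sff^k(v_1),\bar\sff_k(v_2)\rangle = \pm\,\langle\bar\sff_k(v_2),\bar\sff^k(v_1)\rangle = \pm\,\langle\sfR_k(v_2),\sfR^k(v_1)\rangle_\star \ ,
\end{align*}
where the last equality unpacks $\bar\sff_k\otimes\bar\sff^k=(\bar\sff^l\otimes\bar\sff_l)\,(\sfR_k\otimes\sfR^k)$. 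The cochain map property in ${}_{\sU_\tCF\frv}\CCM$ then reduces to its classical counterpart after commuting the equivariant operator $\ell_1$ through the twist legs of $\CF^{-1}$ in the definition of $\langle-,-\rangle_\star$.

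The main step, and the place where the technical weight sits, is braided cyclicity for all $n\geq 1$ and all cyclic $\sigma\in C_{n+1}$. Starting from classical cyclicity in ${}_{\sU\frv}\CCM$,
\begin{align*}
\langle v_0,\ell_n(v_1,\ldots,v_n)\rangle = \pm\,\langle v_n,\ell_n(v_0,v_1,\ldots,v_{n-1})\rangle \ ,
\end{align*}
I would substitute the definitions \eqref{eq:ellnstardef} and \eqref{eq:twistpairing}, which insert a nested chain of twist elements $\CF^{-1}\circ(\unit\otimes\CF^{-1})\circ\cdots$ built from iterated coproducts. The cocycle conditions \eqref{eq:twistcocycle} and \eqref{eq:invcocycle} reassemble these nested twists, while the iterated $\RR$-matrix identities \eqref{eq:Rmatrixidsw} convert the leftover transpositions into the correct action of $\tau^\sigma_{\tCR}$ on the $n{+}1$ factors. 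In parallel, $\sU\frv$-equivariance of $\ell_n$ and of $\langle-,-\rangle$ lets one commute any residual twist legs through the brackets and the inner product without obstruction.

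The hard and bookkeeping-heavy part is organising the shuffling of twist legs and $\RR$-matrices for arbitrary $n$ and arbitrary cyclic $\sigma$. This is most efficiently handled by induction on $n$, reducing each cyclic permutation to the transposition $(0,n)$ already treated by graded braided symmetry, composed with a cyclic shift within $(v_1,\ldots,v_n)$ which is absorbed by the braided antisymmetry of $\ell_n^\star$ guaranteed by Proposition~\ref{prop:braidedfromclassical}. Alternatively, and more conceptually, one may appeal to the functorial equivalence between the symmetric monoidal categories ${}_{\sU\frv}\CCM$ and ${}_{\sU_\tCF\frv}\CCM$ which underlies Proposition~\ref{prop:braidedfromclassical}: cyclic pairings on $L_\infty$-algebras are categorical data preserved by this equivalence, so classical cyclicity transports directly to braided cyclicity, sidestepping the explicit combinatorics.
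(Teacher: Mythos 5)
Your overall strategy for braided cyclicity --- insert the definitions of $\ell_n^\star$ and $\langle-,-\rangle_\star$, reassemble the nested twist legs with the cocycle condition \eqref{eq:invcocycle}, invoke $\sU\frv$-equivariance of $\ell_n$ and $\sU\frv$-invariance \eqref{eq:paringinv} of the pairing, apply classical cyclicity once, and then convert $\CF^{-1}_{21}=\CF^{-1}\,\RR^{-1}$ into $\RR$-matrix insertions distributed over the arguments via \eqref{eq:Rmatrixidsw} --- is exactly the computation the paper carries out in \eqref{eq:braidedcyccomp}, and your treatment of non-degeneracy and graded braided symmetry also matches. The categorical fallback you mention (transporting the cyclicity equation through the monoidal equivalence between ${}_{\sU\frv}\CCM$ and ${}_{\sU_\tCF\frv}\CCM$) is likewise sound, and is the same logic the paper uses to justify Proposition~\ref{prop:braidedfromclassical}.

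However, the concrete bookkeeping device you propose for the explicit route does not work as stated. The graded braided symmetry of $\langle-,-\rangle_\star$ exchanges the \emph{two slots of the pairing}, i.e.\ it swaps $v_0$ with the entire element $\ell_n^\star(v_1,\dots,v_n)$; it does not realise the transposition of $v_0$ with $v_n$ inside the string $(v_0,v_1,\dots,v_n)$. Extracting $v_n$ from the bracket and inserting $v_0$ in its place is precisely the content of cyclicity, and it cannot be reconstructed from symmetry of the pairing together with braided antisymmetry of $\ell_n^\star$ --- cyclicity is independent data beyond those two properties, so your proposed reduction to ``the transposition $(0,n)$ already treated by graded braided symmetry'' has nothing to bottom out on. The correct way to organise the combinatorics is the one the paper uses: establish the single generating cyclic shift $(v_0,\dots,v_n)\mapsto(v_n,v_0,\dots,v_{n-1})$ directly, by one application of classical cyclicity embedded in the chain of twist manipulations, and obtain all of $C_{n+1}$ by iterating this generator; no induction on $n$ is required.
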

\begin{proof}
Non-degeneracy of the pairing $\langle-,-\rangle_\star:V[[\hbar]]\otimes V[[\hbar]]\to \FC[[\hbar]]$ follows from the  untwisted non-degeneracy and an order by order argument in the formal deformation parameter $\hbar$. Graded symmetry of the cyclic inner product $\langle-,-\rangle$
implies that the twisted inner product $\langle-,-\rangle_\star$ is naturally
graded braided symmetric. 

To check braided cyclicity, we compute
\begin{align}\label{eq:braidedcyccomp}
\begin{split}
\big\langle v_0,\ell^\star_n(v_1,\dots,v_n)\big\rangle_\star &= \big\langle
                                                              \bar\sff^k(v_0)\otimes\bar\sff_k\big(\ell_n(\bar\sff^l(v_1\otimes_\star \cdots\otimes_\star
                                                              v_{n-1})\otimes\bar\sff_l(v_n))\big)\big\rangle
  \\[4pt]
  &=
    \big\langle(\unit\otimes\ell_n)\big(\bar\sff^k(v_0)\otimes\Delta(\bar\sff_k)\big[\bar\sff^l(v_1\otimes_\star\cdots\otimes_\star
    v_{n-1})\otimes\bar\sff_l(v_n)\big]\big)\big\rangle
  \\[4pt]
  &=
    \big\langle(\unit\otimes\ell_n)\big(\Delta(\bar\sff^k)\big[\bar\sff^l(v_0)\otimes \bar\sff_l(v_1\otimes_\star\cdots\otimes_\star
    v_{n-1})\big]\otimes\bar\sff_k(v_n)\big)\big\rangle \\[4pt]
  &= \pm\,
    \big\langle\bar\sff_k(v_n)\otimes
    \ell_n\big(\Delta\big(\bar\sff^k)\big[\bar\sff^l(v_0)\otimes\bar\sff_l(v_1\otimes_\star\cdots\otimes_\star
    v_{n-1})\big]\big)\big\rangle \\[4pt]
  &=\pm\,
    \big\langle\bar\sff_k(v_n)\otimes \bar\sff^k\big(\ell_n(\bar\sff^l(v_0)\otimes\bar\sff_l(v_1\otimes_\star\cdots\otimes_\star
    v_{n-1}))\big)\big\rangle \\[4pt]
  &=\pm\,
    \big\langle\bar\sff^k\,\sfR_m(v_n)\otimes\bar\sff_k\,\sfR^m\big(\ell_n(\bar\sff^l(v_0)\otimes\bar\sff_l(v_1\otimes_\star\cdots\otimes_\star
    v_{n-1}))\big)\big\rangle \\[4pt]
  &=\pm\, \big\langle\sfR_m(v_n)\otimes\sfR^m\big(\ell_n^\star(v_0\otimes
    v_1\otimes\cdots\otimes v_{n-1})\big)\big\rangle_\star \\[4pt]
  &= \pm\, \big\langle(\unit\otimes\ell_n^\star)\big(\sfR_m(v_n)\otimes \Delta_\tCF(\sfR^m)\big[v_0\otimes(v_1\otimes\cdots\otimes v_{n-1})\big]\big)\big\rangle_\star \\[4pt]
  &= \pm\,\big\langle \sfR_{k_0}\,\sfR_{k_1}\cdots\sfR_{k_{n-1}}(v_n),\ell_n^\star\big(\sfR^{k_0}(v_0),\sfR^{k_1}(v_1),\dots,\sfR^{k_{n-1}}(v_{n-1})\big)\big\rangle_\star \ .
\end{split}
\end{align}
In the first and seventh lines we use the definitions of the
star-pairing and the star-brackets, in the second and fifth lines we use commutativity
of $\ell_n$ with the $\sU\frv$-action, in the third line we use the cocycle property \eqref{eq:invcocycle} for the inverse of the Drinfel'd twist along with $\sU\frv$-invariance \eqref{eq:paringinv} of
the classical pairing, in the fourth line we use
classical cyclicity, in the sixth line we use the definition of the
$\RR$-matrix, in the eighth line we use commutativity of $\ell_n^\star$ with the $\sU_\tCF\frv$-action, and in the ninth line we use the $\RR$-matrix identities \eqref{eq:Rmatrixidsw}.
\end{proof}

However, this level of generality is not 
suitable for applications to field theory, as the loss of {\it strict} graded
symmetry and cyclicity would lead to problems with the variational principle for the corresponding action functionals. Following the concrete example of braided Chern--Simons theory from Section~\ref{sec:braidedCS}, we need to restrict the types of Drinfel'd twists by abstracting the graded cyclicity property \eqref{eq:intcyclic} for integration
of twisted differential forms.

\begin{definition}\label{def:compatibletwist}
A Drinfel'd twist $\CF\in \sU\frv[[\hbar]]\otimes \sU\frv[[\hbar]]$ is
\emph{compatible} with a cyclic structure $\langle-,-\rangle:V\otimes
V\to\FR$ on an $L_\infty$-algebra in ${}_{\sU\frv}\CCM$ if
\begin{align}\label{eq:compatibletwist}
\langle
  v_1,v_2\rangle_\star = \langle \sfR_k(v_1),\sfR^k(v_2)\rangle_\star = (-1)^{|v_1|\,|v_2|} \, \langle v_2,v_1\rangle_\star
\end{align}
for all homogeneous $v_1,v_2\in V$.
\end{definition}

In other words, a compatible Drinfel'd twist turns braided graded symmetry of the inner product into \emph{strict} graded symmetry. It also turns the braided cyclicity condition into the strict cyclicity condition (see \eqref{eq:cyclicity}), as inferred by

\begin{corollary}\label{cor:compatiblestrict}
Let $(V,\{\ell_n\},\langle-,-\rangle)$ be a cyclic $L_\infty$-algebra
in the category ${}_{\sU\frv}\CCM$, and let $\CF$ be a compatible Drinfel'd
twist. Then
$(V[[\hbar]],\{\ell_n^\star\},\langle-,-\rangle_\star)$ is a strictly
cyclic braided $L_\infty$-algebra.
\end{corollary}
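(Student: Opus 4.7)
The plan is to leverage what has already been established: by Proposition~\ref{prop:braidedfromclassical} the triple $(V[[\hbar]],\{\ell_n^\star\})$ is automatically a braided $L_\infty$-algebra, and the preceding proposition already provides that $\langle-,-\rangle_\star$ is non-degenerate, a cochain map, graded braided symmetric, and braided cyclic. What remains is only to upgrade the two ``braided'' properties (graded symmetry and cyclicity) to their \emph{strict} counterparts using the compatibility hypothesis \eqref{eq:compatibletwist}.

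Strict graded symmetry is immediate, since it is literally the second equality in \eqref{eq:compatibletwist}. For strict cyclicity, I would run the computation \eqref{eq:braidedcyccomp} exactly as given, but stop one line earlier, at the expression
\begin{equation*}
\big\langle v_0,\ell_n^\star(v_1,\dots,v_n)\big\rangle_\star = \pm\,\big\langle \sfR_m(v_n),\sfR^m\big(\ell_n^\star(v_0,v_1,\dots,v_{n-1})\big)\big\rangle_\star \ ,
\end{equation*}
in which the $\RR$-matrix legs are still paired across the two slots of $\langle-,-\rangle_\star$ as $\sfR_m\otimes\sfR^m$. At this point I would directly invoke the first equality in \eqref{eq:compatibletwist}, with $v_1=v_n$ and $v_2=\ell_n^\star(v_0,\dots,v_{n-1})$, to strip off the paired $\RR$-matrices and obtain the strictly cyclic relation
\begin{equation*}
\big\langle v_0,\ell_n^\star(v_1,\dots,v_n)\big\rangle_\star = \pm\,\big\langle v_n,\ell_n^\star(v_0,v_1,\dots,v_{n-1})\big\rangle_\star \ .
\end{equation*}
Invariance of the pairing as an equivariant map in ${}_{\sU_\tCF\frv}\CCM$ (the ingredient used in line three of \eqref{eq:braidedcyccomp}) is preserved under twist deformation and is what makes compatibility a well-posed condition.

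The main (minor) obstacle is bookkeeping the Koszul sign. Concretely, I would track the sign $\pm$ explicitly using the gradings of $v_0,\dots,v_n$ together with the degree $2-n$ of $\ell_n^\star$, and verify that it agrees with the sign appearing in the strict cyclicity condition \eqref{eq:cyclicity} of a cyclic (braided) $L_\infty$-algebra. Once this bookkeeping is checked for each $n\geq 1$ and each cyclic permutation $\sigma\in C_{n+1}$ (it suffices to treat the generator of $C_{n+1}$), the corollary follows.
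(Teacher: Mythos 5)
Your proposal is correct and follows essentially the same route as the paper: the paper's proof likewise takes the seventh line of the computation \eqref{eq:braidedcyccomp}, namely $\langle v_0,\ell_n^\star(v_1,\dots,v_n)\rangle_\star = \pm\,\langle\sfR_k(v_n),\sfR^k(\ell_n^\star(v_0,\dots,v_{n-1}))\rangle_\star$, and then applies the compatibility condition \eqref{eq:compatibletwist} to strip the paired $\RR$-matrix legs and obtain strict cyclicity. Your additional remarks on strict graded symmetry and the Koszul sign bookkeeping are consistent with (if slightly more explicit than) what the paper records.
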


\begin{proof}
The seventh line of the computation \eqref{eq:braidedcyccomp} shows
\begin{align}\label{eq:strictcyclic}
\begin{split}
\big\langle v_0,\ell^\star_n(v_1,v_2,\dots,v_n)\big\rangle_\star &= \pm\, \big\langle\sfR_k(v_n),\sfR^k\big(\ell_n^\star(v_0,
    v_1,\dots, v_{n-1})\big)\big\rangle_\star \\[4pt]
&= \pm\, \big\langle v_n,\ell_n^\star(v_0,
    v_1,\dots, v_{n-1})\big\rangle_\star \ ,
\end{split}
\end{align}
where in the second line we used \eqref{eq:compatibletwist}.
\end{proof}

\subsection{Braided $L_\infty$-algebras of noncommutative field theories}
\label{sec:braidedfieldtheory}

We shall now define noncommutative field theories in the braided $L_\infty$-algebra formalism by
employing a twist deformation of the approach discussed in
Section~\ref{sec:LinftyCFT}, whereby a classical field theory is defined by its
4-term cyclic $L_\infty$-algebra as the initial input.\footnote{Here we treat only theories with irreducible gauge symmetries. We shall discuss an example with higher gauge symmetries in Section~\ref{sec:braidedBFtheory}.} We call a field theory obtained in this
way a \emph{braided field theory}. That is, a braided field theory is
defined by a 4-term cyclic braided $L_\infty$-algebra as the initial input. 

Starting from a classical field theory on a manifold $M$, described algebraically by a 4-term cyclic $L_\infty$-algebra $(V,\{\ell_n\},\langle-,-\rangle)$ (over $\FR$) with underlying graded vector space
\begin{align}
V=V^0\oplus V^1\oplus V^2\oplus V^3
\end{align}
in the representation category ${}_{\sU\frv}\CCM$, following the
prescription of Section~\ref{sec:Linftytwist} we construct the
corresponding 4-term cyclic braided $L_\infty$-algebra
$(V[[\hbar]],\{\ell_n^\star\},\langle-,-\rangle_\star)$ (over $\FC[[\hbar]]$) in the twisted representation category
${}_{\sU_\tCF\frv}\CCM$. Since $\ell_1^\star=\ell_1$, this braided $L_\infty$-algebra has the same underlying cochain complex \eqref{eq:cochaincomplex}. In other words the noncommutative field theories obtained in this way have the same underlying free field theory as their classical versions, as in the case of the star-gauge theories of Section~\ref{sec:usualNCLinfty}. Only the interactions are different and they are encoded in the higher braided brackets $\ell_n^\star$ with $n\geq2$.

The (left) braided gauge transformation of a field $A\in V^1$ by a gauge parameter $\lambda\in V^0$ is defined by
\begin{align}\label{eq:LgtL} 
\delta_\lambda^\br A := \ell_1^\star(\lambda) + \sum_{n\geq1} \, \frac{(-1)^{n\choose 2}}{n!}\, \ell_{n+1}^\star(\lambda,A^{\otimes n}) 
 \ ,
\end{align}
as in the classical case \eqref{gaugetransfA}, where now the positioning of $\lambda$ inside the slots of the brackets $\ell_{n+1}^\star$ matters because of braided symmetry \eqref{eq:antisymmbraided}. The braided gauge variations are extended to maps $\delta^\br_\lambda:V\to V$ of degree~$0$ in the usual way as braided derivations. For the field theories we work with in this paper, whereby the algebra of gauge variations closes off-shell with field-independent gauge parameters, they close a braided Lie algebra under the braided commutator:
\begin{align}\label{eq:braidedclosure}
\big[\delta_{\lambda_1}^{\br},\delta_{\lambda_2}^{\br}\big]_\circ^\star = \delta_{-\ell^\star_2(\lambda_1,\lambda_2)}^{\br} \ .
\end{align}

The braided closure property \eqref{eq:braidedclosure} is shown in~\cite{Ciric:2021rhi} using the braided homotopy Jacobi identity \eqref{eq:homotopyJacobibraided} evaluated on $(\lambda_1,\lambda_2,A^{\otimes n})$, order by order in the expansion in $A^{\otimes n}$. The calculations are considerably more involved than in the classical case~\cite{BVChristian}, due to the braided Leibniz rule \eqref{eq:braidedLeibniz} obeyed by the braided gauge variations, and because the braided symmetry properties \eqref{eq:antisymmbraided} prevent one from combining various sums and simplifying the homotopy relations. The identities \eqref{eq:Rmatrixidsw} are instrumental in manipulating the proliferation of $\RR$-matrix factors which ensue. We take the validity of this result, and the others which follow below, as a highly non-trivial vindication that braided gauge symmetries and covariant dynamics are controlled by the tightly knit algebraic structure of a braided $L_\infty$-algebra.

The equations of motion are dictated by a braided version of the Maurer--Cartan equations $\frF_{\tA}^\star=0$ in $V^2[[\hbar]]$ for a field $A\in V^1$, where
\begin{align}\label{eq:braidedeom}
\frF^\star_{\tA} := \ell_1^\star(A) + \sum_{n \geq2} \, \frac{(-1)^{n\choose 2}}{n!}\, \ell^\star_{n}(A^{\otimes n}) \ ,
\end{align} 
exactly as in the classical case \eqref{EOM}. The dynamics are braided covariant, that is, the expression \eqref{eq:braidedeom} is covariant under the braided gauge variations~\eqref{eq:LgtL}:
\begin{align}\label{eq:lefteomcov}
\delta_\lambda^\br\, \frF^\star_{\tA} = \ell_2^\star(\lambda,\frF_{\tA}^\star) + \sum_{n \geq1} \, \frac{(-1)^{n\choose 2}}{(n+1)!} \ \sum_{i=0}^{n} \, (-1)^i \, \ell^\star_{n+2}\big(\lambda,A^{\otimes i},\frF^\star_{\tA},A^{\otimes n-i}\big) \ ,
\end{align}
for all gauge parameters $\lambda\in V^0$, as shown in~\cite{Ciric:2021rhi} order by order using the braided homotopy Jacobi identity \eqref{eq:homotopyJacobibraided} evaluated on $(\lambda,A^{\otimes n})$. In the classical case, where $\RR=1\otimes1$, the expression \eqref{eq:lefteomcov} agrees with \eqref{gaugetransfF}.

The absence of a (naive) moduli space of solutions to $\frF_{\tA}^\star=0$ that we observed in the case of braided Chern--Simons theory can be seen in this formalism to be a generic feature of noncommutative field theories with braided gauge symmetries: Although \eqref{eq:lefteomcov} implies that $\delta_\lambda^\br\frF_{\tA}^\star=0$ on solutions $A\in V^1$ of the field equations $\frF_{\tA}^\star=0$, computing the first order variation in any infinitesimal gauge parameter $\lambda\in V^0$ gives
\begin{align}
\frF^\star_{A+\delta_\lambda^\br A} = \ell_1^\star\big(A+\delta_\lambda^\br A\big) + \sum_{n\geq2} \, \frac{(-1)^{n\choose 2}}{n!} \, \Big( \ell_n^\star(A^{\otimes n}) + \sum_{i=0}^{n-1} \, \ell_n^\star\big(A^{\otimes i},\delta_\lambda^\br A, A^{\otimes n-1-i}\big) \Big) + O(\lambda^2) \ .
\end{align}
This cannot be brought to the form $\frF^\star_{\tA}+\delta_\lambda^\br\,\frF_{\tA}^\star+O(\lambda^2)$ because of the braided Leibniz rule \eqref{eq:braidedLeibniz}, and it will generally differ from the result of \eqref{eq:lefteomcov} by $\RR$-matrix factors in the sum over~$i$. Hence \smash{$\frF^\star_{A+\delta_\lambda^\br A} \neq \frF^\star_{\tA} + \delta_\lambda^\br\, \frF^\star_{\tA} + O(\lambda^2)$} in general, and it does not make sense to take the (naive) quotient of the subspace in $V^1$ of classical solutions by braided gauge transformations. We will return to this point in Section~\ref{sec:braidedMC} where we explain how a braided version of the Maurer--Cartan moduli space ${}^{\sU_\tCF\frv}\CCM\CCC(V[[\hbar]],\{\ell^\star_n\})$ can still be defined from a more abstract perspective.

From a physical perspective, generalized braided gauge symmetries are well-defined through a braided version of the Noether identities which hold off-shell and encode gauge redundancies, analogously to the classical case. These identities follow from the braided $L_\infty$-algebra formalism, which also naturally controls kinematical violations through the braided homotopy Jacobi identities, such as the well-known failure of the Bianchi identity in braided geometry that we discussed in Section~\ref{sec:braidedCS}. Analogously to the classical case, the braided Noether identities follow from a weighted sum over $n\geq1$ of the braided homotopy Jacobi identity \eqref{eq:homotopyJacobibraided} evaluated on $A^{\otimes n}$. However, because of the braided symmetry properties \eqref{eq:antisymmbraided}, we do not have available the simplification \eqref{eq:homotopyJacobidegree1} which took place in the classical case and the weighted sum is very cumbersome to work with. 

For the field theories we treat in this paper, where the gauge transformations are at most linear in the field $A$, the braided Noether identities are encoded by~\cite{Ciric:2021rhi} 
\begin{align}\label{braidedNoether}
\begin{split}
\dsf^\star_{\tA}\frF^\star_{\tA}  &:= \ell_1^\star(\frF^\star_{\tA}) + \frac12\,\big(\ell_2^\star(\frF^\star_{\tA},A) - \ell^\star_2(A,\frF^\star_{\tA})\big) - \sum_{n\geq3} \, \frac{(-1)^{n\choose 2}}{n!} \, \ell_1^\star\big(\ell^\star_{n}(A^{\otimes n})\big) \\
& \quad \, -\sum_{n\geq2} \, \frac{(-1)^{n\choose 2}}{2\,n!} \, \Big(\ell^\star_2\big(\ell^\star_n(A^{\otimes n}),A\big) - \ell_2^\star\big(A,\ell_n^\star(A^{\otimes n})\big)\Big) \ = \ 0 \ .
\end{split}
\end{align}
This uses the simplification provided by the  braided homotopy Jacobi identity \eqref{eq:homotopyJacobibraided} evaluated on $A^{\otimes n+1}$ which in this case reads
\begin{align}\label{eq:homotopybraidedAn+1}
\begin{split}
\sum_{i=1}^{n-1} \, \ell_2^\star\big(\ell_n^\star(A^{\otimes n-i},\sfR_k(A^{\otimes i})),\sfR^k(A)\big) &= \ell_{2}^\star\big(A,\ell_{n}^\star(A^{\otimes n})\big)-\ell_2^\star\big(\ell_{n}^\star(A^{\otimes n}),A\big) \\ & \quad \, - (-1)^{n} \, \ell_{1}^\star \big(\ell_{n+1}^\star(A^{\otimes n+1})\big) \ ,
\end{split}
\end{align}
for all $n\geq2$.
Unlike the classical case \eqref{eq:Noether}, the braided Noether identity \eqref{braidedNoether} is no longer linear in the field equations $\frF_{\tA}^\star$, while the inhomogeneous terms involving solely the field $A$ cancel in the classical limit $\RR=1\otimes1$ after using \eqref{eq:homotopybraidedAn+1}. This generalizes the braided Chern--Simons Noether identity in the second line of \eqref{eq:braidedCSLinfty}, which is recovered after applying \eqref{eq:homotopybraidedAn+1}.

While the braided Noether identities evidently exhibit gauge redundancies through inhomogeneous differential identities among the field equations, we can understand their precise correspondence to braided gauge symmetries by appealing to a variational principle for the braided field theory. Using the cyclic inner product $\langle-,-\rangle_\star$ on our braided $L_\infty$-algebra we define a braided version of the Maurer--Cartan action functional by
\begin{align}\label{eq:braidedaction}
S_\star(A) = \frac12\,\langle A,\ell_1^\star(A)\rangle_\star + \sum_{n\geq2} \, \frac{(-1)^{n\choose 2}}{(n+1)!}\, \langle A, \ell^\star_{n}(A^{\otimes n})\rangle_\star \ ,
\end{align}
for all fields $A\in V^1$, exactly as in the classical case \eqref{action}. 

If $\CF$ is a compatible Drinfel'd twist, in the sense of Definition~\ref{def:compatibletwist}, then using the ordinary Leibniz rule for general field variations $\delta A$ on the tangent space $TV^1$ to the space of fields we can easily compute the variation of \eqref{eq:braidedaction} exactly as in the classical case to get
\begin{align}
\begin{split}
\delta S_\star(A) &= \sum_{n\geq1} \, \frac{(-1)^{n\choose 2}}{(n+1)!} \, \Big( \langle\delta A,\ell_n^\star(A^{\otimes n})\rangle_\star + \sum_{i=0}^{n-1} \, \langle A,\ell_n^\star(A^{\otimes i},\delta A,A^{\otimes n-i-1})\rangle_\star\Big) \\[4pt]
&= \sum_{n\geq1} \, \frac{(-1)^{n\choose 2}}{n!} \, \langle\delta A,\ell^\star_n(A^{\otimes n})\rangle_\star = \langle \delta A,\frF^\star_{\tA}\rangle_\star \ ,
\end{split}
\end{align}
where in the second equality we used strict cyclicity for compatible twists (see Corollary~\ref{cor:compatiblestrict}) to set all terms in the sum over $i$ equal to the first term. Hence by non-degeneracy of the pairing, the extrema of the action functional \eqref{eq:braidedaction} are precisely the solutions $A\in V^1$ of the equations of motion $\frF^\star_{\tA}=0$.

Braided gauge invariance of the action functional \eqref{eq:braidedaction} is a bit more subtle, because the braided derivation property of $\delta_\lambda^\br$ is not generally compatible with the strict symmetry \eqref{eq:compatibletwist} of the inner product. Nevertheless, the gauge variation $\delta_\lambda^\br S_\star(A)$ of the action functional \eqref{eq:braidedaction} \emph{is} well-defined, because the strict cyclicity condition \eqref{eq:strictcyclic} ensures compatibility with strict symmetry of the pairing order by order in tensor powers of the fields $A^{\otimes n}$ for $n\geq 1$~\cite{Ciric:2021rhi}, that is,
\begin{align}\label{eq:deltaLsym}
\delta_\lambda^\br\big\langle A,\ell^\star_n\big(A^{\otimes n}\big)\big\rangle_\star = \delta_\lambda^\br\big\langle\ell^\star_n\big(A^{\otimes n}\big),A\big\rangle_\star \ .
\end{align}
However, in contrast to the classical case, a braided gauge variation $\delta_\lambda^\br A$ of a field $A\in V^1$ along a gauge parameter $\lambda\in V^0$ is no longer a special direction of a general variation $\delta A$ of the field in the tangent space $TV^1$, because the gauge variations are now defined as \emph{braided} derivations, in contrast to the general variations. In other words, in the braided case the first equality in \eqref{eq:gtNoether} simply no longer holds in general:
\begin{align}\label{eq:deltalambdapair}
\delta_\lambda^\br S_\star(A) \ \neq \ \langle \delta_\lambda^\br A,\frF^\star_{\tA}\rangle_\star \ ,
\end{align}
which can be seen explicitly by direct calculation.

Braided gauge invariance of the action functional, that is, $\delta_\lambda^\br S_\star(A)=0$ for all $\lambda\in V^0$, follows from an order by order calculation using the braided homotopy Jacobi identity \eqref{eq:homotopyJacobibraided} applied to $A^{\otimes n+1}$~\cite{Ciric:2021rhi}; this holds irrespectively of the strict or braided cyclicity of the inner product, that is, it is true for generic twists $\CF$. Despite \eqref{eq:deltalambdapair}, the gauge variation of the action functional \eqref{eq:braidedaction} can moreover be written in the form
\begin{align}
\delta_\lambda^\br S_\star(A) = -\langle\lambda,\dsf_{\tA}^\star \frF^\star_{\tA}\rangle_\star \ ,
\end{align}
exactly as in the classical case \eqref{eq:gtNoether}, by using cyclicity together with the $\RR$-matrix identities \eqref{eq:Rmatrixidsw} to isolate the gauge parameter $\lambda$. In this sense the operation $\sfd^\star_{\tA}$ is the `braided adjoint' to $-\delta^\br_\lambda$ with respect to the cyclic inner product, and the braided Noether identities $\dsf^\star_{\tA} \frF^\star_{\tA}=0$ then follow from braided gauge invariance $\delta_\lambda^\br S_\star(A)=0$ of the action functional \eqref{eq:braidedaction}. We stress, however, that $\sfd^\star_A$ is not what would normally be called the adjoint operator with respect to a non-degenerate pairing, as for example in~\eqref{eq:level1NoetherOp}, as this notion would take as starting point the right-hand side $\langle \delta_\lambda^\br A,\frF^\star_A\rangle_\star $ of the non-equality~\eqref{eq:deltalambdapair}. Since $\langle \delta_\lambda^\br A,\frF^\star_A\rangle_\star\neq 0$, the usual adjoint operator does not compose to zero with the braided Maurer--Cartan operator, as is reflected directly in the braided Noether identities \eqref{braidedNoether}.

The procedure described here yields systematic constructions of \emph{new} examples of noncommutative field theories which are braided deformations of classical field theories. These braided theories contain no new degrees of freedom compared with their classical versions, while still possessing good classical limits at $\hbar=0$. Unlike our motivating example of Section~\ref{sec:braidedCS}, in some cases the braided field theories exhibit some surprising features that are not simply the naive deformations of those found in the classical cases, in startling contrast to the conventional noncommutative field theories discussed in Section~\ref{sec:NCLinfty}. In addition to the braided Chern--Simons theory discussed in Section~\ref{sec:braidedCS}, the examples of braided Einstein--Cartan--Palatini theories of gravity in three and four dimensions are worked out in detail in~\cite{Ciric:2021rhi}. 

In the subsequent sections we present three new examples: the standard noncommutative scalar field theory (regarded as a braided field theory), a braided version of $BF$ theory in arbitrary dimensionality, and a braided version of noncommutative Yang--Mills theory for arbitrary gauge algebras. In the remainder of the present section we address some further technical points surrounding the constructions above.

\subsection{Reality conditions}

We have not yet addressed the reality conditions that are needed to make physical sense of the field equations as well as of the action functional of a braided field theory. For this, we need to restrict to special twists.
\begin{definition}\label{def:Hermitian}
A Drinfel'd twist $\CF\in \sU\frv[[\hbar]]\otimes \sU\frv[[\hbar]]$ is \emph{Hermitian} if it satisfies the reality condition $\CF^*=\CF_{21}$, where ${}^*$ denotes complex conjugation. 
\end{definition}
For the inverse twist $\CF^{-1}=\bar\sff^k\otimes\bar\sff_k$ the condition of Definition~\ref{def:Hermitian} reads 
	\begin{align}\label{eq:Twistherm}
		\bar\sff^k{}^*\otimes\bar\sff_k{}^* =
		\bar\sff_k\otimes \bar\sff^k \ .
\end{align}
In particular, this turns the conjugation isomorphism into an anti-involution of the braided exterior algebra, that is,
\begin{align}
(\alpha\wedge_\star\beta)^* = (-1)^{|\alpha|\,|\beta|} \, \beta^*\wedge_\star\alpha^*
\end{align}
for homogeneous forms $\alpha,\beta\in\Omega^\bullet(M)[[\hbar]]$, and in this case we say that $\wedge_\star$ is a \emph{Hermitian star-product}. A large class of examples is provided by abelian twists (see Example~\ref{ex:abeliantwist}), which are Hermitian when we use the convention $\hbar^\ast=\hbar$ for the formal deformation parameter. 

\begin{example}\label{ex:realcovder}
In the setting of Section~\ref{sec:braidedgaugesym}, let $A\in\Omega^1(M,\frg)[[\hbar]]$ be a braided gauge field and $\phi\in\Omega^p(M,W)[[\hbar]]$ a braided matter field in a braided representation $W$ of a gauge algebra $\frg$, which are both real: $A^*=A$ and $\phi^*=\phi$. If the twist $\CF$ is Hermitian, then the complex conjugate of the left braided covariant derivative of $\phi$ from \eqref{eq:leftrightcovderiv} is given by
\begin{align}
\begin{split}
\big(\dd_{\star\lact}^{\tA}\phi\big)^* &= \big(\dd\phi + A\wedge_\star\phi\big)^* \\[4pt]
&= \dd\phi^* + \big(\bar\sff^k(A)\wedge\bar\sff_k(\phi)\big)^\ast \\[4pt]
&= \dd\phi^* + \bar\sff_k(A^*)\wedge\bar\sff^k(\phi^*) \\[4pt]
&= \dd\phi + \sfR_k(A)\wedge_\star\sfR^k(\phi) = \dd_{\star\ract}^{\tA}\phi \ ,
\end{split}
\end{align}
where in the last line we used $\CF_{21}^{-1} = \CF^{-1}\,\RR^{-1}$.
Thus neither the left nor right braided covariant derivative preserves the reality  of the matter field $\phi$ in general. However, their symmetrized combination
\begin{align}
\tfrac12\,\big(\dd_{\star\lact}^{\tA} + \dd_{\star\ract}^{\tA}\big)
\end{align}
does preserve reality, and in examples this is the combination that will naturally appear, as guaranteed by the general formalism below (cf.\ Section~\ref{sec:braidedCS}).
\end{example}

Let $(V,\{\ell_n\},\langle -,- \rangle)$ be a cyclic 4-term
	$L_\infty$-algebra over $\FR$ encoding a classical Lagrangian field theory on a manifold $M$. This theory describes real dynamical fields, that is, they belong to a real vector space $V^1$ which is  a module over $C^{\infty}(M,\FR)$. Similarly, the field equations are real and the action functional is real-valued. Twist deformation via a compatible Hermitian Drinfel'd twist $\CF$ yields a 4-term strictly cyclic braided $L_\infty$-algebra
	$(V[[\hbar]],\{\ell_n^\star\}, \langle-,-\rangle_\star)$ over $\FC[[\hbar]]$. We will now show that the reality  of the classical field theory survives this deformation.

Firstly, the braided field equations $\frF_{\tA}^\star\in V^2[[\hbar]]$ are real when restricted to real fields $A\in V^1[[\hbar]]$. Conjugating the brackets $\ell_{n}^\star(A^{\otimes n}) \in V^{2}[[\hbar]]$ yields brackets in the complex conjugate vector space $V^2[[\hbar]]^*$ given by
\begin{align}
		\begin{split}
			\ell_{n}^\star(A^{\otimes n})^{*}&=\ell_n\big(\bar\sff^{k_1}(A) \otimes \bar\sff_{k_1}(\bar \sff^{k_2}( A)\otimes \bar \sff_{k_2}(\cdots \bar \sff^{k_{n-1}}( A) \otimes \bar\sff_{k_{n-1}}(A))\cdots))\big)^* 
			\\[4pt]
			&= \ell_{n}\big(\bar\sff_{k_1}(A^*) \otimes \bar\sff^{k_1}(\bar \sff_{k_2}( A^*)\otimes \bar \sff^{k_2}(\cdots \bar \sff_{k_{n-1}}( A^* )\otimes \bar\sff^{k_{n-1}}( A^*))\cdots))\big)\\[4pt]
			&=\ell_{n}\big(\sfR_{k_1}(A^*) \otimes_\star \sfR^{k_1}(\sfR_{k_2}( A^*) \otimes_\star \sfR^{k_2}(\cdots \sfR_{k_{n-1}} (A^*)\otimes_\star \sfR^{k_{n-1}}( A^*))\cdots)) \big)\\[4pt]
			&= \ell_n^\star\big(\sfR_{k_1}( A^*) \otimes \sfR^{k_1}(\sfR_{k_2}( A^*) \otimes \sfR^{k_2}(\cdots \sfR_{k_{n-1}}( A^*)\otimes \sfR^{k_{n-1}}( A^*))\cdots)) \big) \\[4pt]
			&= \ell_{n}^\star(A^*{}^{\otimes n}) \ ,
		\end{split}
	\end{align}
where $A^*\in V^1[[\hbar]]^*$. In the second line we repeatedly used Hermiticity of the twist and compatibility of the classical brackets with complex conjugation, ${}^*\circ \ell_{n} =\ell_{n}\circ {}^*$ (as they are $\FR$-multilinear maps extended linearly over the complexification of the underlying vector space), and in the third line we used the definition of the $\CR$-matrix $\CR=\CF_{21}\,\CF^{-1}$. In the last two lines we used the definition of the braided brackets along with their braided graded antisymmetry. This shows 
\begin{align}
(\frF_{\tA}^\star)^* = \frF_{A^*}^\star \ ,
\end{align}
and in particular the field equations are real on the subspace of fields where the conjugation isomorphism $A\mapsto A^*$ is the identity.

Secondly, the braided action functional $S_\star:V^1[[\hbar]]\to\FC[[\hbar]]$ is real-valued when restricted to real fields. This follows similarly to the computation above:
\begin{align}\label{eq:realbraidedaction}
		\begin{split}
\big[\langle A, \ell_{n}^\star (A^{\otimes n})\rangle_\star\big]^* 
			&= \big[\big\langle \bar\sff^k (A), \bar\sff_k\big( \ell_{n}^\star (A^{\otimes n})\big)\big\rangle \big]^*	\\[4pt]
			&= \big\langle \bar \sff_k  (A^*), \bar \sff^k \big(\ell_{n}^\star (A^{\otimes n})^*\big)\big\rangle \\[4pt]
			&= \big\langle \sfR_k(A^*), \sfR^k\big( \ell_{n}^\star (A^*{}^{\otimes n})\big)\big\rangle_\star = \langle A^*, \ell_{n}^\star (A^*{}^{\otimes n})\rangle_\star \ ,
		\end{split}
	\end{align}
where in the last equality we additionally used the compatibility of the twist. This shows that $S_\star(A)^* = S_\star(A^*)$, and the required reality follows.

\subsection{Braided BV formalism}
\label{sec:braidedBV}

Let us now briefly explain how the notion of cyclic braided $L_\infty$-algebra $(V[[\hbar]],\{\ell_n^\star\},\langle-,-\rangle_\star)$ in terms of star-brackets and braided homotopy relations is related to braided versions of the algebraic and coalgebraic formulations from Section~\ref{sec:whatisLinfty}. The Drinfel'd twist deformation prescription defines the {braided symmetric algebra} with deformed symmetric product $\odot_\star:=\odot\circ\CF^{-1}$, which we denote by $\Sym_\br(V[1])$. On generators $v,v'\in V[1]$ this product is braided graded commutative:
\begin{align}
v\odot_\star v' = (-1)^{|v|\,|v'|} \, \sfR_k(v')\odot_\star\sfR^k(v) \ .
\end{align} 

Following~\cite{Nguyen:2021rsa}, it is straightforward to develop the Chevalley--Eilenberg algebra of a braided $L_\infty$-algebra (as a braided noncommutative deformation of Definition~\ref{def:CEalgebra}), that is, a braided version of the BV formalism presented in Section~\ref{sec:BV}. For this, we extend the braided $L_\infty$-algebra structure on $V[[\hbar]]$ to $\Sym_\br(V[2])\otimes V[[\hbar]]$ via the brackets
\begin{align}
\begin{split}
\ell_n^{\star\,{\rm ext}}(\zeta_1\otimes v_1,\dots,\zeta_n\otimes v_n) :&\!= \ell_n^{\,\rm ext}\big((\zeta_1\otimes v_1)\otimes_\star\cdots\otimes_\star(\zeta_n\otimes v_n)\big) \\[4pt]
&=  \pm\,\big(\zeta_1\odot_\star\sfR_{k_1^{\textrm{\tiny1}}}(\zeta_2)\odot_\star\cdots\odot_\star\sfR_{k_{n-1}^{1}}\cdots \sfR_{k_1^{n-1}}(\zeta_n)\big) \\
& \hspace{2cm} \, \otimes \, \ell_n^{\star}\big(\sfR^{k_{n-1}^{1}}\cdots\sfR^{k_1^{1}}(v_1),\dots,\sfR^{k_1^{n-1}}(v_{n-1}),v_n\big) \ ,
\end{split}
\end{align}
for all $n\geq1$, $\zeta_1,\dots,\zeta_n\in\Sym_\br(V[2])$ and $v_1,\dots,v_n\in V$. Similarly, we extend the cyclic structure via the pairing
\begin{align}
\begin{split}
\langle\zeta_1\otimes v_1,\zeta_2\otimes v_2\rangle_\star^{\rm ext} := \langle(\zeta_1\otimes v_1)\otimes_\star(\zeta_2\otimes v_2)\rangle^{\rm ext} = \pm\,\big(\zeta_1\odot_\star\sfR_k(\zeta_2)\big) \, \langle \sfR^k(v_1),v_2\rangle_\star \ .
\end{split}
\end{align}

We can now define a braided version of the BV differential $Q_\BV^\star:\Sym_\br (V[1])^*\to \Sym_\br (V[1])^*$ by 
\begin{align}
Q_\BV^\star\xi = -\sum_{n\geq1} \, \frac{(-1)^{n\choose 2}}{n!} \, \ell_n^{\star\,\rm ext}(\xi^{\otimes n})  \ ,
\end{align}
where the contracted coordinate functions
\begin{align}
\xi := \tau^\alpha\otimes\tau_\alpha \ \in \ \Sym_\br(V[2])\otimes V[[\hbar]]
\end{align}
of degree~$1$ are defined by a choice of basis $\{\tau_\alpha\}\subset V$ with dual basis $\{\tau^\alpha\}\subset V^*\simeq V[3]$. Similarly, we define a braided version of the BV symplectic $2$-form
\begin{align}
\omega_\BV^\star := -\tfrac12\,\langle\delta\xi,\delta\xi\rangle_\star^{\rm ext} \ \in \ \Omega^2(V[1])[[\hbar]] \ ,
\end{align}
whose inverse is a braided graded Poisson bracket $\{-,-\}_\BV^\star$, that is, a braided graded Lie bracket which is a braided graded derivation on $\Sym_\br(V[2])$ in each of its slots, and which moreover is compatible with the differential $Q_\BV^\star$. The action of the braided BV differential can then be represented in terms of this braided Poisson bracket as
\begin{align}
Q_\BV^\star = \{S^\star_\BV,-\}^\star_\BV \ ,
\end{align}
where the braided BV action functional
\begin{align}
S^\star_\BV = \sum_{n\geq1} \, \frac{(-1)^{n\choose 2}}{(n+1)!} \, \langle\xi,\ell_n^{\star\,\rm ext}(\xi^{\otimes n})\rangle_\star^{\rm ext} \ \in \ \Sym_\br(V[2])
\end{align}
of degree~$0$ satisfies the {classical master equation}
\begin{align}
\{S^\star_\BV,S^\star_\BV\}^\star_\BV = 0 \ .
\end{align}

A crucial point here is that the classical master equation, or equivalently nilpotency $(Q_\BV^\star)^2=0$, follows from precisely the same calculation as in the ordinary case~\cite[Section~4.3]{BVChristian}. This is because the contracted coordinate functions $\xi$ are $\sU_\tCF\frv$-invariant elements of $\Sym_\br(V[2])\otimes V[[\hbar]]$,\footnote{Invariance of $\xi$ follows from either a direct calculation or by the $\sU_\tCF\frv$-isomorphism \smash{$V\otimes V^*\simeq{\rm Hom}_{{}_{\sU_\tCF\frv}\CCM}(V,V)$}, under which $\xi$ corresponds to the identity morphism on $V$.} and hence all appearances of $\RR$-matrices in the properties of the extended brackets $\ell_n^{\star\,\rm ext}$ and of the extended pairing $\langle-,-\rangle_\star^{\rm ext}$ disappear when evaluated on tensor powers of the $\sU_\tCF\frv$-invariant element $\xi$. Another important feature is that this differential graded algebra perspective circumvents the issues discussed in Section~\ref{sec:Linftytwist} with defining a moduli space ${}^{\sU_\tCF\frv}\CCM\CCC(V[[\hbar]],\{\ell^\star_n\})$ of classical solutions in the braided setting. This is because it characterises this space by its differential graded algebra ${}^{\sU_\tCF\frv}\CCA_\MC$ of equivariant functions, and so the (coarse) moduli space may be defined through the degree~$0$ cohomology of the differential graded braided commutative algebra $\big(\Sym_\br(V[1])^*,Q_\BV^\star\big)$. These two observations will be combined together in Section~\ref{sec:braidedMC} below to sketch a precise description of the moduli space ${}^{\sU_\tCF\frv}\CCM\CCC(V[[\hbar]],\{\ell^\star_n\})$ which resolves the technical issues mentioned in Section~\ref{sec:Linftytwist}.

In a similar vein, one can derive a duality with differential graded braided cocommutative coalgebras (as a braided noncommutative deformation of Definition~\ref{def:coalgebra}). The braided symmetric algebra $\Sym_\br(V[1])$ has a natural free noncocommutative coalgebra structure with coproduct $\Delta_{\textrm{\tiny$V$}}^\star:\Sym_\br(V[1])\to\Sym_\br(V[1])\otimes\Sym_\br(V[1])$ obtained by Drinfel'd twist deformation of the cocommutative coproduct \eqref{eq:DeltaV} in the usual way, replacing permutation actions $\tau^\sigma:V[1]^{\otimes n}\to V[1]^{\otimes n}$ with $\tau_\br^\sigma$ for $\sigma\in S_n$. The braided graded symmetric multilinear maps
\begin{align}
b_n^\star := s\circ\ell_n^\star\circ(s^{-1})^{\otimes n}
\end{align}
of degree~$1$ define a map $\sum_{n\geq1} \, b_n^\star:\Sym_\br(V[1])\to (V[1])[[\hbar]]$, which uniquely extends  to a coderivation
\begin{align}
D_\star:\Sym_\br(V[1])\longrightarrow\Sym_\br(V[1])
\end{align}
of degree~$1$. This is a differential, $(D_\star)^2=0$, as a consequence of the braided homotopy Jacobi identities, and it too provides a natural setting for the moduli space of classical solutions ${}^{\sU_\tCF\frv}\CCM\CCC(V[[\hbar]],\{\ell^\star_n\})$ as the degree~$0$ cohomology of the corresponding differential graded braided cocommutative coalgebra $(\Sym_\br(V[1]),D_\star)$. 

The coalgebra formulation also provides a natural interpretation of a braided $L_\infty$-algebra as a deformation of an $L_\infty$-algebra in the category ${}_{\sU\frv}\CCM$~\cite{Jonketalk}: The classical differential graded coalgebra $(\Sym(V[1]),D)$ naturally inherits the structure of a cocommutative Hopf algebra compatible with the differential $D$,\footnote{More precisely, here one needs to work with the unreduced symmetric algebra and curved $L_\infty$-algebras.} which can be twisted to a new noncocommutative Hopf algebra by the techniques of Section~\ref{sec:Drinfeldtwist} with a compatible differential $D_\tCF$. The braided $L_\infty$-algebra $(V[[\hbar]],D_\star)$ may then be interpreted as an $L_\infty$-module for the resulting twisted $L_\infty$-algebra.

\subsection{Moduli spaces of classical solutions}
\label{sec:braidedMC}

We conclude this section by briefly sketching how to potentially make sense of the moduli space of classical solutions ${}^{\sU_\tCF\frv}\CCM\CCC(V[[\hbar]],\{\ell^\star_n\})$ in braided field theory, which we refer to as the `braided Maurer--Cartan moduli space'. We shall be intentionally brief and gloss over many technical details in order to illustrate the main ideas;\footnote{Nevertheless, this is the most technical part of the paper. The reader not interested in the formal descriptions of moduli spaces may safely skip this section, which is not essential to the rest of the paper.} a more thorough and rigorous construction will be presented elsewhere. We start by recalling the description of the classical Maurer--Cartan moduli space $\CCM\CCC(V,\{\ell_n\})$ of an $L_\infty$-algebra $(V,\{\ell_n\})$, whose construction is originally due to Getzler~\cite{getzler09}, see also~\cite{vallette20} for a more explicit alternative construction and~\cite{Manetti05} for the special case when $(V,\ell_1,\ell_2)$ is a differential graded Lie algebra (as in the example of Chern--Simons gauge theory).

We begin with some preliminary notation. Let $\dAlg$ denote the category of differential graded commutative algebras over $\FR$, whose morphisms are algebra homomorphisms of degree~$0$ which intertwine the differentials.\footnote{\label{fn:inftycategory} The appropriate homotopical analogue generalizing the spectrum of a commutative algebra to differential graded commutative algebras takes place in the $\infty$-category of `simplicial' sets, but this is not needed in our presentation below and we shall work solely with the category $\Set$ of sets. Similarly, to define appropriate notions of representability of functors we should work with the $\infty$-category of `Artin' differential graded commutative algebras in non-positive degrees, but we do not indicate this detail and stick to the full source category $\dAlg$ below.}
Let $\CL=(V,\{\ell_n\})$ be a classical $L_\infty$-algebra (in the category of real vector spaces). Given a differential graded commutative algebra $\sA=(\CA,\dsf_{\textrm{\tiny$\CA$}})$ (with the product on $\CA$  denoted by juxtaposition), there is a tensor product $L_\infty$-algebra $\sA\otimes\CL=\big(\CA\otimes V,\big\{\ell_n^\tsA\big\}\big)$ with homogeneous subspaces $(\CA\otimes V)^k=\bigoplus_{l\in\RZ}\, \CA^{-l}\otimes V^{l+k}$ for $k\in\RZ$ and brackets given by
\begin{align}\label{eq:tensorLinfty}
\begin{split}
\ell_1^\tsA(a_1\otimes v_1) = (\dsf_{\textrm{\tiny$\CA$}}+\ell_1) (a_1\otimes v_1) :&\!= \dsf_{\textrm{\tiny$\CA$}}a_1\otimes v_1 + (-1)^{|a_1|} \, a_1\otimes \ell_1(v_1) \ , \\[4pt] 
\ell_n^\tsA(a_1\otimes v_1,\dots,a_n\otimes v_n) &= \pm\,(a_1\cdots a_n)\otimes\ell_n(v_1,\dots,v_n) \ ,
\end{split}
\end{align}
for all $n\geq 2$, $a_1,\dots,a_n\in \CA$ and $v_1,\dots,v_n\in V$. This generalizes the extended $L_\infty$-algebra construction of Section~\ref{sec:BV}.

Following~\cite{vallette20}, the (coarse) Maurer--Cartan moduli space $\CCM\CCC(\CL)$ of the $L_\infty$-algebra $\CL=(V,\{\ell_n\})$ is the set of solutions $A\in V^1$ of the Maurer--Cartan equation $\frF_A=0$, modulo the gauge equivalence relation $A\sim A'$ if there exists a gauge parameter $\lambda\in V^0$ and a path of fields $A(t)\in V^1$ for $t\in[0,1]$ with $A(0)=A$, $A(1)=A'$ which satisfies the differential equation
\begin{align}\label{eq:MCODE}
\frac{\dd}{\dd t} A(t) = \delta_\lambda A(t) 
\end{align}
for all $t\in[0,1]$. The infinitesimal version of \eqref{eq:MCODE} (its evaluation at $t=0$) gives the usual infinitesimal gauge transformation $\delta_\lambda A$ of $A$. With the given initial condition and an appropriate completeness condition on the $L_\infty$-algebra $\CL$, the differential equation \eqref{eq:MCODE} always admits a solution $A(t)$ which satisfies the Maurer--Cartan equation for all $t\in[0,1]$, that is, $\frF_{A(t)}=0$. In the case that $\CL$ is a differential graded Lie algebra, this definition is equivalent to the equivalence relation arising from the (integrated) group action of $V^0$ on $V^1$ (see e.g.~\cite{Manetti05}). 

The moduli space $\CCM\CCC(\CL)$ is a (formal derived) stack that is described by its `functor of points' $\CCF_\CL:\dAlg\to\Set$, as we now explain. This defines a presheaf on the category of differential graded affine schemes over $\FR$, and so from this perspective the moduli space is described not as a set of objects but rather as a space which parametrizes those objects.

\begin{definition}\label{def:MCfamilies}
Let $\CL=(V,\{\ell_n\})$ be an $L_\infty$-algebra and $\sA=(\CA,\dsf_{\textrm{\tiny$\CA$}})$ a differential graded commutative algebra.
\begin{myenumerate} 
\item[(a)] A \emph{family of classical solutions} in $\CL$ parametrized by $\sA$ is an element $\alpha=a^l\otimes v_l$, with $a^l\in\CA^{-l}$ and $v_l\in V^{l+1}$ for $l\in\RZ$, which solves the Maurer--Cartan equation 
\begin{align}
(\dsf_{\textrm{\tiny$\CA$}}+\ell_1)\alpha + \sum_{n\geq 2} \, \frac{(-1)^{n\choose 2}}{n!} \, \ell_n^\tsA(\alpha^{\otimes n}) = 0
\end{align}
in the $L_\infty$-algebra $\sA\otimes\CL$.
\item[(b)] Two families $\alpha$ and $\alpha'$ of classical solutions in $\CL$ are \emph{gauge equivalent} if there exists $\lambda=b^l\otimes w_l$, with $b^l\in \CA^{-k}$ and $ w_l\in V^k$ for $l\in\RZ$, and a path $\alpha(t)=a^l(t)\otimes v_l(t)$ in the $L_\infty$-algebra $\sA\otimes \CL$ which satisfies the differential equation 
\begin{align}
\frac\dd{\dd t}\alpha(t) = (\dsf_{\textrm{\tiny$\CA$}} + \ell_1)\lambda + \sum_{n \geq1} \, \frac{(-1)^{n\choose 2}}{n!} \, \ell_{n+1}^\tsA\big(\lambda,\alpha(t)^{\otimes n}\big) \ ,
\end{align}
such that $\alpha=\alpha(0)$ and $\alpha'=\alpha(1)$.
\end{myenumerate}
\end{definition}

For a family parametrized by a one-point space, that is, for $\sA = \FR_0 := (\FR,0)$ with $\FR$ sitting in degree~$0$, this definition reduces to the notion of gauge equivalence of classical solutions discussed above.

\begin{definition}\label{def:MCspace}
Let $\CL$ be an $L_\infty$-algebra.
\begin{myenumerate}
\item[(a)] The \emph{moduli functor} is the covariant functor
$
\CCF_\CL:\dAlg\to\Set
$
which associates to every differential graded commutative algebra $\sA$ the set $\CCF_\CL(\sA)$ of gauge equivalence classes of families of classical solutions in $\CL$ parametrized by $\sA$, and to every morphism $f:\sA\to\sB$ of differential graded commutative algebras it associates the pushforward $\CCF_\CL(f)$ sending families parametrized by $\sA$ to families parametrized by $\sB$.
\item[(b)] A (\emph{fine}) \emph{Maurer--Cartan moduli space} $\CCM\CCC(\CL)$ is a representation of the moduli functor, that is, a pair $(\hat\sA,\mathsf{\Phi})$ where $\hat\sA$ is an object of the category $\dAlg$ and ${\mathsf\Phi}:{\rm Hom}_\dAlg(\hat\sA,-)\to\CCF_\CL$ is a natural transformation of functors.
\end{myenumerate}
\end{definition}

The idea behind this definition is that the functor \smash{${\rm Hom}_\dAlg(\hat\sA,-):\dAlg\to\Set$} describes all possible ways that any other differential graded affine scheme may be mapped to the moduli space, which completely captures the (derived) geometric structure of the moduli scheme given by the spectrum of the differential graded commutative algebra $\hat\sA$. The distinguished element \smash{$\hat\alpha={\sf\Phi}(\unit_{\hat A})\in\CCF_\CL(\hat\sA)$} is called the \emph{universal classical solution}: For every pair $(\sA,\alpha)$ with $\sA$ an object of $\dAlg$ and $\alpha\in\CCF_\CL(\sA)$, there is a unique morphism ${\sf\Lambda}\in{\rm Hom}_\dAlg(\hat\sA,\sA)$ such that $(\CCF_\CL\circ{\sf\Lambda})(\hat\alpha)=\alpha$. When it exists, a universal object \smash{$(\hat\sA,\hat\alpha)$} is unique up to a unique isomorphism, and in this sense it is the space of \emph{all} classical solutions modulo gauge transformations. For a more accurate and explicit description of the Maurer--Cartan moduli space using simplicial methods, see~\cite{vallette20}. 

This description of the moduli space of classical solutions can be adapted to the braided noncommutative setting in the following way, which uses a description of functors of points similar to those studied in~\cite{Barnes:2016bmg} in the context of toric noncommutative geometry. For this, let now $\CL=(V,\{\ell_n\})$ be an $L_\infty$-algebra in the category ${}_{\sU\frv}\CCM$ of left $\sU\frv$-modules. Similarly, we work with differential graded commutative algebras in the category ${}_{\sU\frv}\CCM$. These form the category ${}^{\sU\frv}\dAlg$ of differential graded commutative left $\sU\frv$-module algebras, whose morphisms are equivariant algebra homomorphisms of degree~$0$ intertwining the differentials. The description of the Maurer--Cartan moduli space $\CCM\CCC(\CL)$ proceeds exactly as above, by restricting the moduli functor $\CCF_\CL:\dAlg\to\Set$ to the subcategory~${}^{\sU\frv}\dAlg$.

Given a Drinfel'd twist $\CF\in \sU\frv[[\hbar]]\otimes \sU\frv[[\hbar]]$, by Proposition~\ref{prop:braidedfromclassical} we obtain a braided $L_\infty$-algebra $\CL^\star=(V[[\hbar]],\{\ell_n^\star\})$. Similarly, by the usual constructions of Section~\ref{sec:Drinfeldtwist} we obtain a (functorially equivalent) category ${}^{\sU_\tCF\frv}\dAlg$ of differential graded braided commutative algebras. An object $\sA^\star=(\CA,\dsf_{\textrm{\tiny$\CA$}})$ of this category consists of a graded braided commutative algebra $\CA$ (with the product on $\CA$ denoted by $\star$), that is,
\begin{align}
a\star a' = (-1)^{|a|\,|a'|} \, \sfR_k(a')\star \sfR^k(a) \ ,
\end{align}
for all homogeneous $a,a'\in\CA$, together with a differential $\dsf_{\textrm{\tiny$\CA$}}:\CA\to\CA$ of degree~$1$ which is a graded derivation, that is, $\dsf_{\textrm{\tiny$\CA$}}(a\star a') = \dsf_{\textrm{\tiny$\CA$}}a \star a' + (-1)^{|a|} \, a\star\dsf_{\textrm{\tiny$\CA$}}a'$. There is a braided tensor product $L_\infty$-algebra $\sA^\star\otimes\CL^\star=\big(\CA\otimes V[[\hbar]],\big\{\ell_n^{\star\tsA}\big\}\big)$ whose differential is unchanged from the classical case \eqref{eq:tensorLinfty}, while the higher brackets of \eqref{eq:tensorLinfty} are modified to 
\begin{align}
\begin{split}
\ell_n^{\star\tsA}(a_1\otimes v_1,\dots,a_n\otimes v_n) 
&=  \pm\,\big(a_1\star\sfR_{k_1^{\textrm{\tiny1}}}(a_2)\star\cdots\star\sfR_{k_{n-1}^{1}}\cdots \sfR_{k_1^{n-1}}(a_n)\big) \\
& \hspace{2cm} \, \otimes \, \ell_n^{\star}\big(\sfR^{k_{n-1}^{1}}\cdots\sfR^{k_1^{1}}(v_1),\dots,\sfR^{k_1^{n-1}}(v_{n-1}),v_n\big) \ ,
\end{split}
\end{align}
for all $n\geq 2$, $a_1,\dots,a_n\in \CA$ and $v_1,\dots,v_n\in V$. This generalizes the extended braided $L_\infty$-algebra construction of Section~\ref{sec:braidedBV}.

\begin{definition}
Let $\CL^\star=(V[[\hbar]],\{\ell^\star_n\})$ be a braided $L_\infty$-algebra and $\sA^\star=(\CA,\dsf_{\textrm{\tiny$\CA$}})$ a differential graded braided commutative algebra.
\begin{myenumerate} 
\item[(a)] A \emph{family of classical solutions} in $\CL^\star$ parametrized by $\sA^\star$ is a $\sU_\tCF\frv$-invariant element $\alpha=a^l\otimes v_l$, with $a^l\in\CA^{-l}$ and $v_l\in V^{l+1}[[\hbar]]$ for $l\in\RZ$, which solves the braided Maurer--Cartan equation 
\begin{align}
(\dsf_{\textrm{\tiny$\CA$}}+\ell^\star_1)\alpha + \sum_{n\geq 2} \, \frac{(-1)^{n\choose 2}}{n!} \, \ell_n^{\star\tsA}(\alpha^{\otimes n}) = 0
\end{align}
in the braided $L_\infty$-algebra $\sA^\star\otimes\CL^\star$.
\item[(b)] Two families $\alpha$ and $\alpha'$ of classical solutions in $\CL^\star$ are \emph{braided gauge equivalent} if there exists a $\sU_\tCF\frv$-invariant element $\lambda=b^l\otimes w_l$, with $b^l\in \CA^{-k}$ and $ w_l\in V^k[[\hbar]]$ for $l\in\RZ$, and a $\sU_\tCF\frv$-invariant path $\alpha(t)=a^l(t)\otimes v_l(t)$ in the braided $L_\infty$-algebra $\sA^\star\otimes \CL^\star$ which satisfies the differential equation 
\begin{align}
\frac\dd{\dd t}\alpha(t) = (\dsf_{\textrm{\tiny$\CA$}} + \ell_1^\star)\lambda + \sum_{n \geq1} \, \frac{(-1)^{n\choose 2}}{n!} \, \ell_{n+1}^{\star\tsA}\big(\lambda,\alpha(t)^{\otimes n}\big) \ ,
\end{align}
such that $\alpha=\alpha(0)$ and $\alpha'=\alpha(1)$.
\end{myenumerate}
\end{definition}

This is our key definition: part (b) of the definition now makes sense non-trivially, because all $\RR$-matrix insertions disappear  when evaluated on the $\sU_\tCF\frv$-invariant elements of the set of tensor product solutions $\alpha$ to the braided Maurer--Cartan equations $\frF_\alpha^\star=0$ in $\sA^\star\otimes\CL^\star$, and so braided gauge transformations act as in the classical setting of Definition~\ref{def:MCfamilies}. Note that this does not preclude noncommutative deformations of the solutions compared to Definition~\ref{def:MCfamilies}, as both $\sA^\star$ and $\CL^\star$ are objects in the symmetric monoidal category ${}_{\sU_\tCF\frv}\CCM$, and so each individually involve braided operations: only in their braided tensor product $\sA^\star\otimes\CL^\star$ do we look for $\sU_\tCF\frv$-invariant elements. This also clarifies what goes wrong in the setting of Section~\ref{sec:braidedfieldtheory}: to recover that situation we would set $\sA^\star=\FR_0$ in this definition, which means that the classical solutions should be $\sU_\tCF\frv$-invariant fields. Then all issues concerning the action of braided gauge transformations, the braided Noether identities and related matters disappear. Of course we are mostly interested in non-invariant fields, but to detect those one has to consider differential graded braided commutative algebras $\sA^\star$ which carry a non-trivial $\sU_\tCF\frv$-action. In this sense, the braided moduli space does not contain any (interesting) `points', as we further exemplify below.

The description of the braided Maurer--Cartan moduli space is now a simple extension of the classical case, as the latter definitions make sense for covariant functors from any (locally small) category to the category of sets. The definition of moduli functor from part~(a) of Definition~\ref{def:MCspace} generalizes immediately to give a moduli functor ${}^{\sU_\tCF\frv}\CCF_{\CL^\star}:{}^{\sU_\tCF\frv}\dAlg\to\Set$ which sends each differential graded braided commutative algebra $\sA^\star$ to the set ${}^{\sU_\tCF\frv}\CCF_{\CL^\star}(\sA^\star)$ of braided gauge equivalence classes of families of classical solutions in the braided $L_\infty$-algebra $\CL^\star$ parametrized by $\sA^\star$. Part~(b) then generalizes easily in the following manner.

\begin{definition}
Let $\CL^\star$ be a braided $L_\infty$-algebra. A \emph{braided Maurer--Cartan moduli space} ${}^{\sU_\tCF\frv}\CCM\CCC(\CL^\star)$ is a pair $(\hat\sA^\star,\hat\alpha^\star)$, where $\hat\sA^\star$ is an object of the category ${}^{\sU_\tCF\frv}\dAlg$ and $\hat\alpha^\star\in {}^{\sU_\tCF\frv}\CCF_{\CL^\star}(\hat A^\star)$, which is a universal object representing the moduli functor ${}^{\sU_\tCF\frv}\CCF_{\CL^\star}:{}^{\sU_\tCF\frv}\dAlg\to\Set$.
\end{definition}

This definition has the same meaning as in the classical case,\footnote{By Yoneda's lemma, this definition is bijectively equivalent to the existence of a natural transformation \smash{${\sf\Phi}^\star:{\rm Hom}_{{}^{\sU_\tCF\frv}\dAlg}(\hat\sA^\star,-)\to{}^{\sU_\tCF\frv}\CCF_{\CL^\star}$} with $\hat\alpha^\star={\sf\Phi}^\star(\unit_{\hat\sA^\star})$.}  but with a crucial difference: since the moduli space is necessarily an object in the source category of the functor ${}^{\sU_\tCF\frv}\CCF_{\CL^\star}$, and we allow noncommutative parameter spaces (regarded as dual to differential graded braided commutative algebras), the braided Maurer--Cartan moduli space is naturally a (differential graded) `noncommutative scheme', which is the tradeoff for making sense of the braided gauge equivalence classes. Of course, the highly non-trivial aspect of this construction, which we do not address here, is whether there actually exists a representation of the functor ${}^{\sU_\tCF\frv}\CCF_{\CL^\star}$ which brings our arguments above to fruition, and hence provides a realization of the moduli space of classical solutions in braided field theory.

Noncommutative moduli spaces have previously appeared in braided noncommutative geometry as solution spaces to noncommutative instanton equations in~\cite{Brain:2009it}, where it is shown that the noncommutativity of the parameter spaces is a gauge artefact: one may always recover an equivalent description in terms of the usual notion of classical moduli space by a suitable choice of gauge transformation and a noncommutative quotient construction. It would be interesting to understand the braided Maurer--Cartan moduli space (if it really exists in this sense) in a similar vein, which would elucidate the physical meaning of the abstract arguments presented here.

This braided Maurer--Cartan moduli space may also be extracted from the braided BV formalism of Section~\ref{sec:braidedBV} and its braided noncommutative algebra of functions ${}^{\sU_\tCF\frv}\CCA_\MC$. For this, one considers the functor of points ${}^{\sU_\tCF\frv}\CCF_{\CL^\star}^{\textrm{\tiny$\vee$}}:{}^{\sU_\tCF\frv}\dAlg \to \Set$ defined on objects by ${}^{\sU_\tCF\frv}\CCF_{\CL^\star}^{\textrm{\tiny$\vee$}}(\sA^\star) = {\rm Hom}_{{}^{\sU_\tCF\frv}\dAlg}\big({}^{\sU_\tCF\frv}\CCA_\MC,\sA^\star\big)$. Taking again $\sA^\star=\FR_0$, one only detects $\sU_\tCF\frv$-invariant points, whereas the interesting fields appear for non-trivial differential graded braided commutative algebras $\sA^\star$.

\section{Noncommutative scalar field theory}
\label{sec:NCscalar}

The $L_\infty$-algebra framework also accommodates field theories without gauge symmetries. This simply translates to trivial homogeneous subspaces $V^{0}=V^{3}=\{0\}$ and corresponding zero brackets. The appearance of the $L_\infty$-algebra structure is then tautological, as it is just given by polynomial expansion of the field equations. In this section we will review the formalism in the example of an interacting scalar field theory, and subsequently twist deform this $L_\infty$-algebra to obtain the standard noncommutative scalar field theory. This example also serves to illustrate how to handle more general theories which are not diffeomorphism invariant, via twist deformation quantization along the global spacetime symmetries of the classical field theory. It moreover shows how to incorporate global symmetries into the braided $L_\infty$-algebra framework, which we describe following the classical treatment of~\cite{BVChristian}.\footnote{Of course, we are not interested in a moduli space of globally (braided) symmetric solutions (as these are not gauge symmetries). In principle, one is interested instead in the weakly conserved currents corresponding to these infinitesimal symmetries, which would follow from a braided version of Noether's first theorem. We do not develop this interesting point in the present paper.}

\subsection{Scalar field theory in the $L_\infty$-algebra formalism}
\label{sec:scalartheory}

Let $(M,g)$ be a $d$-dimensional oriented Lorentzian manifold.\footnote{The discussion is presented for Lorentzian signature in order to compare later on with previous points of view on global symmetries. The more general pseudo-Riemannian cases differ only by a sign in the Laplace--Beltrami operator.} Let $\dd \vol_{g}\in \Omega^{d}(M)$ be the induced volume form and $\ast_\hodge : \Omega^{k}(M) \to \Omega^{d-k}(M)$ the corresponding Hodge duality operator. Denote by ${\square}:= \ast_\hodge \dd \ast_\hodge  \dd : C^{\infty}(M,\FR)\to C^{\infty}(M,\FR)$ the Laplace--Beltrami operator acting on smooth functions.

The action functional for an interacting real scalar field $\phi \in C^{\infty}(M,\FR)$ is defined by
\begin{align}
S(\phi)= \int_{M}\, \Big(\frac{1}{2}\,\phi \,\big(\square-m^{2}\big)\phi - \sum_{n\geq3}\,\frac{\kappa_{n-1}}{n!}\,\phi^n \Big) \ \dd \vol_{g} \ ,
\end{align}
for a mass parameter $m^2\geq0$ and coupling constants $\kappa_{n-1}\in \FR$. The corresponding field equations are given by $\frF_\phi=0$ where
\begin{align}
	\frF_\phi = \big(\square-m^{2}\big)\phi -\sum_{n\geq2}\,\frac{\kappa_{n}}{n!}\,\phi^n \ ,
\end{align}
from which the cyclic $L_{\infty}$-algebra is immediately read off. 

The underlying graded vector space is concentrated in degrees $1$ and $2$, comprised of two copies of the space of functions on $M$:
\begin{align}
V= V^1\oplus V^2 \qquad \mbox{with} \quad V^1=V^2=C^\infty(M,\FR) \ ,
\end{align}
and we denote its homogeneous elements by $\phi \in V^{1}$ and $\varPhi\in V^{2}$. The underlying cochain complex $(V,\ell_1)$ has as differential the Klein--Gordon operator, and the non-trivial brackets are
\begin{align}\label{eq:SCbrackets}
	\ell_{1}(\phi)= \big(\square-m^{2}\big)\phi \qquad \mbox{and} \qquad
	\ell_{n}(\phi_{1},\dots,\phi_{n})= -(-1)^{n\choose 2} \, \kappa_n \, \phi_{1} \cdots \phi_{n} \ ,
\end{align}
where $n\geq2$, while the cyclic pairing is 
\begin{align} \label{eq:SCpairing}
	\langle\phi,\varPhi\rangle := \int_M\, \phi\ast_\hodge \varPhi = \int_M\,
	\phi \ \varPhi \ \dd \vol_{g} \ .
\end{align}
The homotopy Jacobi identities follow trivially for degree reasons.
Cyclicity with respect to $\ell_n$ for $n\geq2$ is tautological, due to commutativity of pointwise multiplication of functions. Cyclicity with respect to $\ell_1$ follows since $\square$ is a (formally) self-adjoint operator with respect to the inner product \eqref{eq:SCpairing} on $C^\infty(M,\FR)$. 

\subsection{Noncommutative scalar field theory in the braided $L_\infty$-algebra formalism}
\label{sec:braidedscalar}

Suppose now that the spacetime $(M,g)$ has a non-trivial set of isometries. That is, there exists a non-zero Lie subalgebra of Killing vector fields $\frk \subset \frv=\sfGamma(TM)$, such that $\LL_{\xi}\, g = 0$ for all $\xi\in \frk$. This implies that
\begin{align}
	\LL_\xi \circ \ast_\hodge  = \ast_\hodge  \circ \LL_\xi 
\end{align}
for all $\xi \in \frk$, and so the Laplace--Beltrami operator $\square= \ast_\hodge  \dd \ast_\hodge  \dd$ also commutes with the action of all Killing vector fields. It follows that all brackets \eqref{eq:SCbrackets} of the scalar field $L_{\infty}$-algebra commute with the action of $\frk$:
\begin{align}
	\LL_\xi \circ \ell_{n}= \ell_{n} \circ \LL_\xi
\end{align}
for all $\xi\in \frk$ and $n\geq1$. In other words, $(V, \{\ell_n\})$ is an $L_\infty$-algebra in the category of $\sU\frk$-modules. 

By the construction of Section~\ref{sec:Linftytwist}, a choice of a \emph{Killing} Drinfel'd twist $\mathcal{F}\in \sU\frk[[\hbar]]\otimes \sU\frk[[\hbar]]$ results in a 2-term braided $L_{\infty}$-algebra $\big(V[[\hbar]], \{\ell^\star_{n}\}\big)$ with non-zero brackets 
\begin{align}\label{eq:BraidedSCbrackets}
\begin{split}
	\ell_{1}^\star(\phi)= \big(\square- m^{2}\big)\phi  \qquad \mbox{and} \qquad
	\ell_{n}^\star(\phi_{1},\dots,\phi_{n}) = -(-1)^{n\choose 2} \, \kappa_n \, \phi_{1}\star\cdots \star\phi_{n} \ ,
	\end{split}
\end{align}
where $n\geq2$ and $\star$ denotes the star-product of functions. By the prescription of Section~\ref{sec:braidedfieldtheory}, the braided field equations landing in $V^2[[\hbar]]=C^{\infty}(M,\FR)[[\hbar]]$ are encoded by
\begin{align}\label{eq:braidedSCfieldeqs}
	\frF^{\star}_\phi= \sum_{n\geq1} \, \frac{(-1)^{n\choose 2}}{n!} \, \ell_n^\star(\phi^{\otimes n}) = \big(\square-m^{2}\big)\phi - \sum_{n\geq 2}\,\frac{\kappa_n}{n!}\,\phi^{\star n} \ ,
\end{align}
where $\phi^{\star n}:=\phi\star\cdots\star\phi$ ($n$ times). This is just the usual noncommutative deformation of scalar field equations. 

The cyclic inner product \eqref{eq:SCpairing} is invariant under the action of $\frk$, since $\LL_{\xi} \, \dd \vol_{g} = 0$ for all Killing vector fields $\xi\in \frk$. By twisting we obtain the braided cyclic pairing 
\begin{align} \label{eq:braidedSCpairing}
	\langle\phi,\varPhi\rangle_\star := \int_M\,
	\phi \star \varPhi \ \dd \vol_{g} \ ,
\end{align}
from which we may define the braided action functional following the prescription of Section~\ref{sec:braidedfieldtheory}:
\begin{align}\label{eq:NCscalaraction}
	\begin{split}
		S_{\star}(\phi):=\sum_{n\geq1} \, \frac{(-1)^{n\choose 2} }{(n+1)!}\,\langle\phi,\ell_{n}^\star (\phi^{\otimes n}) \rangle_\star =\int_{M}\, \Big(\frac{1}{2}\,\phi \star \big(\square- m^{2}\big)\phi - \sum_{n\geq 3} \, \frac{\kappa_{n-1}}{n!}\,\phi^{\star n} \Big) \ \dd \vol_{g} \ ,
	\end{split}
\end{align}
which, again, is the usual action functional for noncommutative scalar field theory; note that, for fields of suitable asymptotic decay, the kinetic term $\phi\star\square\,\phi \ \dd \vol_g$ may be integrated by parts to $\dd\phi\wedge_\star \ast_\hodge  \dd\phi$. Of course, in order to derive the field equations as the critical locus of this braided action functional in the usual sense, we have to restrict to compatible Drinfel'd twists which make the pairing strictly cyclic, as discussed in Section~\ref{sec:braidedfieldtheory}. 

We emphasize that even standard noncommutative scalar field theory is naturally encoded in a \emph{braided} $L_{\infty}$-algebra, even though no braided gauge symmetries are present; in other words, it is also an example of a braided field theory. 
This can be compared to the standard classical $L_\infty$-algebra formulation of noncommutative scalar field theory: in order to maintain strict symmetry of the brackets, as opposed to braided symmetry, one needs to instead symmetrize star-products of fields to write the higher brackets as
\begin{align}
\frac1{n!} \, \sum_{\sigma\in S_n} \, \ell_n^{\star}(\phi_{\sigma(1)},\dots,\phi_{\sigma(n)}) = -(-1)^{n\choose 2} \, \frac{\kappa_n}{n!} \, \sum_{\sigma\in S_n} \, \phi_{\sigma(1)}\star\cdots\star\phi_{\sigma(n)} \ ,
\end{align}
for $n\geq2$, where the sum runs over all permutations of degree $n$.
At the classical level this results in the same noncommutative field theory, but at the quantum level different interaction vertices arise. In particular, in the braided case one loses the distinction between planar and non-planar Feynman diagrams from dropping the symmetrization~\cite{Nguyen:2021rsa}, raising the possibility that no UV/IR mixing occurs in braided quantum field theory, which as a consequence would have drastically improved renormalizability features over the standard noncommutative quantum field theory~\cite{Szabo:2001kg}. The braided $L_\infty$-algebra formalism also alters the implementation of global symmetries, as we discuss in detail in Section~\ref{sec:globalsymscalar} below.

\subsection{Global braided symmetry}
\label{sec:globalsymscalar}

The global symmetries of the classical scalar field theory are the isometries of the spacetime $(M,g)$, which are generated by the Killing Lie algebra $\frk$ acting on fields via the Lie derivative. In the noncommutative scalar field theory, only the subalgebra of $\frk$ that commutes with the legs $\sff^k,\sff_k\in \sU\frk$ of the Killing twist $\CF$ generates global symmetries in this sense. However, irrespective of the twist, the noncommutative scalar field theory is invariant under the \emph{braided} Lie algebra of twisted Killing symmetries $(\frk[[\hbar]],[-,-]_\frk^\star)$ acting on fields via the {braided} Lie derivative. This is a finite-dimensional ``global'' or ``rigid'' braided Lie algebra of symmetries of the field theory. We shall now explain how this is also naturally understood within the braided $L_\infty$-algebra framework.

To incorporate these global symmetries, we extend the 2-term $L_\infty$-algebra of Section~\ref{sec:scalartheory} to a 4-term $L_\infty$-algebra by adding in non-zero homogeneous subspaces of degrees $0$ and $3$, with corresponding brackets for the action of the Lie algebra $\frk$.\footnote{\label{fn:globalsym} Strictly speaking, in view of the derived quotient interpretation of the kinematical part $V^0\oplus V^1$ of the original $L_\infty$-algebra, we should really be specifying an external action of the Lie algebra $\frk$ on $V^0\oplus V^1\oplus V^2\oplus V^3$ as we are not interested in the (derived) quotient by global symmetries.} Thus the underlying graded vector space is now $V=V^0\oplus V^1\oplus V^2\oplus V^3$ with
\begin{align}
V^0 = \frk \ , \quad V^1=V^2=C^\infty(M,\FR) \qquad \mbox{and} \qquad V^3 = \frk^* \ ,
\end{align}
where $\frk^*$ is the linear dual of $\frk$.
We denote elements in degree 0 by $\xi\in\frk$ and in degree 3 by $\xi^\dual\in\frk^*$. Note that these are not fields, but rather should be regarded as constant maps on $M$. 

Abstractly, the Lie algebra $\frk$ naturally acts on itself by the adjoint action, $\xi_1\cdot \xi_2:=[\xi_1,\xi_2]_\frk$ for all $\xi_1,\xi_2\in\frk$, and on its dual by the coadjoint action $\xi\cdot\xi^\dual$ defined by
\begin{align}
\big\langle \xi',\xi\cdot\xi^\dual\big\rangle = -\big\langle \xi\cdot\xi',\xi^\dual \big\rangle = - \big\langle [\xi,\xi']_\frk,\xi^\dual \big\rangle
\end{align}
for all $\xi,\xi'\in\frk$ and $\xi^\dual\in\frk^*$, where $\langle-,-\rangle:V^0\otimes V^3\to\FR$ is the canonical dual pairing between $\frk$ and $\frk^*$. Concretely, $\frk$ can be represented by vector fields on $M$ acting on fields via the Lie derivative, with $[-,-]_\frk$ the restriction of the Lie bracket of vector fields to the subalgebra $\frk$. These representations obviously all live in the category of $\sU\frk$-modules, and hence are amenable to  twisting by $\CF\in \sU\frk[[\hbar]]\otimes \sU\frk[[\hbar]]$ via the construction of Section~\ref{sec:braidedLinfty}. 

The braided $L_\infty$-algebra $(V[[\hbar]],\{\ell_n^\star\})$ is now extended by including, in addition to the brackets \eqref{eq:BraidedSCbrackets}, the non-zero $2$-brackets
\begin{align}
\begin{split}
\ell_2^\star(\xi_1,\xi_2) = -[\xi_1,\xi_2]_\frk^\star \ , \quad \ell_2^\star(\xi,\phi) = \LL_\xi^\star\, \phi \ , \quad \ell_2^\star(\xi,\varPhi) = \LL_\xi^\star\, \varPhi \qquad \mbox{and} \qquad \ell_2^\star(\xi,\xi^\dual) = \xi\star\xi^\dual \ ,
\end{split}
\end{align}
where $\xi\star\xi^\dual$ is the left braided coadjoint action of $\xi\in\frk$ on $\xi^\vee\in\frk^*[[\hbar]]$.
That these extra brackets are compatible with the braided homotopy Jacobi identities is a consequence of the fact that they are a twist deformation of the classical brackets of~\cite{BVChristian}, from which the statement follows trivially again for degree reasons and since the $\frk$-action commutes with the Laplace--Beltrami operator.\footnote{Note that~\cite{BVChristian} only considers  classical $\phi^4$-theory on Minkowski space, but the generalization is immediate.}
The cyclic structure $\langle-,-\rangle_\star$ is naturally extended to the new homogeneous components by twisting the canonical dual pairing $\langle-,-\rangle:V^0\otimes V^3\to\FR$ in the usual way. 

These brackets encode the global braided transformation rules for the scalar fields:
\begin{align}
\delta_\xi^{\br}\phi=\ell_2^\star(\xi,\phi)=\LL_\xi^\star\,\phi \ .
\end{align}
These close the braided Lie algebra
\begin{align}
\big[\delta_{\xi_1}^{\br},\delta_{\xi_2}^{\br}\big]_\circ^\star = \delta^{\br}_{-\ell_2^\star(\xi_1,\xi_2)} = \delta_{[\xi_1,\xi_2]_\frk^\star}^{\br} \ ,
\end{align}
which follows directly by the closure property of the braided Lie derivative $\LL_\xi^\star$. The braided field equations \eqref{eq:braidedSCfieldeqs} are covariant,
\begin{align}
\delta_\xi^{\br}\frF_\phi^\star = \ell_2^\star(\xi,\frF_\phi^\star) = \LL_\xi^\star\,\frF_\phi^\star \ ,
\end{align}
which follows directly from the braided derivation property of $\LL_\xi^\star$. Finally, the noncommutative action functional \eqref{eq:NCscalaraction} is invariant under these global braided transformations, $\delta_\xi^{\br}S_\star(\phi)=0$, which may also be checked directly. In the classical limit, these are all just the correct global symmetry transformation properties of classical scalar field theory.

\subsubsection*{Abelian twists}

Consider the special case where the Drinfel'd twist $\CF\in \sU\frt[[\hbar]]\otimes \sU\frt[[\hbar]]$ is constructed on the enveloping Hopf algebra of the Cartan subalgebra $\frt\subseteq\frk$, i.e. from a commuting set of Killing vector fields of the spacetime $(M,g)$. Then $\CF$ is an {abelian} twist (see Example~\ref{ex:abeliantwist}). For vector fields $\xi\in\frt$, the braided Lie derivative $\LL_\xi^\star=\LL_\xi$ coincides with the ordinary Lie derivative and the Lie bracket $[-,-]_\frt^\star=0$ is also unchanged by the twist deformation (in fact, all brackets involving elements from $\frt$ are unchanged). That is, the braided symmetry transformations from the Cartan subalgebra $\frt$ act exactly as in the classical theory, and we recover the anticipated statement that the noncommutative scalar field theory is invariant under the classical action of global symmetry transformations from $\frt$. However, this is not generally true for vector fields $\xi\notin\frt$. The novelty of our approach is that, when the Killing Lie algebra $\frk$ is nonabelian, the noncommutative scalar field theory is nevertheless invariant under the braided action of symmetry transformations from all of $\frk$. We consider in detail the traditional example in Section~\ref{sec:braidedLorentz} below. 

\subsection{Example: Braided Lorentz invariance}
\label{sec:braidedLorentz}

Let $M=\FR^{1,d-1}$ with the standard flat Minkowski metric $g=\eta$. The Killing Lie algebra of $(\FR^{1,d-1},\eta)$ is the Poincar\'e algebra
\begin{align}
\frk = \mathfrak{iso}(1,d-1) = \FR^{1,d-1} \rtimes \aso(1,d-1) \ .
\end{align}
The abelian translation algebra $\FR^{1,d-1}$ is generated by $\{\sfP_\mu\}_{\mu=0,1,\dots,d-1}$ and the nonabelian Lorentz algebra $\aso(1,d-1)$ by $\{\sfM_{\mu\nu}\}_{\mu,\nu=0,1,\dots,d-1}$, with $\sfM_{\mu\nu}=-\sfM_{\nu\mu}$ and the semi-direct product Lie brackets
\begin{align}
\begin{split}
[\sfP_\mu,\sfP_\nu]_\frk &= 0 \qquad , \qquad [\sfM_{\mu\nu},\sfP_\lambda]_\frk = \eta_{\mu\lambda}\,\sfP_\nu - \eta_{\nu\lambda}\,\sfP_\mu \ , \\[4pt]
[\sfM_{\mu\nu},\sfM_{\lambda\rho}]_\frk &= \eta_{\mu\lambda}\,\sfM_{\nu\rho} - \eta_{\mu\rho} \, \sfM_{\nu\lambda} - \eta_{\nu\lambda}\,\sfM_{\mu\rho} + \eta_{\nu\rho}\,\sfM_{\mu\lambda} \ ,
\end{split}
\end{align}
where $\eta = \frac12\,\eta_{\mu\nu} \, \dd x^\mu\otimes\dd x^\nu$ in coordinates $x=(x^\mu)\in\FR^{1,d-1}$; as previously we use Einstein summation conventions throughout. 

Denoting the basis vector fields of $\FR^{1,d-1}$ by $\partial_\mu=\frac\partial{\partial x^\mu}$, the action of these generators on fields may be realised by the vector fields implementing translations and Lorentz transformations
\begin{align}
\sfP_\mu = \partial_\mu \qquad \mbox{and} \qquad \sfM_{\mu\nu} = x_\mu\,\partial_\nu - x_\nu\,\partial_\mu \ .
\end{align}
Let $\CF\in\sU\FR^{1,d-1}[[\hbar]]\otimes\sU\FR^{1,d-1}[[\hbar]]$ be the Moyal--Weyl twist (see Example~\ref{ex:MWtwist})
\begin{align}
\CF = \exp\big(-\tfrac{\mathrm{i}\,\hbar}2 \, \theta^{\mu\nu} \, \sfP_\mu \otimes \sfP_\nu\big) \ ,
\end{align}
where $(\theta^{\mu\nu})$ is a constant antisymmetric real $d{\times} d$ matrix. As is well-known (see e.g.~\cite{Szabo:2006wx}), the noncommutative scalar field theory is then translationally invariant, but it breaks Lorentz invariance in the classical sense.

The full relativistic invariance of the noncommutative scalar field theory on $(\FR^{1,d-1},\eta)$ is restored by twisting the action of the Lorentz vector fields $\sfM_{\mu\nu}$ to a braided representation. A quick check reveals that the twist $\CF$ in this example acts trivially on the Lie bracket, $[-,-]_\frk^\star = [-,-]_\frk$, so the underlying braided Lie algebra of Killing vector fields coincides with the classical Poincar\'e algebra. A generic element of the Poincar\'e algebra
\begin{align}
\xi = \varepsilon^\mu\,\sfP_\mu + \tfrac12\,\omega^{\mu\nu}\,\sfM_{\mu\nu}
\end{align}
is specified by a set of real translation parameters $\varepsilon^\mu$ and real Lorentz transformation parameters $\omega^{\mu\nu}=-\omega^{\nu\mu}$. Its braided action on a scalar field $\phi$ is given by
\begin{align}
\delta_\xi^{\br} \phi = \LL_\xi^\star\,\phi = \varepsilon^\mu\,\partial_\mu\phi + \omega^{\mu\nu}\,x_\nu\star\partial_\mu\phi = \LL_\xi\phi - \tfrac{\mathrm{i}\,\hbar}2 \, \omega^{\mu\nu} \, \theta_{\nu\lambda}\, \partial^\lambda \partial_\mu\phi \ .
\end{align}
As expected, only Lorentz transformations of $\phi$ are modified from their classical rules; note that the transformations of the coordinate functions $(x\mapsto x^\mu):\FR^{1,d-1}\to\FR$ themselves are unchanged, as $\LL_\xi^\star x^\mu=\LL_\xi x^\mu$. These braided Poincar\'e transformations close the \emph{classical} Poincar\'e algebra. 

Nevertheless, our braided Lorentz transformations are \emph{not} the same as the `twisted Lorentz transformations' of~\cite{Chaichian:2004za,Wess:2003da}. Those act as the classical Lorentz symmetry on a single field, and do not follow a braided derivation rule on star-products of fields but rather a deformed Leibniz rule defined by the Drinfel'd twist of the coproduct $\Delta_\tCF$ on the enveloping Hopf algebra of $\frk$. It would be interesting to explore whether the standard Wigner classification of particles holds for braided representations of the Lorentz algebra, as it does for the twisted actions of~\cite{Chaichian:2004za}.

\section{Braided $BF$ theory}
\label{sec:braidedBFtheory}

Let us now return to examples of field theories with gauge symmetries. In this section we look at an example of a topological field theory which is a natural extension of Chern--Simons theory, considered as our prototypical guiding example of gauge theories in the (braided) $L_\infty$-algebra formalism in Sections~\ref{sec:Linfty} and~\ref{sec:braidedLinfty}. The story spelled out for Chern--Simons theory is easily adapted for arbitrary $BF$ theories, whose classical $L_\infty$-algebra formulation was worked out in detail by~\cite{ECPLinfty}. These examples serve to nicely illustrate generic features of field theories with higher (or reducible) gauge symmetries in a simple framework, and provide new examples of ensuing novelties with the corresponding higher braided gauge symmetries which also drastically deviate from their classical and star-gauge counterparts.

\subsection{$BF$ theory in the $L_\infty$-algebra formalism}
\label{sec:BFtheory}

Let $M$ be a closed oriented $d$-dimensional manifold. Let $(\frg,[-,-]_\frg)$ be a Lie algebra, and let $ W$ be a $\frg$-module with a nondegenerate pairing $\Tr_{\tW}:
 W\otimes \frg \rightarrow \FR$ which is invariant under the $\frg$-action:
\begin{align}\label{eq:BFpairinginvariance}
	\Tr_{\tW}\big( (X\cdot w)\otimes Y + w\otimes[X, Y]_\frg\big)
	=0 \ ,
\end{align}
for $w\in  W$ and $X,Y\in\frg$. The standard example is when $\frg$ is a quadratic Lie algebra, where $ W=\frg$ is a $\frg$-module under the adjoint action of $\frg$ on itself, as in Section~\ref{sec:whatisgauge}.

The $BF$ action functional is given by
\begin{align}\label{eq:BFaction}
	S(B,A)=\int_{M}\,\Tr_{\tW}( B\wedge F_{\tA}) \ ,
\end{align}
where $F_{\tA}\in\Omega^2(M,\frg)$ is
the curvature of a connection $1$-form $A \in \Omega^{1}(M,\frg)$ and $B\in\Omega^{d-2}(M, W)$. The field equations of $BF$ theory are solved by pairs of a flat $\frg$-connection on $M$ and a covariantly constant
$(d{-}2)$-form valued in the representation $ W$ of $\frg$:
\begin{align}
	\begin{split}
		\frF_{B}& := F_{\tA} = \dd A+ \tfrac{1}{2}\,[A,A]_\frg = 0 \ \in \
		\Omega^{2}(M,\frg) \ , \\[4pt]
		\frF_{\tA}&:= \dd^{\tA}B= \dd B + A\wedge B = 0 \ \in \ \Omega^{d-1}(M, W) \ ,
	\end{split}
\end{align}
where again $A\wedge B$ computes the exterior product of the form
components while pairing the components in $\frg$ and $ W$ via the
$\frg$-action. 

The action functional \eqref{eq:BFaction} is invariant under standard gauge transformations $\lambda \in \Omega^{0}(M,\frg)$ acting as 
\begin{align}
	\delta_{\lambda} (B,A) =\big(-\lambda \cdot B\,,\, \dd \lambda + [A,\lambda]_\frg\big) \ .
\end{align}
Compared to Chern--Simons theory, however, for $d\geq3$ there is an extra `shift' symmetry generated by $(d{-}3)$-forms $\tau \in \Omega^{d-3}(M, W)$ valued in $ W$, which act as
\begin{align} \label{eq:shift}
	\delta_{\tau}(B,A):=  (\dd\tau + A\wedge \tau, 0) \ .
\end{align}
The corresponding Noether identities coincide with the usual Bianchi identities
\begin{align}
	\begin{split}
		\dsf_{(B,A)}(\frF_B,\frF_{\tA}):\!&=\big((-1)^{d-3}\,\dd^{\tA}\frF_{B}\,,\,\dd^{\tA}\frF_{\tA}-\frF_{B} \wedge B\big)\\[4pt] 
		&=\big((-1)^{d-3}\, \dd^{\tA} F_{\tA} \, , \, (\dd^{\tA})^{2} B - F_{\tA}\wedge B\big) = (0,0) \ \in \ \Omega^3(M,\frg) \times \Omega^{d}(M, W) \ .
	\end{split}
\end{align}
Here the first slot is the second Bianchi identity which in this example corresponds to the shift symmetry, in contrast to Chern--Simons theory where it corresponds to the usual gauge transformations generated by $\Omega^{0}(M,\frg)$. The latter symmetries correspond to the first Bianchi identity in $BF$ theory, which is the second slot.

The cyclic $L_{\infty}$-algebra of $BF$ theory in $d$ dimensions is given by the underlying graded vector space $V:= V^{0} \oplus V^{1} \oplus V^{2} \oplus V^3$, 
where
\begin{align}\label{eq:ndBF}
	\begin{split}
		V^{0}=\Omega^{d-3}(M, W) \times \Omega^{0}\big(M,\frg\big) \qquad & , \qquad 
		V^{1}= \Omega^{d-2}(M, W) \times
		\Omega^{1}\big(M,\frg\big) \ ,  \\[4pt]
		V^{2}=\Omega^{2}\big(M,\frg\big) \times
		\Omega^{d-1}(M, W) \qquad & , \qquad
		V^3= \Omega^{3}(M,\frg)\times \Omega^d(M, W)
		\ . 
	\end{split}
\end{align} 
We denote gauge parameters by $(\tau,\lambda)\in V^{0}$, dynamical fields by $(B,A)\in V^{1}$, field equations by $(\CB,\CA)\in V^{2}$, and Noether identities by $(\CT,\varLambda)\in V^{3}$. The non-trivial brackets then comprise the differential
\begin{align}\label{eq:ndBFbrackets1}
	\ell_{1}(\tau,\lambda)=(\dd \tau,\dd\lambda) \  , \quad
	\ell_{1}(B,A)=(\dd A,\dd B) \qquad \mbox{and} \qquad 
	\ell_{1}(\CB,\CA)=(\dd \CB,\dd \CA) \ ,
\end{align}
along with the $2$-brackets
\begin{align}\label{eq:ndBFbrackets2}
\begin{split}
\ell_{2}\big((\tau_{1},\lambda_{1})\,,\,(\tau_{2},\lambda_{2})\big)&=\big(-\lambda_{1} \cdot \tau_{2} +\lambda_{2} \cdot \tau_{1}\,,\,-[\lambda_{1},\lambda_{2}]_\frg\big) \ , \\[4pt]
 \ell_{2}\big((\tau,\lambda)\,,\,(B,A)\big)&=\big(-\lambda \cdot B
	+A\wedge \tau \,,\, -[\lambda,A]_\frg\big) \ , \\[4pt]
\ell_{2}\big((\tau,\lambda) \,,\, (\CB,\CA)\big)&=\big(-
	[\lambda, \CB]_\frg \,,\, -\lambda\cdot \CA + \CB\wedge \tau \big) \ , \\[4pt]
\ell_{2}\big((\tau,\lambda)\,,\,(\CT,\varLambda)\big)&= \big
	(-[\lambda,\CT]_\frg\,,\,-\lambda \cdot \varLambda + (-1)^{d-3} \, \CT \wedge \tau 
	\big) \ , \\[4pt]
\ell_{2}\big((B_{1},A_{1})\,,\,(B_{2},A_{2})\big)&=-\big([A_{1},A_{2}]_\frg\, ,\, A_{1} \wedge B_{2} +A_{2} \wedge B_{1} \big) \ , \\[4pt]
\ell_{2}\big((B,A)\,,\,(\CB,\CA)\big)&=-\big([A,\CB]_\frg\,,\, A\wedge \CA - \CB \wedge B\big) \ .
\end{split}
\end{align}
Thus the dynamics of $BF$ theory in any dimension $d$ is also organised by a differential graded Lie algebra.
The cyclic inner product is given naturally as
\begin{align}\label{eq:BFpairing}
	\begin{split}
		\langle(B,A)\,,\,(\CB,\CA) \rangle &:= \int_{M}\, \Tr_{\tW}( B\wedge\CB) + \Tr_{\tW}(\CA\wedge A ) \ , \\[4pt]
		\langle(\tau,\lambda)\,,\,(\CT,\CP) \rangle &:= \int_{M}\, (-1)^{d-3} \, \Tr_{\tW}(\tau\wedge\CT) + \Tr_{\tW}( \varLambda \wedge \lambda ) \ .
	\end{split}
\end{align}

\subsection{Noncommutative $BF$ theory in the braided $L_\infty$-algebra formalism}
\label{sec:braidedBF}

The differential graded Lie algebra of $d$-dimensional $BF$ theory from Section~\ref{sec:BFtheory} lives in the category of $\sU\frv$-modules, as a consequence of diffeomorphism invariance of the topological field theory. Applying the general formalism of braided field theory from Section~\ref{sec:braidedfieldtheory}, we twist deform it to the 4-term differential graded braided Lie algebra $\big(V[[\hbar]], \ell_{1}^\star, \ell_2^\star \big)$ with $1$-bracket $\ell_1^\star = \ell_1$ and the non-trivial $2$-brackets
\begin{align}\label{eq:braidedBFbrackets}
	\begin{split}
		\ell_{2}^{\star}\big((\tau_{1},\lambda_{1})\,,\,(\tau_{2},\lambda_{2})\big)&=\big(-\lambda_{1} \star \tau_{2} +\sfR_k(\lambda_{2}) \star \sfR^k(\tau_{1})\,,\,-[\lambda_{1},\lambda_{2}]^\star_\frg\big)  \ , 
		\\[4pt]
		\ell_{2}^{\star}\big((\tau,\lambda)\,,\,(B,A)\big)&=\big(-\lambda \star B
		+\sfR_k(A)\wedge_\star \sfR^k(\tau) \,,\, -[\lambda,A]_\frg^\star\big) \
		,  \\[4pt]
		\ell_{2}^{\star}\big((\tau,\lambda) \,,\, (\CB,\CA)\big)&=\big(-
		[\lambda, \CB]^\star_\frg \,,\, -\lambda\star \CA + \sfR_k(\CB)\wedge_\star \sfR^k(\tau) \big) \ ,\\[4pt]
		\ell_{2}^{\star}\big((\tau,\lambda)\,,\,(\CT,\varLambda)\big)&= \big
		(-[\lambda,\CT]^\star_\frg\,,\,-\lambda \star \varLambda + (-1)^{d-3} \, \sfR_k (\CT) \wedge_\star \sfR^k (\tau)
		\big)  \ ,  \\[4pt] 
		\ell_{2}^{\star}\big((B_{1},A_{1})\,,\,(B_{2},A_{2})\big)&=-\big([A_{1},A_{2}]^\star_\frg\, ,\, A_{1} \wedge_\star B_{2} +\sfR_k (A_{2}) \wedge_\star \sfR^k(B_{1}) \big) \ ,  \\[4pt]
		\ell_{2}^{\star}\big((B,A)\,,\,(\CB,\CA)\big)&=-\big([A,\CB]^\star_\frg\,,\, A\wedge_\star \CA - \sfR_k(\CB) \wedge_\star \sfR^k(B)\big) \ .
	\end{split} 
\end{align}
The notation $\lambda  \star  B $ again stands for the left braided representation of $\lambda \in \Omega^{0}(M,\frg)$ on $B\in \Omega^{d-2}(M, W)[[\hbar]]$ as described in Section~\ref{sec:braidedgaugesym}, and similarly for $A \wedge_\star B$.

A pair of gauge parameters $(\tau,\lambda) \in \Omega^{d-3}(M, W) \times \Omega^{0}(M,\frg)$ defines the (left) braided gauge transformations of the fields $(B,A)\in \Omega^{d-2}(M, W) \times \Omega^{1}(M,\frg)$
by 
\begin{align}\label{eq:braidedBFgauge}
\begin{split}
	\delta_{(\tau,\lambda)}^{\br} (B,A):\!&= \ell_{1}^\star(\tau,\lambda) + \ell_{2}^\star\big((\tau,\lambda)\, ,\, (B,A)\big) \\[4pt]
&= \big(\dd \tau +\sfR_k (A)\wedge_\star \sfR^k (\tau)-\lambda\star B \, ,\, \dd \lambda - [\lambda,A]^\star_\frg\big ) \ ,
\end{split}
\end{align}
encoding the expected deformation of the symmetries. Notice that in the braided framework, the shift symmetry is twisted naturally irrespective of the dimension $d$, gauge algebra $\frg$ and representation $ W$. The closure as a braided Lie algebra 
\begin{align}
\big[\delta_{(\tau_1,\lambda_1)}^{\br},\delta_{(\tau_2,\lambda_2)}^{\br}\big]_\circ^\star = \delta_{-\ell_2^\star((\tau_1,\lambda_1),(\tau_2,\lambda_2))}^{\br} = \delta^{\br}_{(\lambda_{1} \star \tau_{2} -\sfR_k(\lambda_{2}) \star \sfR^k(\tau_{1}),[\lambda_{1},\lambda_{2}]^\star_\frg)}
\end{align}
follows naturally as in the commutative case, giving a braided deformation of the semi-direct product
\begin{align}
\Omega^{d-3}(M, W)[[\hbar]]\rtimes_\star \Omega^0(M,\frg)[[\hbar]] \ ,
\end{align}
where we regard the vector space $ W$ (and hence $\Omega^{d-3}(M, W)[[\hbar]]$) as an abelian Lie algebra.

\subsubsection*{Braided $BF$ equations}

Following the prescription from Section~\ref{sec:braidedfieldtheory}, the field equations $\frF_{(B,A)}^\star=0$ valued in $\Omega^2(M,\frg)[[\hbar]]\times \Omega^{d-1}(M, W)[[\hbar]]$ are encoded by 
\begin{align}\label{eq:braidedBFeom}
\begin{split}
	\frF_{(B,A)}^{\star}= (\frF_B^\star , \frF_{\tA}^\star):\!&=\ell_1^\star(B,A) - \tfrac12\,\ell_2^\star\big((B,A)\, ,\, (B,A)\big) \\[4pt]
	&=\big(\dd A + \tfrac{1}{2}\,[A,A]^\star_\frg \, ,\,  \dd B + A\wedge_\star B + \sfR_k(A) \wedge_\star \sfR^k (B) \big) \\[4pt]
	&=\big(F_{\tA}^\br \, , \, \tfrac{1}{2}\,(\dd_{\star\lact}^{\tA} B+ \dd_{\star\ract}^{\tA} B ) \big) \ , 
\end{split}
\end{align}
where in the third line we have identified the left and right braided covariant derivatives from Section~\ref{sec:braidedgaugesym}. The field equations state that the braided $\frg$-connection $A$ is flat while the field $B$ obeys a symmetrized braided covariant constancy condition. Such symmetrized combinations are expected since they preserve reality of the fields (cf.\ Example~\ref{ex:realcovder}), and hence are a generic feature of braided field theories. 

According to the general braided $L_\infty$-algebra formulation, we obtain the braided gauge covariance
\begin{align}\label{eq:braidedBFcov}
	\begin{split}
		\delta_{(\tau,\lambda)}^{\br}\frF_{(B,A)}^\star &= \ell_2^\star\big((\tau,\lambda) \, , \, (\frF_{B}^\star, \frF_{\tA}^\star)\big) =\big(-[\lambda,\frF_{B}^\star]_\frg^\star \, , \, - \lambda \star \frF_{\tA}^\star + \sfR_k (\frF_{B}^\star)\wedge_\star \sfR^k (\tau) \big)\ ,
	\end{split}
\end{align} 
which may also be easily verified explicitly. This extends the statement from Section~\ref{sec:braidedgaugesym} that the braided curvature, left and right covariant derivatives are covariant under gauge transformations from $\Omega^{0}(M,\frg)$.

\subsubsection*{Braided Noether identities}

From the general prescription of Section~\ref{sec:braidedfieldtheory}, the braided Noether identities which are valued in $\Omega^3(M,\frg)[[\hbar]] \times \Omega^{d}(M, W)[[\hbar]]$ are given by
\begin{align}\label{eq:braidedBFnoether}
\begin{split}
	\dsf_{(B,A)}^\star \frF_{(B,A)}^\star :\!&= \ell_1^\star(\frF_{B}^\star, \frF_{\tA}^\star) - \tfrac12\,\Big(\ell_2^\star
	\big((B,A)\,,\, ( \frF_{B}^\star,  \frF_{\tA}^\star)\big) - \ell_2^\star\big(( \frF_{B}^\star,  \frF_{\tA}^\star)\, ,\, (B,A)\big)\Big) \\
	& \quad \,+ \tfrac14\,\ell_2^\star\big(\sfR_k(B,A)\,,\,\ell_2^\star(\sfR^k(B,A)\, ,\, (B,A))\big) \ = \ (0,0) \ ,
\end{split}
\end{align}
where $\sfR_k (B,A) :=  \big( \sfR_k (B), \sfR_k (A )\big)$. These translate to a pair of differential identities among the field equations~\eqref{eq:braidedBFeom}. Expanding and collecting terms, the first slot of \eqref{eq:braidedBFnoether} is the Noether identity corresponding to the braided shift symmetry:
\begin{align}\label{eq:braidedshiftNoether}
	\tfrac12\,\big(\dd_{\star\lact}^{\tA}  \frF_B^\star + \dd_{\star\ract}^{\tA}  \frF_B^\star\big) + \tfrac14\,\big[\sfR_k(A),[\sfR^k(A),A]_\frg^\star\big]_\frg^\star \ = \ 0 \ . 
\end{align}

The second slot is the Noether identity corresponding to the usual braided gauge transformations generated by $\Omega^{0}(M,\frg)$:
\begin{align}\label{eq:braidedBFgNoether}
	\begin{split}
		\tfrac12\,&\big(\dd_{\star\lact}^{\tA}  \frF_{\tA}^\star + \dd_{\star\ract}^{\tA}  \frF_{\tA}^\star - \sfR_k( \frF_{B}^\star)\wedge_\star \sfR^k(B) -  \frF_{B}^\star \wedge_\star B \big) \\ 
		& \hspace{2cm}+\tfrac14\, \Big(-[A,\sfR_k(A)]^\star_\frg \wedge_\star \sfR^k(B) + \sfR_k(A)\wedge_\star \big(\sfR^k(A)\wedge_\star B\big) \\& {\phantom{\hspace{2cm}+\tfrac14\, \big(-[A,\sfR_k(A)]^\star_\frg \wedge_\star \sfR^k(B)+}} + {\sfR_{k}}_{\bar{\textrm{\tiny(1)}}}(A) \wedge_\star \big({\sfR_{k}}_{\bar{\textrm{\tiny(2)}}}(A)\wedge_\star \sfR^k (B)\big) \Big)  \ = \ 0 \ ,
	\end{split}
\end{align}
where we used the Sweedler notation $\Delta_\tCF(\sfR_k) =: {\sfR_{k}}_{\bar{\textrm{\tiny(1)}}}\otimes{\sfR_{k}}_{\bar{\textrm{\tiny(2)}}}$ (with summations understood).
This identity may also be derived 
directly by summing over the braided deformations of the first Bianchi identity in twisted noncommutative differential geometry:
\begin{align}\label{eq:braidedBFfirstbianchi}
	\begin{split}
		(\dd_{\star\lact}^{\tA})^{2} B &= F_{\tA}^\br \wedge_\star B +\tfrac{1}{2}\, A\wedge_\star (A\wedge_\star B) - \tfrac{1}{2} \, \sfR_k (A)\wedge_\star 
		\big(\sfR^k (A) \wedge_\star B\big)\ , \\[4pt]
		(\dd_{\star\ract}^{\tA})^{2} B &= \sfR_k (F_{\tA}^\br) \wedge_\star \sfR^k (B) \\ & \quad \, +\tfrac{1}{2}\, \sfR_k\, \sfR_l (A) \wedge_\star \big(\sfR_m\, \sfR^l (A) \wedge_\star \sfR^m\, \sfR^k (B) \big)  -\tfrac{1}{2} \,  {\sfR_{k}}_{\bar{\textrm{\tiny(1)}}}(A) \wedge_\star \big({\sfR_{k}}_{\bar{\textrm{\tiny(2)}}}(A)\wedge_\star \sfR^k (B)\big) \ , 
	\end{split}
\end{align}
and the mixed derivative expressions. These follow by simply expanding the braided covariant derivatives. The last two terms cancel each other in the classical limit, giving the usual identity $(\dd^{\tA})^{2} B= F\wedge B$. 

\subsubsection*{Braided $BF$ functional}

When $M$ is closed and oriented, the inner product \eqref{eq:BFpairing} twists to a (braided) cyclic pairing given by 
\begin{align}\label{eq:braidedBFpairing}
	\begin{split}
		\langle(B,A)\,,\,(\CB,\CA) \rangle_\star &:= \int_{M}\, \Tr_{\tW}( B\wedge_\star \CB) + \Tr_{\tW} ( \CA \wedge_\star A ) \ , \\[4pt]
		\langle(\tau,\lambda)\,,\,(\CT,\varLambda) \rangle_\star &:= \int_{M}\,(-1)^{d-3} \, \Tr_{\tW}( \tau\wedge_\star \CT ) + \Tr_{\tW} ( \varLambda \wedge_\star \lambda ) \ .
	\end{split}
\end{align}
We shall write out the braided action functional in two forms: First by using a braided cyclic pairing (or forgetting the $\wedge_\star$-cyclicity in the case of compatible twists), and then by employing a compatible twist and its resulting $\wedge_\star$-cyclicity under the integral. This is instructive for considerations of more complicated braided field theories.

For a braided cyclic pairing, the action functional is given by the prescription of Section~\ref{sec:braidedfieldtheory}, which in the example at hand reads
\begin{align}
\begin{split}
		S_\star(B,A) :\!&= \frac12\,\big\langle (B,A)\, ,\, \ell_1^\star(B,A)\big\rangle_\star - \frac16 \, \big\langle (B,A)\,  , \, \ell_2^\star\big((B,A),(B,A)\big)\big\rangle_\star  \\[4pt]
		&= \frac{1}{2}\, \int_M\, \Tr_{\tW} \big( B\wedge_\star \dd A \big)+ \Tr_{\tW} \big( A\wedge_\star \dd B \big)  \\
		& \quad \, +\frac{1}{6}\, \int_M\, \Tr_{\tW} \big( B\wedge_\star [A,A]^\star_\frg \big) +\Tr_{\tW} \Big( A \wedge_\star \big( A\wedge_\star B + \sfR_k(A)\wedge_\star \sfR^k (B)\big) \Big) \\[4pt]
		&= \int_M\, \frac{1}{2}\, \Tr_{\tW} \big( B\wedge_\star \dd A \big)+\frac{1}{2} \, \Tr_{\tW} \big( \dd A\wedge_\star B \big) \\ 
		& \qquad \qquad \,  + \frac{1}{6} \, \Tr_{\tW} \big( B\wedge_\star [A,A]_\frg^\star \big) +\frac{1}{6}\, \Tr_{\tW} \big( \sfR_k (B)\wedge_\star \sfR^k ([A,A]_\frg^\star) \big) \\ 
		&\qquad \qquad \, + \frac{1}{6}\, \Tr_{\tW}\big(\sfR_k(B)\wedge_\star [\sfR^k (A), A]_\frg^\star \big) \ ,
\end{split}
\end{align}
where in the last equality we integrated by parts and used invariance of $\Tr_{\tW}(-\wedge_\star-)$ under the left braided $\frg$-action, which follows from the classical invariance \eqref{eq:BFpairinginvariance} order by order in the deformation parameter $\hbar$. Identifying the braided curvature $2$-form $F_{\tA}^\br=\dd A+\frac12\,[A,A]_\frg^\star$, we arrive at
\begin{align}\label{eq:braidedBFactionnoncycl}
\begin{split}
		S_\star(B,A)& =\int_{M}\, \frac{1}{2}\, \Big( \Tr_{\tW} \big(B\wedge_\star F_{\tA}^\br \big) + \Tr_{\tW} \big(F_{\tA}^\br \wedge_\star B \big) \Big ) 
		\\ & \qquad \qquad \, - \frac{1}{12}\, \Big(\Tr_{\tW} \big( B\wedge_\star [A,A]_\frg^\star \big) + \Tr_{\tW} \big( \sfR_k (B)\wedge_\star \sfR^k ([A,A]_\frg^\star) \big) \Big)  \\
		&\qquad \qquad \, + \frac{1}{6} \, \Tr_{\tW}\big(\sfR_k(B)\wedge_\star [\sfR^k( A), A]_\frg^\star \big) \ .
\end{split}
	\end{align}
	
The first two terms of \eqref{eq:braidedBFactionnoncycl} are ``naive'' deformations which give the classical action functional \eqref{eq:BFaction} in the classical limit. The last three terms cancel each other in the classical limit, but are individually non-covariant. However, one may easily check that together they form a braided covariant combination. The possible non-covariance arises from terms proportional to $\dd \lambda$ in a gauge transformation of $A$, while the covariant terms vanish by invariance of $\Tr_{\tW}$. Indeed one finds
	\begin{align}
\begin{split}
		\delta_{\lambda}^{\br} \,  \Tr_{\tW} \big( B\wedge_\star [A,A]_\frg^\star \big) &= \Tr_{\tW} \big( \sfR_k(B) \wedge_\star [\sfR^k(\dd \lambda),A]_\frg^\star \big) \\ & \quad \, + \Tr_{\tW} \big( \sfR_k(B) \wedge_\star [\sfR_l (A), \sfR^l\, \sfR^k (\dd \lambda)]_\frg^\star \big) \\[4pt]
		&= 2 \, \Tr_{\tW} \big( \sfR_k(B) \wedge_\star [\sfR^k(\dd \lambda),A]^\star_\frg \big) \ , 
\end{split}
	\end{align}	
	where we used the braided symmetry of the bracket in the second equality. Similarly, we find
	\begin{align}
		\delta_{\lambda}^{\br} \,  \Tr_{\tW} \big( \sfR_k (B)\wedge_\star \sfR^k( [A,A]_\frg^\star) \big) = 2 \, \Tr_{\tW} \big(\sfR_k (B) \wedge_\star \sfR^k( [\dd \lambda, A]_\frg^\star) \big) 
\end{align}
and
\begin{align}
\begin{split}
		\delta_{\lambda}^{\br} \,  \Tr_{\tW} \big( \sfR_k (B)\wedge_\star [\sfR^k (A),A]_\frg^\star \big) &= \Tr_{\tW} \big(\sfR_k(B)\wedge_\star [\sfR^k (\dd\lambda),A]_\frg^\star\big) \\
& \quad \, + \Tr_{\tW} \big( \sfR_k(B) \wedge_\star \sfR^k ([\dd \lambda, A]_\frg^\star) \big) \ .
\end{split}
	\end{align}
Adding the three variations with the prefactors in \eqref{eq:braidedBFactionnoncycl} shows that they cancel, and the action functional is braided gauge invariant as expected by the general discussion of Section~\ref{sec:braidedfieldtheory}.
	
Restricting to compatible Drinfel'd twists, the pairing becomes a strictly cyclic structure on the braided $L_\infty$-algebra $\big(V[[\hbar]], \ell_{1}^\star, \ell_2^\star \big)$. In this case, the action functional 
\eqref{eq:braidedBFactionnoncycl} further simplifies to
\begin{align}\label{eq:braidedBFaction}
	S_\star(B,A) = \int_M\, \Tr_{\tW} \big( B\wedge_\star F_{\tA}^\br \big) \ , 	
\end{align}
where the first two terms of \eqref{eq:braidedBFactionnoncycl} combine by $\wedge_\star$-cyclicity, while the last three terms cancel for the same reason together with an $\CR$-matrix identity \eqref{eq:Rmatrixidsw} applied to the final term. Thus the braided noncommutative $BF$ functional is the expected ``naive'' deformation. By varying it with respect to each of fields $B$ and $A$, it is easy to explicitly derive the field equations \eqref{eq:braidedBFeom} using $\wedge_\star$-cyclicity under the integral, invariance of $\Tr_{\tW}$ and $\CR$-matrix identities \eqref{eq:Rmatrixidsw}. Similarly, braided gauge invariance under \eqref{eq:braidedBFgauge} and the corresponding Noether identities \eqref{eq:braidedBFnoether} are verified explicitly. For Hermitian twists and real fields the action functional is real-valued as discussed in Section~\ref{sec:braidedfieldtheory}.

\subsection{Higher braided gauge symmetry}

Another new feature of $BF$ theories, compared to Chern--Simons gauge theory, is that they generally possess additional
higher shift symmetries: The shift
symmetry \eqref{eq:shift} is on-shell reducible in dimensions
$d\geq4$. This means that, strictly speaking, we should also include
in \eqref{eq:ndBF} the negatively-graded vector spaces
$V^{-k}= \Omega^{d-3-k}(M, W)$ for
$k=1,\dots,d-3$, which parameterize `higher gauge transformations',
together with their duals $V^{k+3}=\Omega^{k+3}(M,\frg)$ and the obvious brackets in
\eqref{eq:ndBFbrackets1} and \eqref{eq:ndBFbrackets2}. We spell out this out explicitly for
$BF$ theories in the simplest case of dimension $d=4$.

Any element $\tau \in \Omega^{1}(M, W)$ generates the shift symmetry
\begin{align}
	\delta_{\tau} (B,A)= \big(\dd^{\tA}\tau,0\big) \ \in \ \Omega^{2}(M, W) \times \Omega^1(M,\frg) \ . 
\end{align}
Now consider $\tau':=\tau + \dd^{\tA}\varepsilon\, \in \Omega^{1}(M, W)$ for any $\varepsilon \in \Omega^{0}(M, W)$. This generates the shift symmetry 
\begin{align}
	\begin{split}
		\delta_{\tau'} (B,A) &= \big(\dd^{\tA}\tau',0\big) = \big(\dd^{\tA}\tau + (\dd^{\tA})^{2} \varepsilon,0\big) = \big(\delta_{\tau}B + F_{\tA}\cdot \varepsilon,0\big) = \big(\delta_{\tau}B + \frF_{B} \cdot \varepsilon,0\big) \ ,
	\end{split}
\end{align}
and so the two transformations differ by a term proportional to a
field equation, that is, $\tau$ and $\tau'=: \tau +
\delta_{(\varepsilon,A)}\tau$  generate the same symmetry on-shell. This
leads to a redundancy in the subspace of $V^{0}$ generating the action of shift symmetries on
$V^{1}$; the redundancy lives in the subspace of
covariantly exact $1$-forms valued in $ W$. We may parameterize this by
$V^{-1}:= \Omega^{0}(M, W)$; for $d=4$ this is
enough and no further gauge redundancy in the description exists,
while in higher dimensions $d>4$ the form degrees change and similarly higher-to-higher gauge transformations are required, and so on.

Following~\cite{ECPLinfty}, let us now describe the complete extended $L_\infty$-algebra of $BF$ theory in $d=4$ dimensions. We
extend the cochain complex by introducing 
\begin{align}
	V^{-1}:= \Omega^{0}(M, W) \qquad \text{and} \qquad
	V^{4}:= \Omega^{4}(M,\frg) \ ,
\end{align}
and denote the corresponding elements by $\varepsilon \in V^{-1}$ and $\CE \in V^{4}$. The brackets
\eqref{eq:ndBFbrackets1} and \eqref{eq:ndBFbrackets2} are extended as 
\begin{align}
	\ell_{1} (\varepsilon)=(\dd \varepsilon,0 ) \qquad \text{and} \qquad
	\ell_{1} (\CT,\varLambda)=\dd \CT  \ ,
\end{align}
together with
\begin{align}
	\begin{split}
		\ell_{2} \big(\varepsilon\,,\,(\tau,\lambda)\big)= \lambda
		\cdot
		\varepsilon 
		\quad , \quad
		\ell_{2} \big(\varepsilon\,,\,(B,A) \big)&=-(A\cdot
		\varepsilon,0) \quad , \quad
		\ell_{2} \big(\varepsilon\,,\,(\CB,\CA)\big)=(\CB
		\cdot
		\varepsilon,0)
		\ , \\[4pt]
		\ell_{2} \big(\varepsilon\,,\,
		(\CT,\varLambda)\big)=-(0,\CT\cdot \varepsilon) \quad , \quad
		\ell_{2} (\varepsilon,\CE)&=(0,\CE \cdot \varepsilon) \quad , \quad
		\ell_{2}\big((\tau,\lambda)\,,\,\CE\big)=-[\lambda,\CE]_\frg \ , \\[4pt]
		\ell_{2}\big((B,A)\,,\,(\CT,\varLambda)\big)=-[A,\CT]_\frg \quad & , \quad
		\ell_{2}\big((\CB_{1},\CA_{1})\,,\,(\CB_{2},\CA_{2})\big)=-[\CB_{1},\CB_{2}]_\frg \ .
	\end{split}
\end{align}
The cyclic inner product extends naturally as 
\begin{align}
	\langle \varepsilon, \CE \rangle  = \int_{M}\,
	\Tr_{\tW}(\varepsilon \wedge \CE) \ .
\end{align}

The higher gauge transformation is now encoded in the expected way via~\eqref{eq:highergauge} as
\begin{align}
\delta_{(\varepsilon,(B,A))}(\tau,\lambda) = \ell_{1}(\varepsilon) - \ell_2\big(\varepsilon,(B,A)\big) = \big(\dd^{A} \varepsilon, 0\big) \ .
\end{align}
Its adjoint defines the higher Noether operator
\begin{align}
\dsf_{(B,A)}^{\swone}(\CT,\varLambda) = \ell_1(\CT,\varLambda)- \ell_2\big((B,A),(\CT,\varLambda)\big)= \dd^{A}\CT \ ,
\end{align}
which encodes the level 1 Noether identity via~\eqref{eq:level1Noether} as
\begin{align}
	\dsf_{(B,A)}^{\swone}\circ \dsf_{(B,A)} (\CB,\CA) =-\ell_{2}\big((\frF_{B},\frF_A),(\CB,\CA)\big)= [\frF_{B},\CB]_\frg \ .
\end{align}

The extension of the $L_\infty$-algebra carries through the usual deformation to produce a braided $L_\infty$-algebra without any issues. However, we currently do not have a good physical interpretation of these `higher' braided gauge symmetries. The issue is related to the problem of defining the braided moduli space of classical solutions. We first briefly describe the problem without explicitly writing out the detailed extended braided $L_\infty$-algebra, and then show how it is a general consequence within the braided $L_\infty$-algebra framework.

For $d=4$ the braided shift symmetry generated by $\tau \in \Omega^{1}(M, W)$ acts as $\delta_\tau^{\br} (B,A) = (\dd^{\tA}_{\star\ract} \tau,0) $. For a higher gauge parameter $\varepsilon \in \Omega^{0}(M, W)$, we would expect the deformation of the action on gauge parameters to be $\delta_{(\varepsilon,A)}^{\br} \tau = \dd^{\tA}_{\star\ract} \varepsilon\, \in \Omega^{1}(M, W)[[\hbar]]$. This, however, does not seem to generate an equivalent (in the classical sense) braided gauge transformation via \smash{$\tau' := \tau + \dd^{\tA}_{\star\ract} \varepsilon$}, since
	\begin{align}\label{eq:higherBFbroken}
\begin{split}
		\delta_{\tau'}^{\br} (B,A) &= \big(\dd^{\tA}_{\star\ract} \tau',0\big)  \\[4pt]
		&= \big(\dd^{\tA}_{\star\ract} \tau + (\dd^{\tA}_{\star\ract})^{2} \varepsilon,0\big) \\[4pt]
		&= \big(\delta_{\tau}^{\br} B + \sfR_k (\frF^\star_B) \star \sfR^k (\varepsilon)  - \tfrac{1}{2}\, \sfR_k(A)\wedge_\star \big(\sfR_l(A)\star \sfR^l\, \sfR^k (\varepsilon) \big) \\
		& \hspace{5cm}+\tfrac{1}{2}\, \sfR_k\, \sfR_l (A) \wedge_\star \big(\sfR_m\, \sfR^l (A) \star \sfR^m\, \sfR^k (\varepsilon) \big) \,,\, 0\big) \ . 
\end{split}
	\end{align} 
That is, they do not generate the same symmetry on-shell. The last two polynomial terms spoil this symmetry, which is a consequence of the breaking of the first Bianchi identity \eqref{eq:braidedBFfirstbianchi} used in the third equality. We now explain that this is a generic feature of any higher braided gauge theory, by using its braided $L_\infty$-algebra incarnation.

For this, we extend the braided brackets \eqref{eq:braidedBFbrackets} by twisting the corresponding classical extended brackets, with $\ell^\star_1= \ell_1$ as usual and the braided 2-brackets
 \begin{align}
 	\begin{split}
 		\ell^\star_{2} \big(\varepsilon\,,\,(\tau,\lambda)\big)= \lambda
 		\star
 		\varepsilon 
 		\qquad & , \qquad
 		\ell_{2}^\star \big(\varepsilon\,,\,(B,A) \big)=-\big(\sfR_k (A)\star \sfR^k
 		(\varepsilon),0\big) \ , \\[4pt]
 		\ell^\star_{2} \big(\varepsilon\,,\,(\CB,\CA)\big)=\big(\sfR_k(\CB)
 		\star
 		\sfR^k(\varepsilon),0\big) \qquad & 
 		, \qquad
 		\ell_{2}^\star \big(\varepsilon\,,\,
 		(\CT,\varLambda)\big)=-\big(0,\sfR_k(\CT)\star \sfR^k(\varepsilon)\big) \ , \\[4pt]
 		\ell_{2}^\star (\varepsilon,\CE)=\big(0,\sfR_k(\CE) \star \sfR^k(\varepsilon)\big) \qquad & , \qquad
 		\ell_{2}^\star\big((\tau,\lambda)\,,\,\CE\big)=-[\lambda,\CE]_\frg^\star \ , \\[4pt]
 		\ell_{2}^\star\big((B,A)\,,\,(\CT,\varLambda)\big)=-[A,\CT]_\frg^\star \qquad & , \qquad
 		\ell_{2}^\star\big((\CB_{1},\CA_{1})\,,\,(\CB_{2},\CA_{2})\big)=-[\CB_{1},\CB_{2}]_\frg^\star \ .
 	\end{split}
 \end{align}
The cyclic inner product extends as
 \begin{align}
 	\langle \varepsilon, \CE \rangle_\star  = \int_{M}\,
 	\Tr_{ \tW}(\varepsilon \wedge_\star \CE) \ .
\end{align}

Using these brackets, we see that the (left) higher braided gauge transformation is expressed exactly as in the classical case~\eqref{eq:highergauge}, that is 
\begin{align}
\delta_{(\varepsilon,(B,A))}^{\br} (\tau,\lambda) = \big(\dd^{A}_{\star\ract} \varepsilon,0\big)=\ell^\star_{1} (\varepsilon) - \ell_{2}^\star\big(\varepsilon \, , \, (B,A)\big) \ .
\end{align}
However, using the braided graded Jacobi identities one easily verifies that the classical closure property \eqref{eq:highergaugecov} no longer holds, as
\begin{align}
\begin{split}
\delta_{(\varepsilon,(B, A))}^{\br}\big(\delta_{\tau}^{\br} (B,A)\big)&=\ell_{2}^\star\big(\varepsilon, \frF^\star_{(B,A)}\big) +\ell_{2}^\star\big(\ell_{2}^\star\big(\varepsilon,\sfR_k(B,A)\big)\,,\, \sfR^k(B,A)\big) \\ & \hspace{4cm} -\tfrac{1}{2}\, \ell_2^\star\big(\varepsilon,\ell_{2}^\star\big((B,A)\,,\,(B,A)\big)\big)\ ,
\end{split}
\end{align}
whereby expanding the brackets explicitly recovers the formula~\eqref{eq:higherBFbroken}, as expected. Since this is a feature of any differential graded braided Lie algebra, this will be a generic trait of any higher gauge theory based on a braided $L_\infty$-algebra, where higher brackets will generally contribute higher tensor powers of the fields to this expression.

One may also use cyclicity of the extended pairing to define the adjoint \smash{$\dsf^{\swone\br}_{(B,A)}:V^3[[\hbar]]\to V^4[[\hbar]]$} of the operator $\delta_{(\varepsilon, (B,A))}^{\br}$, exactly as in the classical case~\eqref{eq:level1NoetherOp}. However, the meaning of this operator is currently unclear, due to both the breakdown of the on-shell closure propery of its adjoint operation and also the peculiar definition of the level 0 Noether operator~\eqref{braidedNoether} in the braided field theory. It would be interesting to further explore these issues and develop a proper formulation of higher braided gauge theories. As before, the braided BV formalism of Section~\ref{sec:braidedBV} circumvents these issues, while an appropriate higher category generalization of the description of Section~\ref{sec:braidedMC} ought to make sense of higher braided gauge symmetries in the setting of the braided Maurer--Cartan moduli spaces discussed there.\footnote{This would require the $\infty$-category setting mentioned in Footnote~\ref{fn:inftycategory}.}

\section{Braided noncommutative Yang--Mills theory}
\label{sec:braidedNCYM}

In this final section we look at our first example of a noncommutative field theory with gauge symmetry based on a braided $L_\infty$-algebra which is not a differential graded braided Lie algebra. We twist deform the $L_\infty$-algebra formulation of Yang--Mills theory discussed in Section~\ref{sec:Yang-Millstheory} to uncover a new noncommutative deformation of Yang--Mills theory with novel and surprising features. This is in marked contrast to scalar field theory, whose braided deformation gives the standard noncommutative field theory, and the braided Chern--Simons and $BF$ theories, which are based on differential graded braided Lie algebras, and whose field equations and variational principles are also given by the naive replacements of products of fields with star-products.\footnote{Of course, this does not preclude the different kinematical sectors of these theories due to the braided Lie brackets, as well as the novel general nature of classical solutions in braided field theory and the corresponding modified forms of gauge redundancies through the braided Noether identities.}

\subsection{Noncommutative Yang--Mills theory in the braided $L_\infty$-algebra formalism}
\label{sec:braidedYMtheory}

In this section we consider the braided noncommutative deformation of the Yang--Mills gauge theory described in Section~\ref{sec:Yang-Millstheory}. As in the example of noncommutative scalar field theory, we suppose now that the background spacetime $(M,g)$ has a non-trivial Killing Lie subalgebra of vector fields $\frk \subset \frv=\sfGamma(TM)$. For the same reasons as spelled out in Section~\ref{sec:braidedscalar}, the Yang--Mills cyclic $L_\infty$-algebra $(V, \ell_{1},\ell_2,\ell_3, \langle- , - \rangle)$ lives in the category of $\sU \frk$-modules.

We now apply the general formalism of braided field theory from Section~\ref{sec:braidedfieldtheory} to the more elaborate example of Yang--Mills theory. A choice of a Killing Drinfel'd twist\footnote{{In $d=4$ dimensions, Yang--Mills theory is further invariant under conformal transformations. In this dimension, one may also use Drinfel'd twists generated by the conformal Killing Lie algebra of $(M,g)$. An analogous statement holds in $d=2$ dimensions, wherein Yang--Mills theory is invariant under all area-preserving diffeomorphisms.}} $\CF\in \sU\frk[[\hbar]]\otimes \sU\frk[[\hbar]]$ deforms the underlying $L_{\infty}$-algebra to a 4-term braided $L_{\infty}$-algebra $\big(V[[\hbar]], \ell_{1}^{\star}, \ell_{2}^{ \star},\ell_{3}^{ \star}\big)$ with 1-bracket $\ell_{1}^{\star}= \ell_{1}$, the 2-brackets
\begin{align}\label{eq:braidedYM2brackets}
	\begin{split} 
		\ell_{2}^\star(\lambda_1,\lambda_2)= -[\lambda_1,\lambda_2]_\frg^\star \qquad & , \qquad
		\ell_{2}^{\star} (\lambda, A)=-[\lambda,A]_\frg^\star \ , \\[4pt]
		\ell_{2}^\star(\lambda, \CA)= -[\lambda,\CA]_\frg^\star \qquad , \qquad
		\ell_{2}^\star(\lambda,\varLambda)&=-[\lambda,\varLambda]_\frg^\star \qquad , \qquad
		\ell_{2}^\star(A, \CA)= -[A,\CA]_\frg^\star \ , \\[4pt]
		\ell_{2}^\star(A_1, A_2)= - \dd \ast_\hodge  [A_1,A_2]_\frg^\star -&[A_1,\ast_\hodge  \dd A_2]_\frg^\star + (-1)^{d}\, [\ast_\hodge  \dd A_1, A_2]_\frg^\star \ , 
	\end{split}
\end{align}
and finally the non-zero 3-bracket
\begin{align}
\begin{split}
	\ell_{3}^\star(A_1&,A_2,A_3)  \\[4pt] &= -\big[A_1,\,  \ast_\hodge [A_2,A_3]_\frg^\star\, \big]_\frg^\star-\big[\sfR_k(A_2),\, \ast_\hodge [\sfR^k(A_1),A_3]_\frg^\star\,\big]_\frg^\star+(-1)^{d}\, \big[\ast_\hodge [A_1,A_2]_\frg^\star\, ,A_3\big]_\frg^\star \ ,
\end{split}
\end{align}
where $[-,-]_\frg^\star := [-,-]_\frg\circ \CF^{-1}$.

The (left) braided gauge transformation of a gauge field $A\in \Omega^1(M,\frg)$ by a gauge parameter $\lambda\in\Omega^0(M,\frg)$ follows the standard rule
\begin{align}\label{eq:braidedYMgauge}
	\delta_{\lambda}^{\br}A := \ell_1^\star(\lambda) + \ell_2^\star(\lambda,A) = \dd\lambda - [\lambda,A]_\frg^\star \ ,
\end{align}
closing as a braided Lie algebra:
\begin{align}
\big[\delta_{\lambda_1}^{\br},\delta_{\lambda_2}^{\br}\big]_\circ^\star = \delta^{\br}_{-\ell_2^\star(\lambda_1,\lambda_2)} = \delta_{[\lambda_1,\lambda_2]_\frg^\star}^{\br} \ ,
\end{align}
for all $\lambda_1,\lambda_2\in\Omega^0(M,\frg)$.

\subsubsection*{Braided Yang--Mills equations}

The braided Yang--Mills field equations turn out to have a rather surprising form. Following the prescription from Section~\ref{sec:braidedfieldtheory}, the field equations $\frF^\star_{\tA}=0$ valued in $\Omega^{d-1}(M,\frg)[[\hbar]]$ are given by
\begin{align}\label{eq:braidedYMeom}
	\begin{split}
	\begin{tabular}{|l|}\hline\\
$\displaystyle
		\frF_{\tA}^\star = \tfrac{1}{2}\, \big(\dd^{\tA}_{\star\lact}\ast_\hodge  F_{\tA}^\br + \dd^{\tA}_{\star\ract}\ast_\hodge  F_{\tA}^\br\big)  $ \\[2mm] $\displaystyle \quad \, \hspace{1cm} + \, \tfrac{1}{6} \, \big[\sfR_k(A),\, \ast_\hodge [\sfR^k (A),A]_\frg^\star \,\big]_\frg^\star
		-\tfrac{1}{12} \,  \big[A,\, \ast_\hodge [A,A]_\frg^\star\, \big]_\frg^\star + \tfrac{(-1)^{d}}{12}\, \big[ \ast_\hodge [A,A]_\frg^\star\, ,A\big]_\frg^\star 
		$\\\\\hline\end{tabular}
	\end{split}
\end{align}
where $F_{\tA}^\br=\dd A+\tfrac12\,[A,A]_\frg^\star\in\Omega^2(M,\frg)[[\hbar]]$ is the braided curvature 2-form.
Indeed, some simple algebra gives
\begin{align}\label{eq:braidedYMeomCalc}
\begin{split}
\frF^\star_{\tA} &= \ell_{1}^\star(A) -\tfrac{1}{2}\,  \ell_{2}^\star (A,A) - \tfrac{1}{3!} \, \ell_{3}^\star (A,A,A) \\[4pt]
	&= \dd \ast_\hodge  \dd A + \tfrac{1}{2} \, \dd \ast_\hodge  [A,A]_\frg^\star +\tfrac{1}{2}\,  [A,\ast_\hodge  \dd A]_\frg^\star - \tfrac{(-1)^{d}}{2}\, [\ast_\hodge  \dd A, A]_\frg^\star   \\ 
	& \quad \, +\tfrac{1}{6}\, \big[A,\, \ast_\hodge [A,A]_\frg^\star \,\big]_\frg^\star +\tfrac{1}{6} \, \big[\sfR_k(A),\, \ast_\hodge [\sfR^k (A),A]_\frg^\star \,\big]_\frg^\star -\tfrac{(-1)^{d}}{6}\,\big[ \ast_\hodge [A,A]_\frg^\star\, ,A\big]_\frg^\star \\[4pt]
	&= \dd \ast_\hodge  F_{\tA}^\br +\tfrac{1}{2}\, \big[A, \ast_\hodge  \dd A +\tfrac{1}{2} \ast_\hodge  [A,A]_\frg^\star \, \big]_\frg^\star - \tfrac{(-1)^{d}}{2} \, \big[\ast_\hodge  \dd A + \tfrac{1}{2}\ast_\hodge [A,A]_\frg^\star \, , \, A\big]_\frg^\star  \\ 
	&\quad \, -\tfrac{1}{12}\, \big[A,\, \ast_\hodge [A,A]_\frg^\star\, \big]_\frg^\star + \tfrac{1}{6}\, \big[\sfR_k(A),\, \ast_\hodge [\sfR^k (A),A]_\frg^\star \,\big]_\frg^\star + \tfrac{(-1)^{d}}{12}\, \big[ \ast_\hodge [A,A]_\frg^\star\, ,A\big]_\frg^\star \ ,
	\end{split}
\end{align}
and the result follows by identifying the curvature and left/right covariant derivatives. 

The first term of the field equations \eqref{eq:braidedYMeom} is the expected symmetrized braided covariant derivative of $\ast_\hodge  F_{\tA}^\br$, which reduces to the covariant constancy condition in the classical limit, similarly to the example of braided $BF$ theory from Section~\ref{sec:braidedBF}. However, the appearance of the last three terms is completely unexpected, since in the classical limit $\RR=1\otimes1$ and the terms conspire to cancel. Hence these terms could not have been guessed or even motivated by naively deforming the classical field equations. The braided $L_{\infty}$-algebra perspective necessitates their presence. Similar features occur in four-dimensional braided gravity~\cite{Ciric:2021rhi}, and we expect such genuinely new interaction terms to be a generic feature of gauge theories whose underlying $L_\infty$-algebra is not simply a differential graded Lie algebra.

The field equations \eqref{eq:braidedYMeom} differ drastically from those in the usual approaches to noncommutative Yang--Mills theory based on star-gauge symmetry (cf.\ Section~\ref{sec:usualNCLinfty}), wherein only the unbraided version of the first term appears (with left and right covariant derivatives coinciding). This either imposes restrictions on the gauge algebra $\frg$ or the introduction of (infinitely many) new degrees of freedom. Our braided field theory instead makes sense for any Lie algebra $\frg$ without introducing any spurious degrees of freedom. The braided field equations \eqref{eq:braidedYMeom} are furthermore real when restricted to real gauge fields, as follows from the general discussion of Section~\ref{sec:braidedfieldtheory}.

The general braided $L_{\infty}$-algebra formulation guarantees the gauge covariance of the field equations \eqref{eq:braidedYMeom} under the braided gauge transformations \eqref{eq:braidedYMgauge}. Gauge covariance in the present example reads in the expected way as
\begin{align}
	\delta_{\lambda}^{\br}\, \frF_{\tA}^\star = \ell_{2}^\star (\lambda, \frF_{\tA}^\star) = -[\lambda,\frF_{\tA}^\star]_\frg^\star \ .
\end{align}
Given the explicit form of \eqref{eq:braidedYMeom}, this is a non-trivial statement. The first term is manifestly braided covariant, as it involves covariant derivatives. On the other hand, the last three terms are not separately braided covariant objects. However, it is straightforward to show that together they do form a braided covariant combination, as guaranteed by the braided homotopy relations of Section~\ref{sec:braidedfieldtheory}. We will demonstrate how this works explicitly in Section~\ref{sec:braidedgaugeYM} below.

\subsubsection*{Braided Noether identities}

From the general prescription of Section~\ref{sec:braidedfieldtheory}, the gauge redundancies of braided Yang--Mills theory are encoded by the braided Noether identities
\begin{align}
\begin{split}
	\dsf_{\tA}^\star \frF^\star_{\tA} &:= \ell_1^\star(\frF_{\tA}^\star) - \tfrac12\,\big(\ell_2^\star(A, \frF_{\tA}^\star) - \ell_2^\star(\frF_{\tA}^\star,A)\big) \\ & \quad \, 
	-\tfrac1{12}\,\ell_1^\star\big(\ell_3^\star(A,A,A)\big) + \tfrac14\,\ell_2^\star\big(\sfR_k(A), \, \ell_2^\star(\sfR^k(A),A) \big)  \\
	& \quad \, + \tfrac1{12} \, \Big(\ell_2^\star\big(\ell_3^\star(A,A,A),A\big) - \ell_2^\star\big(A,\ell_3^\star(A,A,A)\big)\Big) \ = \ 0 \ , 
	\end{split}
\end{align}
where we used $\ell^\star_{3}\big(\CA, A_1,A_2\big)=0$. Expanding and collecting terms, this translates to the differential identity 
\begin{align}\label{eq:braidedYMNoether}
\begin{split}
\begin{tabular}{|l|}\hline\\
$\displaystyle
\tfrac{1}{2}\, \big(\dd_{\star\lact}^{\tA} \frF_{\tA}^\star + \dd_{\star\ract}^{\tA} \frF_{\tA}^\star\big) + \dd_{\star\lact}^{\tA} Z_{\tA} + \dd_{\star\ract}^{\tA} Z_{\tA} - 3 \, \dd Z_{\tA} + \tfrac{1}{4}\, \big( \dd \tilde{Z}_{\tA} - [A,\tilde{Z}_{\tA}]_\frg^\star - [\tilde{Z}_{\tA}, A]_\frg^\star \big) $ \\[2mm] $
 \hspace{1.5cm} +\,\tfrac{1}{4}\, \big[ \sfR_k(A)\, ,\, \dd \ast_\hodge  [\sfR^k(A),A]_\frg^\star + [\sfR^k(A),\ast_\hodge  \dd A]_\frg^\star - (-1)^{d}\, [\ast_\hodge  \dd \sfR^k(A), A]_\frg^\star \, \big]_\frg^\star \ = \ 0
 $\\\\\hline\end{tabular}
	\end{split}
\end{align}
where we define $Z_{\tA}$ to be the new covariant combination from the field equations \eqref{eq:braidedYMeom}:
\begin{align}
	\begin{split}
		Z_{\tA}:= \tfrac{1}{6} \, \big[\sfR_k(A),\, \ast_\hodge [\sfR^k (A),A]_\frg^\star \,\big]_\frg^\star
		 -\tfrac{1}{12} \,  \big[A,\, \ast_\hodge [A,A]_\frg^\star\, \big]_\frg^\star + \tfrac{(-1)^{d}}{12}\, \big[ \ast_\hodge [A,A]_\frg^\star\, ,A\big]_\frg^\star \ ,
	\end{split}
\end{align}
and $\tilde{Z}_{\tA}$ to be the first term of this expression:
\begin{align}
	\tilde{Z}_{\tA}:= \big[\sfR_k(A),\, \ast_\hodge [\sfR^k (A),A]_\frg^\star \,\big]_\frg^\star \ .
\end{align}
In the classical limit $Z_{\tA}$ vanishes, while the rest of the inhomogeneous terms cancel upon application of the classical  Jacobi identity for the graded Lie algebra $\big(\Omega^{\bullet}(M,\frg),[-,-]_\frg\big)$. Thus we obtain the classical Noether identity $\dd^{\tA} \frF_{\tA}=0$, as expected.

\subsubsection*{Braided Yang--Mills functional}

By twisting the classical Yang--Mills inner product \eqref{eq:YMpairing}, we obtain the (braided) cyclic structure
\begin{align} \label{eq:braidedYMpairing}
	\langle A ,\CA \rangle_\star := \int_M\,
	\Tr_\frg(A\wedge_\star \CA) \qquad \mbox{and} \qquad
	\langle \lambda,\mit\Lambda \rangle_\star := \int_M\,
	\Tr_\frg(\lambda\wedge_\star \mit\Lambda) \ .
\end{align}
We shall use this to write out and simplify the braided Yang--Mills action functional in two forms, as we did with the braided $BF$ theory of Section~\ref{sec:braidedBF}.
First we consider an arbitrary Drinfel'd twist $\CF$, and hence a braided cyclic pairing (or for a compatible twist we simply forget the strict cyclicity). 
We  define the action functional 
via the prescription of Section~\ref{sec:braidedfieldtheory}:
\begin{align}\label{eq:NCYMaction}
	\begin{split}
		S_\star(A) :\!&= \frac{1}{2}\, \langle A, \ell^\star_{1}(A) \rangle_\star - \frac{1}{3!}\, \langle A, \ell^\star_2 (A,A)\rangle_\star -\frac{1}{4!}\,  \langle A,\ell^\star_3 (A,A,A) \rangle_\star  \\[4pt] & =  \int_M\, \frac{1}{2}\, \Tr_\frg (A \wedge_\star \dd \ast_\hodge  \dd A) \\
		& \qquad \qquad -\frac{1}{6}\, \Tr_\frg \Big( A \wedge_\star\big( -\dd \ast_\hodge  [A,A]_\frg^\star -[A,\ast_\hodge  \dd A]_\frg^\star +(-1)^{d}\, [\ast_\hodge  \dd A, A]_\frg^\star\big) \, \Big) \\ 
		&\qquad \qquad -\frac{1}{24}\, \Tr_\frg \Big(A \wedge_\star \big( -\big[A,\, \ast_\hodge [A,A]_\frg^\star \,\big]_\frg^\star - \big[\sfR_k(A),\, \ast_\hodge [\sfR^k (A),A]_\frg^\star \,\big]_\frg^\star \\ & \hspace{8cm}+(-1)^{d}\,\big[ \ast_\hodge [A,A]_\frg^\star\, ,A\big]_\frg^\star\big)\, \Big) \ .
	\end{split}
\end{align}

We simplify \eqref{eq:NCYMaction}  by making the term $\Tr_\frg(F_{\tA}^\br \wedge_\star \ast_\hodge  F_{\tA}^\br)$ appear, and hence view the remaining terms as a deformation of the ``naive'' noncommutative Yang--Mills action functional. This requires the use of the $\CR$-matrix identities \eqref{eq:Rmatrixidsw}, as well as the invariance of $\Tr_\frg(-\wedge_\star-)$ under the left and right braided adjoint actions, which follows from the classical invariance order by order in the deformation parameter $\hbar$. For instance, we may derive the useful identity
\begin{align}
	\begin{split}
		\int_{M}\, \Tr_\frg \big( A \wedge_\star [A,\ast_\hodge  \dd A]_\frg^\star \, \big) &= \int_{M}\, \Tr_\frg \big([\sfR_k(A),\sfR^k(A)]_\frg^\star\, \wedge_\star \ast_\hodge  \dd A \big) \\[4pt] & = \int_{M}\, \Tr_\frg \big([A,A]_\frg^\star\, \wedge_\star \ast_\hodge  \dd A \big)  \\[4pt]
		&= \int_{M}\, \Tr_{\frg}\big(\ast_\hodge [A,A]_{\frg}^{\star} \,\wedge_\star \dd A\big) \ , 
	\end{split}
\end{align}
where in the first line we use the braided ${\rm ad}(\frg)$-invariance of $\Tr_\frg(-\wedge_\star-)$, followed by braided symmetry of $[-,-]_\frg^\star$ on 1-forms and strict symmetry of the Hodge duality operator $\ast_\hodge $ with respect to the cyclic inner product. In a completely analogous way, we find
\begin{align}
	\int_{M}\, \Tr_\frg \big( A \wedge_\star \,\big[ \ast_\hodge [A,A]_\frg^\star\, ,A\big]_\frg^\star \,\big) = - (-1)^{d} \, \int_{M}\, \Tr_\frg \big([A,\sfR_k(A)]_\frg^\star \, \wedge_\star \, \ast_\hodge  [\sfR^k(A),A]_\frg^\star\, \big) \ .
\end{align}

By treating the rest of the terms in a similar fashion, after a long but straightforward calculation we reach a simplified and more familiar form of the braided noncommutative Yang--Mills functional:
\begin{align}\label{eq:braidedYMactionnoncycl}
\begin{split}
		S_\star(A)&=  \int_{M}\,\frac{1}{2}\, \Tr_\frg \big(F_{\tA}^\br \wedge_\star\,  \ast_\hodge  F_{\tA}^\br\big)+\frac{1}{6} \, \Tr_\frg\big([A,\sfR_k(A)]_\frg^\star\, \wedge_\star \ast_\hodge  \sfR^k (  F_{\tA}^\br) \big) \\
		&\quad \qquad  \, -\frac{1}{12}\, \Big( \Tr_\frg\big(F_{\tA}^\br\wedge_\star\, \ast_\hodge  [A,A]_\frg^\star \, \big) + \Tr_\frg \big(\ast_\hodge [A,A]_\frg^\star\wedge_\star\, F_{\tA}^\br\big)  \Big) \\
		&\quad \qquad  \, + \frac{1}{24}\, \Big( \Tr_\frg \big([A,\sfR_k(A)]_\frg^\star\, \wedge_\star \, \ast_\hodge  [\sfR^k(A),A]_\frg^\star\,  \big) \\
		&\qquad\qquad \qquad \qquad \, - \Tr_\frg \big([A,\sfR_k(A)]_\frg^\star\, \wedge_\star\, \ast_\hodge  \sfR^k ( [A,A]_\frg^\star\, ) \big) \Big ) \ .
		\end{split} 
	\end{align}
The first term is the ``naive'' deformation of the classical Yang--Mills action functional \eqref{eq:YMaction}, similarly to that discussed in Section~\ref{sec:usualNCLinfty}. It is invariant under braided gauge transformations, due to braided ${\rm ad}(\frg)$-invariance of $\Tr_\frg$ and covariance of the braided curvature $F_{\tA}^\br$. The remaining terms are unexpected new interactions that represent the non-trivial deformation we uncover in the braided $L_\infty$-algebra framework. They vanish in the classical limit where $\CR=1\otimes 1$, recovering the classical action functional. This is completely in line with the extra terms we find in the braided field equations \eqref{eq:braidedYMeom}. Given that the (braided) $L_\infty$-algebra formulation of field theories is a broad generalization of Chern--Simons theory, it seems natural that Chern--Simons type terms, involving the gauge field $A$ on its own, turn up in our definition of braided field theories. Although these terms are individually non-covariant, together they form braided gauge covariant combinations. We discuss this point in more detail in Section~\ref{sec:braidedgaugeYM} below.

Restricting to compatible Drinfel'd twists, the pairing becomes a strictly cyclic structure for the braided $L_\infty$-algebra $(V[[\hbar]], \ell_{1}^\star, \ell_2^\star, \ell_{3}^\star )$. In this case, the action functional~\eqref{eq:braidedYMactionnoncycl} simplifies further to
\begin{align}\label{eq:braidedYMaction}
	\begin{split}
	\begin{tabular}{|l|}\hline\\
$\displaystyle
		S_\star(A) = \frac{1}{2}\, \int_{M}\, \Tr_\frg \big(F_{\tA}^\br \wedge_\star\,  \ast_\hodge  F_{\tA}^\br\big)+ \frac{1}{24}\, \int_M\, \Tr_\frg \big([A,\sfR_k(A)]_\frg^\star \, \wedge_\star \, \ast_\hodge  [\sfR^k(A),A]_\frg^\star\,  \big) $ \\[4mm]
		$ \displaystyle \quad \, \hspace{1cm}  - \frac{1}{24}\,\int_M\, \Tr_\frg \big([A,A]_\frg^\star \, \wedge_\star \, \ast_\hodge  [A,A]_\frg^\star\,  \big)
		$\\\\\hline\end{tabular}
	\end{split}
\end{align}
where the first three unexpected terms cancel each other by $\wedge_\star$-cyclicity under the integral, and the last term similarly simplifies.
One may derive  the braided field equations \eqref{eq:braidedYMeom} by varying this explicit action functional via similar manipulations as above. Of course, this is simply a confirmation of the general arguments from Section~\ref{sec:braidedfieldtheory}. For Hermitian twists and real gauge fields, it is real-valued.

\subsection{Braided gauge symmetry}
\label{sec:braidedgaugeYM}

It is instructive to see how gauge invariance of braided Yang--Mills theory is realised explicitly. We start with the field equations \eqref{eq:braidedYMeom}. Consider a gauge transformation of the last term:
\begin{align}
\begin{split}
	\delta_{\lambda}^{\br} \big[ \ast_\hodge [A,A]_\frg^\star\, ,A\big]_\frg^\star &= \big[ \ast_\hodge [\delta_{\lambda}^{\br}A,A]_\frg^\star\, ,A\big]_\frg^\star + \big[ \ast_\hodge [\sfR_k(A),\delta_{\sfR^k(\lambda)}^{\br}A]_\frg^\star\, ,A\big]_\frg^\star \\
	& \quad \, + \big[ \ast_\hodge  \sfR_k([A,A]_\frg^\star)\, ,\delta_{\sfR^k(\lambda)}^{\br}A\big]_\frg^\star  \\[4pt] &= \big[ \ast_\hodge [\dd \lambda,A]_\frg^\star\, ,A\big]_\frg^\star - \big[ \ast_\hodge [[\lambda,A]_\frg^\star,A]_\frg^\star\, ,A\big]_\frg^\star \\&\quad \, +\big[ \ast_\hodge [\sfR_k(A),\sfR^k(\dd\lambda)]_\frg^\star\, ,A\big]_\frg^\star  - \big[ \ast_\hodge [\sfR_k(A), [\sfR^k(\lambda),A]_\frg^\star]_\frg^\star\, ,A\big]_\frg^\star \\&\quad \, + \big[ \ast_\hodge  \sfR_k([A,A]_\frg^\star)\, ,\sfR^k (\dd \lambda) \big]_\frg^\star - \big[ \ast_\hodge \sfR_k([A,A]_\frg^\star)\, ,[\sfR^k(\lambda),A]_\frg^\star\, \big]_\frg^\star \\[4pt]
	&=2\, \big[ \ast_\hodge [\dd \lambda,A]_\frg^\star\, ,A\big]_\frg^\star - (-1)^{d}\,\big[ \dd \lambda ,\, \ast_\hodge  [A,A]_\frg^\star \, \big]_\frg^\star \\ 
	&\quad \, - \big[ \lambda , [ \ast_\hodge [A,A]_\frg^\star\, ,A]_\frg^\star\, \big]_\frg^\star \ ,
	\end{split}
\end{align} 
where in the last equality we used the linearity of the Hodge duality operator and the braided Jacobi identity to combine the terms cubic in $A$ into the sought covariant form. The breaking of covariance arises from the terms involving $\dd \lambda$. 

Similarly, one may check that
\begin{align}
\begin{split}
	\delta_{\lambda}^{\br} \big[A,\, \ast_\hodge [A,A]_\frg^\star\, \big]_\frg^\star
	&=2\, \big[ \sfR_k(A),\, \ast_\hodge [\sfR^k(\dd\lambda),A]_\frg^\star \, \big]_\frg^\star +  \big[ \dd \lambda ,\, \ast_\hodge  [A,A]_\frg^\star \, \big]_\frg^\star \\ 
	&\quad \, - \big[ \lambda , [ A,\, \ast_\hodge [A,A]_\frg^\star\, ]_\frg^\star\, \big]_\frg^\star \ ,
\end{split}
\end{align}
and
\begin{align}
\begin{split}
	\delta_{\lambda}^{\br} [\sfR_k(A),\, \ast_\hodge  [\sfR^k(A),A]_\frg^\star\,]_\frg^\star
	&=  \big[ \dd \lambda ,\, \ast_\hodge  [A,A]_\frg^\star \, \big]_\frg^\star + \big[ \sfR_k(A),\, \ast_\hodge [\sfR^k(\dd\lambda),A]_\frg^\star \, \big]_\frg^\star \\ & \quad \, - (-1)^{d}\, \big[ \ast_\hodge [\dd \lambda,A]_\frg^\star\, ,A\big]_\frg^\star
	-  \big[ \lambda , [ \sfR_k(A),\, \ast_\hodge [\sfR^k(A),A]_\frg^\star\, ]_\frg^\star\, \big]_\frg^\star \ .
	\end{split}
\end{align}
Taking the weighted sum of these three braided variations specified by \eqref{eq:braidedYMeom}, we see that the non-covariant terms cancel. Hence the combination in the field equations \eqref{eq:braidedYMeom} is indeed braided gauge covariant. 

Let us now turn to the general form \eqref{eq:braidedYMactionnoncycl} of the action functional. The second, third and fourth terms form a covariant combination by the same calculation for the braided $BF$ functional from Section~\ref{sec:braidedBF}. The last two terms also give a surprising covariant combination. The potential non-covariance arises from the terms proportional to $\dd \lambda$ in the gauge transformation, and dropping the covariant components which vanish by braided ${\rm ad}(\frg)$-invariance of $\Tr_\frg$, we find
	\begin{align}
	\begin{split}
		&\delta_{\lambda}^{\br} \, \Tr_\frg \big( [A,\sfR_k(A)]_\frg^\star \wedge_\star \, \sfR^k( \ast_\hodge  [A,A]_\frg^\star) \big)  \\[4pt]
		& \qquad \quad =  \Tr_\frg \big( [\dd \lambda,\sfR_k(A)]_\frg^\star \wedge_\star \, \sfR^k( \ast_\hodge  [A,A]_\frg^\star) \big) + \Tr_\frg \big( [\sfR_l(A),\sfR_k(A)]_\frg^\star \wedge_\star \, \sfR^k (\ast_\hodge  [\sfR^l(\dd\lambda),A]_\frg^\star) \big) \\
		& \qquad \quad \quad \,  + \Tr_\frg \big( [\sfR_l(A),\sfR_k(A)]_\frg^\star \wedge_\star \, \sfR^k (\ast_\hodge  [\sfR_m(A),\sfR^m\,\sfR^l(\dd\lambda)]_\frg^\star) \big)  \\ 
		&\qquad \quad \quad \, + \Tr_\frg \big( [\sfR_l(A),\sfR_k\,\sfR^m \,\sfR^l(\dd \lambda)]_\frg^\star \wedge_\star \, \sfR^k\, \sfR_m (\ast_\hodge  [A,A]_\frg^\star) \big)  \\[4pt]
		&\qquad \quad = \Tr_\frg \big(\sfR_k( \ast_\hodge  [A,A]_\frg^\star)  \wedge_\star \, [\sfR^k(\dd \lambda),A]_\frg^\star \, \big) \\ &\qquad \quad \quad \, + \Tr_\frg \big(\sfR_m\, \sfR^k( \ast_\hodge  [\sfR^l(\dd\lambda),A]_\frg^\star) \wedge_\star \, \sfR^m([\sfR_l(A),\sfR_k(A)]_\frg^\star) \big)  \\ 
		&\qquad \quad \quad \,  + \Tr_\frg \big( [\sfR_l(A),\sfR_k(A)]_\frg^\star \wedge_\star \, \sfR^k (\ast_\hodge  [\sfR^l(\dd\lambda), A]_\frg^\star) \big) + \Tr_\frg \big( [\dd \lambda,A]_\frg^\star \wedge_\star \, \ast_\hodge  [A,A]_\frg^\star \, \big)
		\\[4pt] &\qquad \quad = \Tr_\frg \big(\sfR_k( \ast_\hodge  [A,A]_\frg^\star)  \wedge_\star \, [\sfR^k(\dd \lambda),A]_\frg^\star \, \big)+ \Tr_\frg \big( [\dd \lambda,A]_\frg^\star \wedge_\star \, \ast_\hodge  [A,A]_\frg^\star \, \big)  \\
		& \qquad \quad \quad \, + 2\,\Tr_\frg \big(\sfR_l( \ast_\hodge  [\sfR^k(\dd\lambda),A]_\frg^\star) \wedge_\star \, [\sfR^l\,\sfR_k(A),A]_\frg^\star \, \big)  \\[4pt]
		&\qquad \quad = \Tr_\frg \big(\sfR_k (\ast_\hodge  [A,A]_\frg^\star)  \wedge_\star \, [\sfR^k(\dd \lambda),A]_\frg^\star \, \big)+ \Tr_\frg \big( [\dd \lambda,A]_\frg^\star \wedge_\star \, \ast_\hodge  [A,A]_\frg^\star \, \big)  \\
		&\qquad \quad \quad \, + 2\, \Tr_\frg \big( \ast_\hodge  [\dd\lambda,\sfR_k (A)]_\frg^\star \wedge_\star \, [\sfR^k(A),A]_\frg^\star \, \big)  \ ,
		\end{split}
	\end{align}
	where we repeatedly used braided symmetry of $\Tr_\frg(-\wedge_\star-)$ and of $[-,-]_\frg^\star$ on 1-forms along with the $\CR$-matrix identities \eqref{eq:Rmatrixidsw}. In a completely analogous way, one may check that
	\begin{align}
	\begin{split}
		\delta_{\lambda}^{\br} \, \Tr_\frg &\big( [A,\sfR_k(A)]_\frg^\star \wedge_\star \,  \ast_\hodge  [\sfR^k(A),A]_\frg^\star \, \big)  \\[4pt]
		&=2\, \Tr_\frg \big( [\dd\lambda,\sfR_k (A)]_\frg^\star \wedge_\star \,\ast_\hodge   [\sfR^k(A),A]_\frg^\star \, \big)+ \Tr_\frg \big( [\dd \lambda,A]_\frg^\star \wedge_\star \, \ast_\hodge  [A,A]_\frg^\star \, \big) \\
		&\quad \, +    \Tr_\frg \big(\sfR_k ( [A,A]_\frg^\star)  \wedge_\star \, \ast_\hodge [\sfR^k(\dd \lambda),A]_\frg^\star \, \big) \ .
		\end{split}
	\end{align}
	Thus the non-covariant terms cancel by strict symmetry of the Hodge duality operator, ensuring gauge invariance of the action functional as expected from the general theory of Section~\ref{sec:braidedLinfty}.

\subsection{Global braided symmetry}
\label{sec:globalsymYM}

Similarly to the example of noncommutative scalar theory from Section~\ref{sec:globalsymscalar}, braided Yang--Mills theory is invariant under the finite-dimensional braided Lie algebra of twisted Killing symmetries $(\frk,[-,-]_\frk^\star)$, acting via the braided Lie derivative. These are incorporated into the braided $L_\infty$-algebra framework of Section~\ref{sec:braidedYMtheory} by extending the degree~$0$ and~$3$ homogeneous subspaces of the underlying cochain complex of Section~\ref{sec:Yang-Millstheory} to get\footnote{The caveat discussed in Footnote~\ref{fn:globalsym} should also be borne in mind here.}
\begin{align}
V^0 = \frk\times\Omega^0(M,\frg) \ , \quad V^1 = \Omega^1(M,\frg) \ , \quad V^2 = \Omega^{d-1}(M,\frg) \ , \quad V^3 = \frk^*\times\Omega^d(M,\frg) \ ,
\end{align}
with the differential extended as
\begin{align}
\ell_1^\star(\xi,\lambda) = \dd\lambda \ , \quad \ell_1^\star(A) = \dd\ast_\hodge \dd A \qquad \mbox{and} \qquad \ell_1^\star(\CA) = (0,\dd \CA) \ ,
\end{align}
and by modifying the $2$-brackets of \eqref{eq:braidedYM2brackets} as
\begin{align}
\begin{split}
\ell_2^\star\big((\xi_1,\lambda_1)\,,\,(\xi_2,\lambda_2)\big) = \big(-[\xi_1,\xi_2]_\frk^\star & \ , \, -\LL_{\xi_1}^\star\lambda_2 + \LL^\star_{\sfR_k(\xi_2)}\sfR^k(\lambda_1)-[\lambda_1,\lambda_2]_\frg^\star\big) \ , \\[4pt]
\ell_2^\star\big((\xi,\lambda)\,,\,A\big) = \LL_\xi^\star A - [\lambda,A]_\frg^\star \qquad & , \qquad \ell_2^\star\big((\xi,\lambda)\,,\,\CA\big) = \LL_\xi^\star\CA - [\lambda,\CA]_\frg^\star \ , \\[4pt]
\ell_2^\star\big((\xi,\lambda)\,,\,(\xi^\dual,\varLambda)\big) = \big(\xi\star\xi^\dual \,,\,\LL_\xi^\star\varLambda -[&\lambda,\varLambda]_\frg^\star \big) \qquad , \qquad \ell_2^\star(A,\CA) = \big(0,-[A,\CA]_\frg^\star\big) \ ,
\end{split}
\end{align}
while the brackets $\ell_2^\star(A_1,A_2)$ and $\ell_3^\star(A_1,A_2,A_3)$ are unchanged.
The cyclic structure \eqref{eq:braidedYMpairing} is naturally extended as
\begin{align}
\langle A ,\CA \rangle_\star = \int_M\,
	\Tr_\frg(A\wedge_\star \CA) \qquad \mbox{and} \qquad
\langle (\xi,\lambda)\,,\,(\xi^\dual,\varLambda)\rangle_\star = \langle \xi,\xi^\dual \rangle_\star + \int_M \, \Tr_\frg(\lambda\wedge_\star\varLambda) \ .
\end{align}

The extended braided transformations of the gauge fields
\begin{align}
\delta^{\br}_{(\xi,\lambda)}A = \ell_1^\star(\xi,\lambda) + \ell_2^\star\big((\xi,\lambda)\,,\,A\big) = \LL_\xi^\star A + \dd\lambda - [\lambda,A]_\frg^\star
\end{align}
satisfy the gauge algebra
\begin{align}
\big[\delta_{(\xi_1,\lambda_1)}^{\br},\delta_{(\xi_2,\lambda_2)}^{\br}\big]_\circ^\star = \delta_{-\ell_2^\star((\xi_1,\lambda_1),(\xi_2,\lambda_2))}^{\br} = \delta^{\br}_{([\xi_1,\xi_2]_\frk^\star,\LL_{\xi_1}^\star\lambda_2 - \LL^\star_{\sfR_k(\xi_2)}\sfR^k(\lambda_1)+[\lambda_1,\lambda_2]_\frg^\star)} \ ,
\end{align}
with bracket of the braided semi-direct product Lie algebra
\begin{align}
\frk[[\hbar]]\ltimes_\star \Omega^0(M,\frg)[[\hbar]] \ .
\end{align}
The braided noncommutative Yang--Mills equations \eqref{eq:braidedYMeom} are correspondingly covariant,
\begin{align}
\delta^{\br}_{(\xi,\lambda)}\frF_{\tA}^\star = \ell_2^\star\big((\xi,\lambda)\,,\,\frF_{\tA}^\star\big) = \LL_\xi^\star \frF_{\tA}^\star - [\lambda,\frF_{\tA}^\star]_\frg^\star \ ,
\end{align}
and likewise the action functional \eqref{eq:braidedYMactionnoncycl} is invariant, $\delta_{(\xi,\lambda)}^{\br} S_\star(A)=0$.
These extended properties also follow directly from the closure and derivation properties of the braided Lie derivative $\LL_\xi^\star$.

\end{document}